\documentclass{sig-alternate-10}

\usepackage{amsmath}
\usepackage{amsfonts}
\usepackage{epsfig}
\usepackage{graphics}
\usepackage{amssymb}
\usepackage{color}
\usepackage{comment}
\usepackage{paralist}
\usepackage{multirow}
\usepackage{subfigure}
\usepackage{eepic}
\usepackage{stmaryrd}
\usepackage{textcomp}
\usepackage{times}
\usepackage{balance}
\usepackage{ulem}\normalem

\begin{document}



\newcommand{\A}{\mathcal{A}} \newcommand{\B}{\mathcal{B}}
\newcommand{\C}{\mathcal{C}} \newcommand{\D}{\mathcal{D}}
\newcommand{\E}{\mathcal{E}} \newcommand{\F}{\mathcal{F}}
\newcommand{\G}{\mathcal{G}} \renewcommand{\H}{\mathcal{H}}
\newcommand{\I}{\mathcal{I}} \newcommand{\J}{\mathcal{J}}
\newcommand{\K}{\mathcal{K}} \renewcommand{\L}{\mathcal{L}}
\newcommand{\M}{\mathcal{M}} \newcommand{\N}{\mathcal{N}}
\renewcommand{\O}{\mathcal{O}} \renewcommand{\P}{\mathcal{P}}
\newcommand{\Q}{\mathcal{Q}} \newcommand{\R}{\mathcal{R}}
\renewcommand{\S}{\mathcal{S}} \newcommand{\T}{\mathcal{T}}
\newcommand{\U}{\mathcal{U}} \newcommand{\V}{\mathcal{V}}
\newcommand{\W}{\mathcal{W}} \newcommand{\X}{\mathcal{X}}
\newcommand{\Y}{\mathcal{Y}} \newcommand{\Z}{\mathcal{Z}}


\newcommand{\setone}[2][1]{\set{#1\cld #2}}
\newcommand{\eset}{\emptyset}
\newcommand{\ol}[1]{\overline{#1}}                
\newcommand{\ul}[1]{\underline{#1}}               
\newcommand{\uls}[1]{\underline{\raisebox{0pt}[0pt][0.45ex]{}#1}}

\newcommand{\ra}{\rightarrow}
\newcommand{\Ra}{\Rightarrow}
\newcommand{\la}{\leftarrow}
\newcommand{\La}{\Leftarrow}
\newcommand{\lra}{\leftrightarrow}
\newcommand{\Lra}{\Leftrightarrow}
\newcommand{\lora}{\longrightarrow}
\newcommand{\Lora}{\Longrightarrow}
\newcommand{\lola}{\longleftarrow}
\newcommand{\Lola}{\Longleftarrow}
\newcommand{\lolra}{\longleftrightarrow}
\newcommand{\Lolra}{\Longleftrightarrow}
\newcommand{\ua}{\uparrow}
\newcommand{\Ua}{\Uparrow}
\newcommand{\da}{\downarrow}
\newcommand{\Da}{\Downarrow}
\newcommand{\uda}{\updownarrow}
\newcommand{\Uda}{\Updownarrow}


\newcommand{\incl}{\subseteq}
\newcommand{\imp}{\rightarrow}
\newcommand{\deq}{\doteq}
\newcommand{\dleq}{\dot{\leq}}                   


\newcommand{\per}{\mbox{\bf .}}                  

\newcommand{\cld}{,\ldots,}                      
\newcommand{\ld}[1]{#1 \ldots #1}                 
\newcommand{\cd}[1]{#1 \cdots #1}                 
\newcommand{\lds}[1]{\, #1 \; \ldots \; #1 \,}    
\newcommand{\cds}[1]{\, #1 \; \cdots \; #1 \,}    

\newcommand{\dd}[2]{#1_1,\ldots,#1_{#2}}             
\newcommand{\ddd}[3]{#1_{#2_1},\ldots,#1_{#2_{#3}}}  
\newcommand{\dddd}[3]{#1_{11}\cld #1_{1#3_{1}}\cld #1_{#21}\cld #1_{#2#3_{#2}}}

\newcommand{\ldop}[3]{#1_1 \ld{#3} #1_{#2}}   
\newcommand{\cdop}[3]{#1_1 \cd{#3} #1_{#2}}   
\newcommand{\ldsop}[3]{#1_1 \lds{#3} #1_{#2}} 
\newcommand{\cdsop}[3]{#1_1 \cds{#3} #1_{#2}} 


\newcommand{\quotes}[1]{{\lq\lq #1\rq\rq}}
\newcommand{\set}[1]{\{#1\}}                      
\newcommand{\Set}[1]{\left\{#1\right\}}
\newcommand{\bigset}[1]{\Bigl\{#1\Bigr\}}
\newcommand{\bigmid}{\Big|}
\newcommand{\card}[1]{|{#1}|}                     
\newcommand{\Card}[1]{\left| #1\right|}
\newcommand{\cards}[1]{\sharp #1}
\newcommand{\sub}[1]{[#1]}
\newcommand{\tup}[1]{\langle #1\rangle}            
\newcommand{\Tup}[1]{\left\langle #1\right\rangle}


\def\qed{\hfill{\qedboxempty}      
  \ifdim\lastskip<\medskipamount \removelastskip\penalty55\medskip\fi}

\def\qedboxempty{\vbox{\hrule\hbox{\vrule\kern3pt
                 \vbox{\kern3pt\kern3pt}\kern3pt\vrule}\hrule}}

\def\qedfull{\hfill{\qedboxfull}   
  \ifdim\lastskip<\medskipamount \removelastskip\penalty55\medskip\fi}

\def\qedboxfull{\vrule height 4pt width 4pt depth 0pt}

\newcommand{\markfull}{\qedboxfull}
\newcommand{\markempty}{\qed}





\newcommand{\assign}{:=}

\newcommand{\varpos}[2]{[#1]_{#2}}


\newcommand{\SA}{{\sf SemAc}}
\newcommand{\FSA}{{\sf FinSemAc}}
\newcommand{\ESA}{{\sf SemAc}$^{=}$}
\newcommand{\FESA}{{\sf FinSemAc}$^{=}$}
\newcommand{\saeval}{{\sf SemAcEval}}
\newcommand{\Cont}{{\sf Cont}}
\newcommand{\RCont}{{\sf RestCont}}
\newcommand{\ABCont}{{\sf AcBoolCont}}
\newcommand{\ac}{\mathbb{AE}}
\newcommand{\full}{\mathbb{F}}
\newcommand{\guarded}{\mathbb{G}}
\newcommand{\linear}{\mathbb{L}}
\newcommand{\id}{\mathbb{ID}}
\newcommand{\nr}{\mathbb{NR}}
\newcommand{\sticky}{\mathbb{S}}
\newcommand{\class}[1]{\mathbb{#1}}

\newcommand{\blank}{\sqcup}

\newcommand{\child}{\mathit{child}}
\newcommand{\pchild}{\mathit{parentorchild}}
\newcommand{\descend}{\mathit{descendant}}
\newcommand{\nroot}{\mathit{root}}
\newcommand{\nct}{\mathit{CT}}
\newcommand{\suc}{\mathit{succ}}
\newcommand{\same}{\mathit{same}}
\newcommand{\samecell}{\mathit{sameCell}}
\newcommand{\sameleaf}{\mathit{sameLeaf}}
\newcommand{\nskel}{s}
\newcommand{\ncell}{c}

\newcommand{\mi}[1]{\mathit{#1}}

\newcommand{\con}{\mathit{conf}}
\newcommand{\beg}{\mathit{begin}}

\newcommand{\enc}[1]{\mathit{enc}(#1)}

\newcommand{\troot}[1]{\mathit{root}(#1)}
\newcommand{\tnodes}[1]{\mathit{nodes}(#1)}
\newcommand{\critical}[1]{\mathit{critical}(#1)}

\newcommand{\fr}[1]{\mathit{frontier}(#1)}
\newcommand{\nonfr}[1]{\mathit{nonfrontier}(#1)}

\newcommand{\tw}[1]{\mathit{tw}(#1)}

\newcommand{\norm}[1]{\mathsf{N}(#1)}

\newcommand{\ovtabentrya}[5]{
  \multirow{3}{*}{#1}&
  \footnotesize {#2} &
  \footnotesize {#3} &
  \footnotesize {#4} &
  \footnotesize {#5}
}

\newcommand{\ovtabentryb}[4]{
  &
  {#1}&
  {#2}&
  {#3}&
  {#4}
}

\newcommand{\ovtabentryc}[4]{
  &
  \footnotesize {#1} &
  \footnotesize {#2} &
  \footnotesize {#3} &
  \footnotesize {#4}
}

\newcommand{\cmpitem}{\noindent\ensuremath{-}\xspace}

\def\naf{\ensuremath{\raise.17ex\hbox{\ensuremath{\scriptstyle\mathtt{\sim}}}}\xspace}
\def\nafd{\ensuremath{\raise.17ex\hbox{\ensuremath{\scriptstyle\mathtt{\sim}.}}}\xspace}

\newcommand{\bas}[3]{\mathsf{B}(#1,#2,#3)}
\newcommand{\base}[2]{\mathsf{B}(#1,#2)}
\newcommand{\skolem}[2]{\mathsf{S}(#1,#2)}
\newcommand{\unify}[2]{\mathsf{U}(#1,#2)}
\newcommand{\match}[2]{\mathit{heads}(#1,#2)}
\newcommand{\cunify}[3]{\mathsf{C}_{#1}(#2,#3)}
\newcommand{\cunifyy}[2]{\mathsf{C}_{#1}(#2)}
\newcommand{\lin}[2]{\mathsf{b}(#1,#2)}
\newcommand{\mgu}[2]{\mathsf{MGU}(#1,#2)}
\newcommand{\f}[1]{\mathsf{F}(#1)}
\newcommand{\ff}[1]{\textsf{ff}(#1)}

\newcommand{\nav}[1]{\mathit{nav}(#1)}

\newcommand{\DB}{\mathit{DB}} \newcommand{\wrt}[0]{w.r.t.\ }

\newcommand{\ifdirection}{``$\Leftarrow$'' {}}
\newcommand{\onlyifdirection}{``$\Rightarrow$'' {}}

\renewcommand{\emptyset}{\varnothing}

\newcommand{\qans}[3]{\mathsf{QAns}(#1,#2,#3)} 

\newcommand{\relevent}[2]{\substack{#1,#2 \\ \twoheadrightarrow}} 


\newcommand{\rel}[1]{\mathsf{#1}}
\newcommand{\attr}[1]{\mathit{#1}}
\newcommand{\const}[1]{\mathit{#1}}
\newcommand{\vett}[1]{\vec{#1}}

\newcommand{\ext}[2]{#1^{#2}}


\newcommand{\pred}[1]{\mathit{pred}(#1)}

\newcommand{\atom}[1]{\underline{#1}}
\newcommand{\tuple}[1]{\mathbf{#1}}

\newcommand{\parent}[1]{\mathit{parent}(#1)}
\newcommand{\dept}[1]{\mathit{depth}(#1)}

\newcommand{\mar}[1]{\hat{#1}}


\newcommand{\dom}{\mathbf{C}}
\newcommand{\freshdom}{\mathbf{N}}
\newcommand{\variables}{\mathbf{V}}
\newcommand{\adom}[1]{\mathit{dom}(#1)}
\newcommand{\var}[1]{\mathit{var}(#1)}
\newcommand{\vari}[2]{\mathit{var}_{#1}(#2)}

\newcommand{\aff}[1]{\mathit{affected}(#1)}
\newcommand{\nonaff}[1]{\mathit{nonaffected}(#1)}
\newcommand{\sch}[1]{\mathit{sch}(#1)}


\newcommand{\entities}{\mathit{Ent}}
\newcommand{\relationships}{\mathit{Rel}}
\newcommand{\attributes}{\mathit{Att}}
\newcommand{\symbols}{\mathit{Sym}}


\newcommand{\dep}{\Sigma}
\newcommand{\tdep}{\Sigma_T}
\newcommand{\edep}{\Sigma_E}
\newcommand{\fdep}{\Sigma_F}
\newcommand{\kdep}{\Sigma_K}
\newcommand{\ndep}{\Sigma_\bot}

\newcommand{\key}[1]{\mathit{key}(#1)}
\newcommand{\isa}[1]{\mathit{ISA}}
\newcommand{\no}{\textrm{not}}


\newcommand{\Var}[1]{\mathit{Var}(#1)}
\newcommand{\head}[1]{\mathit{H}(#1)}
\newcommand{\heads}[1]{\mathsf{heads}(#1)}
\newcommand{\body}[1]{\mathit{B}(#1)}
\newcommand{\pbody}[1]{\mathit{B}^{+}(#1)}
\newcommand{\nbody}[1]{\mathit{B}^{-}(#1)}
\newcommand{\arity}[1]{\mathit{arity}(#1)}
\newcommand{\conj}[1]{\mathit{conj}(#1)}

\newcommand{\dvar}[1]{\mathit{dvar}(#1)}
\newcommand{\rset}[2]{\mathit{rset}(#1,#2)}

\newcommand{\cover}[1]{\mathit{cover}(#1)}

\newcommand{\ans}[3]{\mathit{ans}(#1,#2,#3)}
\newcommand{\pos}[2]{\mathit{pos}(#1,#2)}


\newcommand{\ins}[1]{\mathbf{#1}}
\newcommand{\insA}{\ins{A}}
\newcommand{\insB}{\ins{B}}
\newcommand{\insC}{\ins{C}}
\newcommand{\insD}{\ins{D}}
\newcommand{\insX}{\ins{X}}
\newcommand{\insY}{\ins{Y}}
\newcommand{\insZ}{\ins{Z}}
\newcommand{\insK}{\ins{K}}
\newcommand{\insU}{\ins{U}}
\newcommand{\insT}{\ins{T}}
\newcommand{\insO}{\ins{O}}
\newcommand{\insW}{\ins{W}}
\newcommand{\insN}{\ins{N}}
\newcommand{\insV}{\ins{V}}


\newcommand{\chase}[2]{\mathit{chase}(#1,#2)}
\newcommand{\ochase}[2]{\mathit{Ochase}(#1,#2)}
\newcommand{\rchase}[2]{\mathit{Rchase}(#1,#2)}
\newcommand{\instant}{\mathit{inst}}

\newcommand{\pchase}[3]{\mathit{chase}^{#1}(#2,#3)}
\newcommand{\apchase}[3]{\mathit{chase}^{[#1]}(#2,#3)}
\newcommand{\aprchase}[3]{\mathit{Rchase}^{[#1]}(#2,#3)}
\newcommand{\level}[1]{\mathit{level}(#1)}
\newcommand{\freeze}[1]{\mathit{fr}(#1)}
\newcommand{\mods}[2]{\mathit{mods}(#1,#2)}
\newcommand{\subs}[2]{\gamma_{#1,#2}}


\renewcommand{\paragraph}[1]{\textbf{#1}}
\newenvironment{proofsk}{\textsc{Proof (sketch).}}{\hfill$\Box$\newline}
\newenvironment{pf}{\textsc{Proof.}}{\hfill$\Box$\newline}
\newenvironment{proofrsk}{\textsc{Proof (rough sketch).}}{$\Box$\newline}
\newenvironment{proofidea}{\textsc{Proof idea.}}{$\Box$\newline}


\newcommand{\dg}[2]{\mathit{DG}(#1,#2)}
\newcommand{\rank}[1]{\mathit{rank}(#1)}

\newcommand{\EXP}{{\scshape ExpTime}}
\newcommand{\PTIME}{\textsc{P}}
\newcommand{\NP}{{\scshape NP}}
\newcommand{\co}{{co}}
\newcommand{\LOGSPACE}{\textsc{LogSpace}}
\newcommand{\AC}[1]{\textsc{$\mbox{AC}_{#1}$}}
\newcommand{\NC}[1]{\textsc{$\mbox{NC}_{#1}$}}
\newcommand{\LOGCFL}{\textsc{LogCFL}}
\newcommand{\SIGMA}[2]{$\Sigma_{\textrm{#2}}^{\textrm{#1}}$}
\newcommand{\PI}[2]{$\Pi_{\textrm{#2}}^{\textrm{#1}}$}
\newcommand{\PSPACE}{\textsc{PSpace}}
\newcommand{\LINSPACE}{\textsc{LinSpace}}
\newcommand{\EXPSPACE}{\textsc{ExpSpace}}
\newcommand{\NEXP}{\textsc{NExpTime}}

\newcommand{\datalogpm}{Datalog$^\pm$}

\newcommand{\dllite}[2]{\textit{DL-Lite$^{\cal #1}_{#2}$}}

\newcommand{\dinclusion}{DIDs}
\newcommand{\certain}{certain satisfaction}
\newcommand{\twoexptime}{2\EXP}
\newcommand{\exptime}{\EXP}
\newcommand{\exspace}{\EXPSPACE}
\newcommand{\Or}{\mathrm{Or}}
\newcommand{\False}{\mathrm{False}}
\newcommand{\True}{\mathrm{True}}
\newcommand{\mis}{monadic disjunctive inclusion dependencies}
\newcommand{\dfd}{disjunctive full dependencies}

\newcommand{\Skolem}[1]{{\mathcal{S}(#1)}}

\newcommand{\TODO}[1]{\textbf{[TODO:} #1\textbf{]}}
\newcommand{\malvi}[2]{{\color{red}\sout{#1} #2}}


\newcommand{\OMIT}[1]{}

\newtheorem{theorem}{Theorem}
\newtheorem{corollary}[theorem]{Corollary}
\newtheorem{proposition}[theorem]{Proposition}
\newtheorem{lemma}[theorem]{Lemma}
\newtheorem{claim}[theorem]{Claim}
\newtheorem{fact}[theorem]{Fact}
\newtheorem{apptheorem}{Theorem}[section]
\newtheorem{appcorollary}[apptheorem]{Corollary}
\newtheorem{appproposition}[apptheorem]{Proposition}
\newtheorem{applemma}[apptheorem]{Lemma}
\newtheorem{appclaim}[apptheorem]{Claim}
\newtheorem{appfact}[apptheorem]{Fact}

\newdef{definition}{Definition}
\newdef{example}{Example}
\newdef{appdefinition}{Definition}
\newdef{appexample}{Example}

\title{Semantic Acyclicity Under Constraints}

\numberofauthors{3}

\author{
\alignauthor
Pablo Barcel\'{o}\\
       \affaddr{{\small Center for Semantic Web Research \&}}\\
       \affaddr{{\small DCC, University of Chile}}\\
       \email{{\small pbarcelo@dcc.uchile.cl}}
\alignauthor Georg Gottlob\\
       \affaddr{{\small Dept. of Computer Science}}\\
       \affaddr{{\small University of Oxford}}\\
       \email{{\small georg.gottlob@cs.ox.ac.uk}}
\alignauthor Andreas Pieris\\
       \affaddr{{\small Inst. of Information Systems}}\\
       \affaddr{{\small TU Wien}}\\
       \email{{\small pieris@dbai.tuwien.ac.at}}
}

\maketitle

\sloppy

\fontsize{10pt}{10.2pt}
\selectfont

\begin{abstract}
  A conjunctive query (CQ) is semantically acyclic if it is equivalent
  to an acyclic one. Semantic acyclicity has been studied in the
  constraint-free case, and deciding whether a query enjoys
  this property is \NP-complete. However, in case the
  database is subject to constraints such as tuple-generating
  dependencies (tgds) that can express, e.g., inclusion
  dependencies, or equality-generating dependencies (egds) that
  capture, e.g., functional dependencies, a CQ may turn out
  to be semantically acyclic under the constraints while not
  semantically acyclic in general. This opens avenues to new
  query optimization techniques. In this paper we initiate and
  develop the theory of semantic acyclicity under constraints. More
  precisely, we study the following natural problem: Given a CQ
  and a set of constraints, is the query
  semantically acyclic under the constraints, or, in other
  words, is the query equivalent to an acyclic one over all those
  databases that satisfy the set of constraints?

  We show that, contrary to what one might expect, decidability of CQ containment is a necessary but not sufficient condition for the decidability of semantic acyclicity. In particular, we show that semantic acyclicity is undecidable in the presence of full tgds (i.e., Datalog rules). In view of this fact, we focus on the main classes of tgds for which CQ containment is decidable, and do not capture the class of full tgds, namely guarded, non-recursive and sticky tgds. For these classes we show that semantic acyclicity is decidable, and its complexity coincides with the complexity of CQ containment. In the case of egds, we show that if we focus on keys over unary and binary predicates, then semantic acyclicity is decidable (\NP-complete). We finally consider the problem of evaluating a semantically acyclic query over a database that satisfies a set of constraints. For guarded tgds and functional dependencies the evaluation problem is tractable.
\end{abstract}

\section{Introduction}\label{sec:introduction}

Query optimization is a fundamental database
task that amounts to transforming a query into one that is arguably
more efficient to evaluate. The database theory community has
developed several principled methods
for optimization of conjunctive queries (CQs), many of which are based on {\em
  static-analysis} tasks such as containment \cite{AbHV95}.
In a nutshell, such methods compute a {\em minimal} equivalent version
of a CQ, where minimality refers to number of
atoms. As argued by Abiteboul, Hull, and Vianu \cite{AbHV95}, this
provides a theoretical notion of ``true optimality'' for
the reformulation of a CQ, as opposed to practical considerations based on
heuristics.
For each CQ $q$ the minimal equivalent CQ
is its {\em core} $q'$
\cite{HN}. Although the static analysis tasks that support CQ
minimization are NP-complete
\cite{ChMe77}, this is not a major problem for
real-life applications, as the
input (the CQ) is small.

It is known, on the other hand, that semantic information about the
data, in the form of integrity constraints, alleviates query
optimization by reducing the space of possible reformulations.
In the previous analysis, however,
constraints play no role, as
CQ equivalence is defined over {\em all} databases. Adding
constraints yields a refined notion of CQ equivalence, which
holds over those databases that satisfy a given set of constraints
only. But finding
a minimal equivalent CQ in this context is notoriously
more difficult than before. This is because
basic static analysis tasks such as containment become undecidable
when considered in full generality. This motivated a long research
program for finding larger ``islands of decidability'' of such
containment problem, based on syntactical restrictions on
constraints
\cite{BMRT11,CaGK13,CaGP12,CGL08,JoKl84,KrRu11}.

An important shortcoming of the previous approach, however, is that there is no theoretical guarantee that the minimized version of a CQ is in fact
easier to evaluate (recall that, in general, CQ evaluation is
\NP-complete \cite{ChMe77}).
We know, on the other hand, quite a bit about
classes of CQs that can be evaluated efficiently. It is thus a natural
problem to ask whether constraints can be used to reformulate a
CQ as one in such
tractable classes, and if so, what is the cost of
computing such reformulation. Following Abiteboul et al., this would
provide us with a theoretical guarantee of ``true efficiency'' for those
reformulations. We focus on one of the oldest
and most studied tractability conditions for CQs; namely, {\em
  acyclicity}. It is known that acyclic CQs can be evaluated in linear time
  \cite{Yan81}.

More formally, let us write $q \equiv_\Sigma q'$ whenever CQs $q$ and $q'$ are
equivalent over all databases that satisfy $\Sigma$.
In this work we study the following problem:

\begin{center}
\fbox{\begin{tabular}{ll}
{\small PROBLEM} : & {\sc Semantic Acyclicity} \\
\hline
\\{\small INPUT} : & A CQ $q$ and a finite set $\Sigma$ of constraints.
\\
{\small QUESTION} : &  Is there an acyclic CQ $q'$ s.t.
$q \equiv_\Sigma q'$?
\end{tabular}}
\end{center}

We study this problem for the two
most important classes of database constraints; namely:

\begin{enumerate}

\item {\em Tuple-generating dependencies} (tgds), i.e., expressions of
  the form $\forall \bar x \forall \bar y (\phi(\bar x,\bar y)
  \rightarrow \exists \bar z \psi(\bar x,\bar z))$, where $\phi$ and
  $\psi$ are conjuntions of atoms. Tgds subsume the important class of
  referential integrity constraints (or inclusion dependencies).

\item {\em Equality-generating dependencies} (egds), i.e., expressions of
  the form $\forall \bar x (\phi(\bar x) \rightarrow y = z)$, where
  $\phi$ is a conjunction of atoms and $y,z$ are variables in $\bar
  x$.
Egds subsume
keys and functional dependencies (FDs).

\end{enumerate}
A useful aspect of tgds and egds is that containment under them
can be
studied in terms of the
 {\em chase procedure} \cite{MaMS79}.

Coming back to semantic acyclicity, the main problem we study is, of course, decidability. Since basic reasoning with
tgds and egds is, in general, undecidable, we cannot expect semantic acyclicity to be
decidable for arbitrary such constraints.
Thus, we concentrate on the following question:

\medskip

\noindent
{\em \underline{Decidability:}}
For which classes of tgds and egds is the problem of semantic acyclicity decidable? In
  such cases, what is the computational cost of the problem?

\medskip

Since semantic acyclicity is defined in terms of CQ equivalence under
constraints, and
the latter has received a lot of attention, it is relevant also to study
the following question:

\medskip

\noindent
{\em \underline{Relationship to CQ equivalence:}} What is the relationship between CQ equivalence and semantic
acyclicity under constraints?
Is the latter decidable for each class of tgds and egds for which the former
is decidable?

\medskip

Notice that if this was the case, one could transfer the mature theory of CQ equivalence under tgds and egds
to tackle
the problem of semantic acyclicity.

Finally, we want to understand to what extent semantic
acyclicity helps
CQ evaluation.
Although an acyclic reformulation of a CQ can be evaluated efficiently, computing such reformulation
might be expensive. Thus, it  is relevant to study the following question:

\medskip

\noindent
{\em \underline{Evaluation:}} What is the computational cost
of evaluating semantically acyclic CQs under constraints?

\medskip

\smallskip

\noindent
\paragraph{Semantic acyclicity in the absence of constraints.}
The semantic acyclicity problem in the absence of
dependencies (i.e., checking whether a CQ $q$ is equivalent to an
acyclic one over the set of all databases) is by now well-understood.
Regarding decidability, it is easy to prove that
a CQ $q$ is semantically acyclic iff its core $q'$ is
  acyclic. (Recall that such $q'$ is the minimal equivalent CQ to
  $q$). It follows that checking semantic acyclicity in the absence
  of constraints is \NP-complete (see, e.g., \cite{BRV13}).
Regarding evaluation, semantically acyclic CQs can be
evaluated efficiently \cite{ChenD05,DaKV02,GGM09}.

\medskip

\noindent
\paragraph{The relevance of constraints.}
In the absence of constraints a CQ $q$ is equivalent to an acyclic one
iff its core $q'$ is acyclic. Thus, the only reason why $q$ is not
acyclic in the first hand is because it has not been minimized. This
tells us that in this context semantic acyclicity is not really
different from usual minimization.  The presence of constraints, on
the other hand, yields a more interesting notion of semantic
acyclicity.
This is because constraints can be applied
on CQs to produce acyclic reformulations of them.

\noindent
\begin{example} \label{ex:intro}
This simple example helps understanding the role of tgds when
reformulating CQs as acyclic ones. Consider a database that stores
information about customers, records, and musical styles. The relation
${\tt Interest}$ contains pairs $(c,s)$ such that customer
$c$ has declared
interest in style $s$. The relation ${\tt
  Class}$ contains pairs $(r,s)$ such that record $r$ is
of style $s$. Finally, the relation ${\tt Owns}$
contains a pair $(c,r)$ when customer $c$ owns record $r$.

Consider now a CQ $q(x,y)$ defined as follows:
$$\exists z \big({\tt Interest}(x,z) \wedge {\tt Class}(y,z) \wedge {\tt
  Owns}(x,y)\big).$$
This query asks for pairs $(c,r)$ such that customer $c$ owns record $r$
and has expressed interest in at least one of the styles with which
$r$ is associated. This CQ is a core but it is not acyclic. Thus, from
our previous observations it is not equivalent to an acyclic CQ (in
the absence of constraints).

Assume now that we are told that this database contains compulsive
music collectors only. In particular, each customer owns every record that is
classified with a style in which he/she has expressed interest. This
means that the database satisfies the tgd:
$$\tau \ = \ {\tt Interest}(x,z),{\tt Class}(y,z) \, \rightarrow \, {\tt
  Owns}(x,y).$$
With this information at hand, we can easily reformulate $q(x,y)$
as the following acyclic CQ $q'(x,y)$:
$$\exists z \big({\tt Interest}(x,z) \wedge {\tt Class}(y,z)\big).$$
Notice that $q$ and $q'$ are in fact equivalent over every database
that satisfies $\tau$. \hfill\markfull
\end{example}

\noindent
\paragraph{Contributions.} We observe that semantic acyclicity under
constraints is not only more powerful, but also theoretically more
challenging than in the absence of them. We start by studying
decidability. In the process we also clarify the relationship between
CQ equivalence and semantic acyclicity.

\medskip

\noindent
{\em \underline{Results for tgds:}}
Having a decidable CQ containment problem
is a necessary condition for semantic acyclicity to be decidable under tgds.\footnote{
Modulo some mild technical
assumptions elaborated in the paper.}
Surprisingly enough, it is not a sufficient condition. This means
that, contrary to what one might expect, there are natural classes of
tgds for which CQ containment but not semantic acyclicity is
decidable. In particular, this is the case for the well-known
class of {\em full} tgds (i.e., tgds without
  existentially quantified variables in the head).
In conclusion, we cannot directly export techniques from
  CQ containment to deal with semantic acyclicity.

In view of the previous results, we concentrate on classes of tgds
  that (a) have a decidable CQ containment problem, and (b) do not
  contain the class of full tgds. These restrictions are satisfied by
  several expressive languages considered in the literature. Such languages can be
  classified into three main families depending on the techniques used
  for studying their containment problem: (i) {\em guarded} tgds
  \cite{CaGK13}, which contain inclusion and linear dependencies,
(ii) {\em non-recursive} \cite{FKMP05}, and (iii) {\em sticky}
sets of tgds \cite{CaGP12}.
Instead of studying such languages one by one, we identify two semantic
criteria that yield decidability for the semantic acyclicity problem, and then
show that each one of the languages satisfies one such criteria.

\begin{itemize}
\item The first criterion is {\em acyclicity-preserving chase}.
This is satisfied by those tgds for which the application of the chase
over an acyclic instance preserves acyclicity. Guarded tgds enjoy
this property. We establish that semantic acyclicity
under guarded tgds is decidable and has the same complexity
than its associated CQ containment problem: \twoexptime-complete, and
\NP-complete for a fixed schema.


\item The second criterion is {\em rewritability by unions of CQs (UCQs)}. Intuitively,
a class $\class{C}$ of sets of tgds has this property if the
CQ containment problem under a set in $\class{C}$ can
always be reduced to a UCQ
containment problem without constraints.
Non-recursive and sticky sets of tgds enjoy this property. In the
first case the complexity
matches that of its associated CQ containment problem: \NEXP-complete, and
\NP-complete if the schema is fixed. In the second case, we get a \NEXP\ upper bound and an \exptime\ lower bound. For a fixed schema the problem is \NP-complete.
\end{itemize}

The \NP\ bounds (under a fixed schema) can be seen as positive results:
By spending exponential time in the size of the (small)
query, we can not only minimize it using known techniques but also
find an acyclic reformulation if one exists.

\medskip

\noindent
{\em \underline{Results for egds:}} After showing that the techniques developed for tgds cannot be applied for showing the decidability of semantic acyclicity under egds, we focus on the class of keys over unary and binary predicates and we establish a positive result, namely semantic acyclicity is \NP-complete. We prove this by showing that in such context keys have acyclicity-preserving chase. Interestingly, this positive result can be extended to unary functional dependencies (over unconstrained signatures); this result has been established independently by Figueira~\cite{Figu16}. We leave open whether the problem of semantic acyclicity under arbitrary egds, or even keys over arbitrary schemas, is decidable.

\medskip

\noindent {\em \underline{Evaluation:}} For
tgds for which semantic acyclicity is decidable (guarded,
non-recursive, sticky), we can use the following algorithm to evaluate
a semantically acyclic CQ $q$ over a database $D$ that satisfies the
constraints $\Sigma$: 
\begin{enumerate}\itemsep-\parsep
\item Convert $q$ into an equivalent acyclic CQ $q'$ under $\Sigma$.

\item Evaluate $q'$ on $D$.

\item Return $q(D) = q'(D)$.  
\end{enumerate}
The running time is $O(|D| \cdot f(|q|,|\dep|))$,
where $f$ is a double-exponential function (since $q'$ can be computed
in double-exponential time for each one of the classes
mentioned above and acyclic CQs can be evaluated in linear time).
This constitutes a {\em fixed-parameter tractable algorithm} for evaluating $q$ on $D$. No such algorithm is believed to exist for
CQ evaluation~\cite{PY99}; thus, semantically acyclic CQs under
these constraints behave better than the general case in terms
of evaluation.

But in the absence of constraints one can do better: Evaluating
semantically acyclic CQs in such context is in polynomial time. It is
natural to ask if this also holds in the presence of
constraints. This is the case for guarded tgds and (arbitrary) FDs. For the other classes of constraints the problem remains to be investigated.

\medskip

\noindent
{\em \underline{Further results:}} The results mentioned above continue
to hold for a more ``liberal'' notion based on UCQs, i.e.,
checking whether a UCQ is equivalent
to an acyclic union of CQs under the decidable classes of constraints
identified above.
Moreover, in case that a CQ $q$ is not equivalent to an
acyclic CQ $q'$ under a set of constraints $\Sigma$, our proof techniques yield an {\em approximation of $q$ under $\Sigma$}~\cite{BaLR14},
that is, an acyclic CQ $q'$ that is maximally contained in $q$ under $\Sigma$. Computing and evaluating such approximation yields ``quick'' answers to $q$ when exact evaluation is infeasible.

\medskip

\noindent
\paragraph{Finite vs. infinite databases.} The results mentioned above
interpret the notion of CQ equivalence (and, thus, semantic
acyclicity) over the set of both {\em finite} and {\em infinite}
databases. The reason is the wide application of the chase we make in
our proofs, which characterizes CQ equivalence under arbitrary
databases only. This does not present a serious problem though, as
all the particular classes of tgds for which we prove decidability in the paper
(i.e., guarded, non-recursive, sticky) are {\em finitely
  controllable} \cite{BaGO14,GoMa13}.
This means that CQ equivalence under arbitrary
databases and under finite databases coincide. In conclusion,
the results we obtain
for such classes can be directly exported to the finite case.

\medskip

\noindent
\paragraph{Organization.} Preliminaries are in Section \ref{sec:prelim}.
In Section \ref{sec:tgds} we consider semantic acyclicity
under tgds. Acyclicity-preserving chase is studied in Section \ref{sec:apc},
and UCQ-rewritability in Section \ref{sec:ucq}.
Semantic acyclicity under egds is investigated in Section
\ref{sec:semac-egds}. Evaluation of semantically acyclic CQs is in Section \ref{sec:evaluation}. Finally, we present further advancements in Section \ref{sec:further-advancements} and conclusions
in Section \ref{sec:conclusions}. 
\section{Preliminaries}\label{sec:prelim}

\noindent
\paragraph{Databases and conjunctive queries.}
Let $\insC$, $\insN$ and $\insV$ be disjoint countably infinite sets of {\em
  constants}, {\em (labeled) nulls} and (regular) {\em variables}
(used in queries and dependencies), respectively, and $\sigma$
a relational schema. An {\em atom} over $\sigma$ is an
expression of the form $R(\bar v)$, where $R$ is a relation symbol in
$\sigma$ of arity $n > 0$ and $\bar v$ is an $n$-tuple over $\insC \cup \insN \cup \insV$. An {\em instance} over $\sigma$ is a (possibly infinite) set of atoms over $\sigma$ that contain constants and nulls, while a {\em database} over $\sigma$ is simply a finite instance over $\sigma$.

One of the central notions in our work is acyclicity. An instance $I$ is acyclic if it admits a {\em join tree}; i.e., if there exists a tree $T$ and a mapping $\lambda$ that associates with each node $t$ of $T$
an atom $\lambda(t)$ of $I$, such
that the following holds:
\begin{enumerate}
\item For each atom $R(\bar v)$ in $I$ there is a node $t$ in $T$ such
that $\lambda(t) = R(\bar v)$; and

\item For each null $x$ occurring in $I$ it is the
case that the set $\{t \mid x \in \lambda(t)\}$ is connected in $T$.
\end{enumerate}

A {\em conjunctive query} (CQ)
over $\sigma$ is a formula of the form:
\begin{equation}
\label{eq:cq}
q(\bar x) \ :=
\
\exists \bar y \big(R_1(\bar v_1) \wedge \dots \wedge R_m(\bar
v_m)\big),
\end{equation}
where each $R_i(\bar v_i)$ ($1 \leq i \leq m$) is an atom without nulls
over $\sigma$, each variable mentioned in the $\bar v_i$'s
appears either in $\bar x$ or $\bar y$, and $\bar x$ are the free variables of $q$.
If $\bar x$ is empty, then $q$ is a \emph{Boolean CQ}.
As usual, the evaluation of CQs is defined in terms of {\em
  homomorphisms}. Let $I$ be an instance and $q(\bar x)$ a CQ of the
form $\eqref{eq:cq}$. A homomorphism from $q$ to $I$ is a mapping $h$, which is the identity on $\insC$, from
the variables and constants in $q$ to
the set of constants and nulls $\insC \cup \insN$ such that $R_i(h(\bar v_i)) \in I$,\footnote{As usual, we write
$h(v_1,\dots,v_n)$ for $(h(v_1),\dots,h(v_n))$.} for each $1 \leq i \leq m$. The
{\em evaluation of $q(\bar x)$ over $I$}, denoted $q(I)$, is
the set of all tuples $h(\bar x)$ over $\insC \cup \insN$ such that
$h$ is a homomorphism from $q$ to $I$.

It is well-known that {\em CQ
evaluation}, i.e., the problem of determining if a particular tuple
$\bar t$ belongs to the evaluation $q(D)$
of a CQ $q$ over a database $D$, is \NP-complete \cite{ChMe77}. On the
other hand, CQ evaluation becomes tractable by restricting the
syntactic shape of CQs.
One of the oldest and most common such restrictions is
{\em acyclicity}. Formally, a
CQ $q$ is acyclic if the instance consisting of the atoms of $q$ (after replacing each variable in $q$ with a fresh null) is acyclic. It is known from the seminal work of Yannakakis~\cite{Yan81}, that the problem of evaluating an acyclic CQ $q$ over a database $D$ can be solved in linear time $O(|q| \cdot |D|)$.

\medskip

\noindent
\paragraph{Tgds and the chase procedure.} A {\em tuple-generating dependency} (tgd)
over $\sigma$ is an expression of the form:
\begin{equation}
\label{eq:tgd}
\forall \bar x \forall \bar y \big(\phi(\bar x,\bar y)
\rightarrow \exists \bar z \psi(\bar x,\bar z)\big),
\end{equation}
where both $\phi$ and $\psi$ are conjunctions of atoms without nulls over $\sigma$. For simplicity, we write this tgd as $\phi(\bar x,\bar y)
\rightarrow \exists \bar z \psi(\bar x,\bar z)$, and use comma instead of $\wedge$ for conjoining atoms. Further, we assume
that each variable in $\bar x$ is mentioned in some atom of $\psi$.
We call $\phi$ and $\psi$ the {\em body} and {\em head} of the tgd,
respectively.
The tgd in \eqref{eq:tgd}
is logically equivalent to the expression
$\forall \bar x (q_\phi(\bar x) \rightarrow q_\psi(\bar x))$, where $q_\phi(\bar x)$
and $q_\psi(\bar x)$ are the CQs $\exists \bar y \phi(\bar x,\bar y)$ and
$\exists \bar z \psi(\bar x,\bar z)$, respectively.
Thus, an instance $I$ over $\sigma$ satisfies this tgd
if and only if $q_\phi(I)
\subseteq q_\psi(I)$. We say that an instance $I$ satisfies a set
$\Sigma$ of tgds, denoted $I \models \Sigma$, if $I$ satisfies every
tgd in $\Sigma$.

The {\em chase} is a useful tool when reasoning with tgds \cite{CaGK13,FKMP05,JoKl84,MaMS79}. We start by defining a single chase step.
Let $I$ be an instance over schema $\sigma$ and $\tau = \phi(\bar x,\bar y)
\rightarrow \exists \bar z \psi(\bar x,\bar z)$ a tgd over
$\sigma$.
We say that $\tau$ is \emph{applicable} w.r.t.~$I$ if there exists a tuple $(\bar a,\bar b)$ of elements in $I$ such that $\phi(\bar a,\bar b)$ holds in $I$. In this case, {\em the result of applying $\tau$ over $I$ with $(\bar a,\bar b)$} is the instance $J$ that extends $I$ with every atom in $\psi(\bar a,\bar z')$, where $\bar z'$ is the tuple obtained by simultaneously replacing each variable $z \in \bar z$ with a fresh distinct null not occurring in $I$. For such a single chase step we write $I \xrightarrow{\tau,(\bar a,\bar b)} J$.

Let us assume now that $I$ is an instance and
$\Sigma$ a finite set of tgds. A {\em chase sequence for $I$ under $\Sigma$}
is a sequence:
\[
I_0 \xrightarrow{\tau_0,\bar c_0} I_1 \xrightarrow{\tau_1,\bar c_1} I_2 \dots
\]
of chase steps such that: (1) $I_0 = I$; (2) For each $i \geq 0$, $\tau_i$ is a tgd in $\Sigma$; and (3) $\bigcup_{i \geq 0} I_i \models \Sigma$.
We call $\bigcup_{i \geq 0} I_i$ the {\em result} of this
chase sequence, which always exists.
Although the result of a chase sequence is not necessarily unique (up to isomorphism), each such result is equally useful for our purposes since it can be homomorphically embedded into every other result. Thus, from now on, we denote by $\chase{I}{\Sigma}$ the result of an arbitrary chase
sequence for $I$ under $\Sigma$. Further, for a CQ $q = \exists \bar y \big(R_1(\bar v_1) \wedge \dots \wedge R_m(\bar
v_m)\big)$, we denote by $\chase{q}{\Sigma}$ the result of a chase sequence for the database $\{R_1(\bar v'_1),\dots,R_m(\bar v'_m)\}$ under $\Sigma$ obtained after replacing each variable $x$ in $q$ with a fresh constant $c(x)$.

\medskip

\noindent
\paragraph{Egds and the chase procedure.} An {\em equality-generating dependency} (egd)
over $\sigma$ is an expression of the form:
\begin{equation*}
\label{eq:egd}
\forall \bar x \big(\phi(\bar x) \rightarrow x_i = x_j\big),
\end{equation*}
where $\phi$ is a conjunction of atoms without nulls over $\sigma$, and $x_i,x_j \in \bar x$. For clarity, we write this egd as $\phi(\bar x)
\rightarrow x_i = x_j$, and use comma for conjoining atoms. We call $\phi$ the {\em body} of the egd.
An instance $I$ over $\sigma$ satisfies this egd
if, for every homomorphism $h$ such that $h(\phi(\bar x)) \subseteq I$, it is the case that
$h(x_i) = h(x_j)$. An instance $I$ satisfies a set
$\Sigma$ of egds, denoted $I \models \Sigma$, if $I$ satisfies every
egd in $\Sigma$.

Recall that egds subsume functional dependencies, which in turn subsume keys. A \emph{functional dependency} (FD) over $\sigma$ is an expression of the form $R : A \ra B$, where $R$ is a relation symbol in $\sigma$ of arity $n > 0$, and $A,B$ are subsets of $\{1,\ldots,n\}$, asserting that the values of the attributes of $B$ are determined by the values of the attributes of $A$. For example, $R : \{1\} \ra \{3\}$, where $R$ is a ternary relation, is actually the egd $R(x,y,z),R(x,y',z') \ra z=z'$.
A FD $R : A \ra B$ as above is called \emph{key} if $A \cup B = \{1,\ldots,n\}$.

As for tgds, the chase is a useful tool when reasoning with egds. Let us first define a single  chase step.
Consider an instance $I$ over schema $\sigma$ and an egd $\epsilon = \phi(\bar x) \rightarrow x_i = x_j$ over $\sigma$.
We say that $\epsilon$ is \emph{applicable} w.r.t.~$I$ if there exists a homomorphism $h$ such that $h(\phi(\bar x)) \subseteq I$ and $h(x_i) \neq h(x_j)$. In this case, {\em the result of applying $\epsilon$ over $I$ with $h$} is as follows: If both $h(x_i), h(x_j)$ are constants, then the result is ``failure''; otherwise, it is the instance $J$ obtained from $I$ by identifying $h(x_i)$ and $h(x_j)$ as follows: If one is a constant, then every occurrence of the null is replaced by the constant, and if both are nulls, the one is replaced everywhere by the other.
As for tgds, we can define the notion of the chase sequence for an instance $I$ under a set $\Sigma$ of egds. Notice that such a sequence, assuming that is not failing, always is finite. Moreover, it is unique (up to null renaming), and thus we refer to {\em the} chase for $I$ under $\dep$, denoted $\chase{I}{\dep}$.
Further, for a CQ $q = \exists \bar y \big(R_1(\bar v_1) \wedge \dots \wedge R_m(\bar v_m)\big)$, we denote by $\chase{q}{\Sigma}$ the result of a chase sequence for the database $\{R_1(\bar v'_1),\dots,R_m(\bar v'_m)\}$ under $\Sigma$ obtained after replacing each variable $x$ in $q$ with a fresh constant $c(x)$; however, it is important to clarify that these are special constants, which are treated as nulls during the chase.

\medskip
\noindent
\paragraph{Containment and equivalence.}
Let $q$ and $q'$ be CQs and $\Sigma$ a finite set of tgds or egds. Then, $q$ is {\em contained} in $q'$ under $\Sigma$, denoted $q \subseteq_{\Sigma} q'$, if $q(I) \subseteq q'(I)$ for every instance $I$ such that $I \models \Sigma$.
Further, $q$ is {\em equivalent} to $q'$ under
$\Sigma$, denoted $q \equiv_\Sigma q'$, whenever $q \subseteq_\Sigma q'$ and $q' \subseteq_\Sigma q$ (or, equivalently, if $q(I) = q'(I)$ for every instance $I$ such that $I \models \Sigma$).
The following well-known characterization of CQ containment in terms of the chase will be widely used in our proofs:

\begin{lemma} \label{lemma:cq-equiv-tgds}
Let $q(\bar x)$ and $q'(\bar x')$ be CQs and $\dep$ be a finite set of tgds or egds. Then $q \subseteq_{\Sigma} q'$ if and only if $c(\bar x)$ belongs to the evaluation of $q'$ over $\chase{q}{\Sigma}$.
\end{lemma}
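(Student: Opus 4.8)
The plan is to prove both directions of the equivalence by exploiting the universality of the chase: intuitively, $\chase{q}{\Sigma}$ is a ``canonical'' or ``most general'' instance that satisfies $\Sigma$ and on which $q$ holds, so containment $q \subseteq_\Sigma q'$ should reduce to a single homomorphism test on this one representative instance. Throughout I write $c(\bar x)$ for the tuple of special constants obtained by freezing the free variables of $q$, as set up in the definition of $\chase{q}{\Sigma}$, and I let $D_q = \{R_1(\bar v_1'),\dots,R_m(\bar v_m')\}$ denote the frozen database of $q$ so that $\chase{q}{\Sigma}$ is the result of chasing $D_q$ with $\Sigma$.

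For the \onlyifdirection direction, suppose $q \subseteq_\Sigma q'$. The key observation is that $\chase{q}{\Sigma}$ is itself an instance satisfying $\Sigma$: by clause (3) in the definition of a chase sequence, the result $\bigcup_{i\ge 0} I_i$ models $\Sigma$. Moreover the identity mapping on $D_q$ is a homomorphism from $q$ into $\chase{q}{\Sigma}$ (since $D_q \subseteq \chase{q}{\Sigma}$), and it sends $\bar x$ to $c(\bar x)$; hence $c(\bar x) \in q(\chase{q}{\Sigma})$. Applying the containment hypothesis to the particular instance $I = \chase{q}{\Sigma}$ gives $q(\chase{q}{\Sigma}) \subseteq q'(\chase{q}{\Sigma})$, so $c(\bar x) \in q'(\chase{q}{\Sigma})$, which is exactly the claim. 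For the egd case one must note that the frozen constants are treated as nulls during the chase and the chase does not fail (if it fails there is no instance, a boundary case to address); otherwise the argument is identical.

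For the \ifdirection direction, suppose $c(\bar x)$ belongs to the evaluation of $q'$ over $\chase{q}{\Sigma}$, witnessed by a homomorphism $g$ from $q'$ into $\chase{q}{\Sigma}$ with $g(\bar x') = c(\bar x)$. I must show $q(I) \subseteq q'(I)$ for an \emph{arbitrary} $I \models \Sigma$. Take any tuple $\bar t \in q(I)$, witnessed by a homomorphism $h$ from $q$ (equivalently from $D_q$, reading the frozen constants $c(x)$ as the images of the variables) into $I$ with $h(c(\bar x)) = \bar t$. The crucial step is the \emph{universality} of the chase: because $I \models \Sigma$ and $h$ maps $D_q$ into $I$, $h$ extends to a homomorphism $\hat h$ from all of $\chase{q}{\Sigma}$ into $I$. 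This is the standard property that the chase of $q$ embeds into every model of $\Sigma$ that contains (a homomorphic image of) $q$, proved by induction on the chase sequence: each chase step either adds atoms that must already be satisfied in $I$ (extending $\hat h$ on the fresh nulls, for tgds) or performs an equality identification already forced in $I$ (for egds). Composing, $\hat h \circ g$ is a homomorphism from $q'$ into $I$ sending $\bar x'$ to $\hat h(c(\bar x)) = h(c(\bar x)) = \bar t$, so $\bar t \in q'(I)$.

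The main obstacle is the induction establishing universality of the chase, i.e.\ that $\hat h$ extends through every step. For tgds the inductive step requires that whenever a tgd $\tau$ is applied to produce fresh nulls, the current partial homomorphism into $I$ can be extended over those nulls using the fact that $I \models \tau$; for egds one must check that any identification made during the chase is consistent with the partial homomorphism (the two identified elements already have equal images in $I$), and separately handle the failing-chase case, where $\chase{q}{\Sigma}$ is undefined and one argues directly that no $I \models \Sigma$ admits $q$. This universality lemma is classical (it underlies \cite{MaMS79,FKMP05}), so in the write-up I would cite it rather than reprove the induction in full; the remaining bookkeeping — tracking the frozen constants $c(\bar x)$ and the treatment of these as nulls in the egd chase — is routine.
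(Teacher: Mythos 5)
Your proof is correct and is the standard chase-universality argument; the paper itself states Lemma~\ref{lemma:cq-equiv-tgds} without proof, as a well-known characterization going back to the chase literature (\cite{MaMS79,FKMP05}), so there is no divergence to report. The only delicate points --- the failing egd chase and the fact that the frozen constants $c(\bar x)$ are treated as nulls during the chase --- are exactly the ones you flag, and handling them as you indicate is all that is needed.
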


A problem that is quite important for our work is {\em CQ containment under constraints} (tgds or egds), defined as follows: Given CQs $q,q'$ and a finite set $\Sigma$ of tgds or egds, is it the case that $q \subseteq_\Sigma q'$? Whenever $\Sigma$ is bound to belong to a particular class $\class{C}$
of sets of tgds, we denote this problem as $\Cont(\class{C})$.
It is clear that the above lemma provides a decision procedure for the containment problem under egds.
However, this is not the case for tgds.

\medskip
\noindent
\paragraph{Decidable containment of CQs under tgds.}
It is not surprising that Lemma \ref{lemma:cq-equiv-tgds} does not provide a decision procedure for solving CQ containment under tgds since this problem is known to be undecidable~\cite{BeVa81}. This has led to a flurry of activity for identifying syntactic restrictions on sets of tgds that lead to
decidable CQ containment (even in the case when the chase does not terminate).\footnote{In fact, these restrictions are designed to obtain decidable {\em query answering under tgds}. However, this problem is equivalent to query containment under tgds (Lemma~\ref{lemma:cq-equiv-tgds}).} Such restrictions are often
classified into three main paradigms:

\medskip
\noindent
\underline{{\em Guardedness:}} A tgd is {\em guarded} if its
body contains an atom, called {\em guard},
that contains all the body-variables.
Although the chase under guarded tgds does not necessarily terminate, query containment is decidable. This follows from the fact that the result of the chase has {\em bounded treewidth}. Let $\guarded$ be the class of sets of guarded tgds.

\begin{proposition} \label{prop:qa-guarded}{\em \cite{CaGK13}}
$\Cont(\guarded)$ is \twoexptime-complete. It becomes
\EXP-complete if the arity of the schema is fixed, and \NP-complete if the schema is fixed.
\end{proposition}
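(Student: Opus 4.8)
The plan is to prove both bounds by reducing containment to query answering over the chase and then exploiting the tree-likeness of the guarded chase. By Lemma~\ref{lemma:cq-equiv-tgds}, $q \subseteq_\Sigma q'$ holds iff the frozen tuple $c(\bar x)$ lies in the evaluation of $q'$ over $\chase{q}{\Sigma}$; so it suffices to decide whether $q'$ has a homomorphism into the (possibly infinite) instance $\chase{q}{\Sigma}$ sending the distinguished variables to $c(\bar x)$. The first and most important step is to establish the structural property on which everything rests: for guarded $\Sigma$ the chase has bounded treewidth. Intuitively, whenever a guarded tgd fires, all of its frontier variables are instantiated by elements occurring together in a single guard atom; hence every freshly invented block of nulls is introduced next to a bounded set of already-existing elements. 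Organizing the chase as a forest of such guarded blocks (bags) yields a tree decomposition whose bags have width bounded by the maximal arity of $\sigma$. I would make this precise by defining the guarded chase forest and verifying conditions (1)--(2) of the join-tree/tree-decomposition definition.

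Given bounded treewidth, I would obtain the upper bounds by an automata-theoretic (equivalently, alternating) evaluation of $q'$ over the guarded chase forest. The idea is to label each node of the forest by its \emph{type} --- the isomorphism type of the atoms over its guarded block together with the bounded bookkeeping needed to reconnect neighbouring blocks --- and to observe that the set of possible types is finite and generated by a fixpoint over $\Sigma$. A two-way alternating tree automaton (or, equivalently, an alternating procedure) then guesses how the atoms of $q'$ distribute across the blocks and verifies, branch by branch, that the guessed partial homomorphisms are locally consistent and glue together. Since the number of types is double-exponential in $|\Sigma| + |q|$ in general and exponential when the arity is fixed, the corresponding emptiness/acceptance test runs in \twoexptime, respectively \exptime. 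When the whole schema $\sigma$ is fixed, the arity and the number of predicates are constants, so only polynomially many atoms are relevant and an \NP\ procedure can simply guess a polynomial-size homomorphic image of $q'$ inside a bounded portion of the chase and check it.

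For the matching lower bounds, \NP-hardness under a fixed schema is immediate, since taking $\Sigma=\emptyset$ reduces ordinary CQ evaluation, already \NP-hard~\cite{ChMe77}, to $\Cont(\guarded)$. The \twoexptime\ and \exptime\ hardness results are obtained by encoding the acceptance problem of alternating Turing machines working in exponential space (for \twoexptime) and polynomial space (for \exptime, with fixed arity); guarded tgds are expressive enough to build, on the tree-shaped chase, a representation of the machine's configurations and transitions, while the query $q'$ detects an accepting computation. I expect this encoding to be the main obstacle: guardedness severely restricts how variables may be reused across a rule body, so the tape and the alternation structure must be threaded through guard atoms of bounded arity, and the double-exponentially many tape cells have to be addressed using only polynomially many predicates. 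Getting this counting-and-addressing gadgetry to respect guardedness, while keeping the reduction polynomial, is the delicate part; the upper-bound automaton construction is conceptually routine once the bounded-treewidth property is in place.
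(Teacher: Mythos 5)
The paper does not prove this proposition; it is imported verbatim from \cite{CaGK13}, and your sketch follows essentially the argument of that source: containment reduces via Lemma~\ref{lemma:cq-equiv-tgds} to evaluating $q'$ over $\chase{q}{\Sigma}$, the guarded chase forest gives the tree-likeness on which an alternating/automata-theoretic type-based procedure yields the \twoexptime\ and \EXP\ upper bounds (and a guess-and-check argument the \NP\ bound for fixed schema), while the lower bounds come from alternating Turing machine simulations. The one imprecision is that the decomposition bags are bounded by the arity only \emph{below} the initial instance, whose constants must all be gathered into a single root bag; this does not affect any of the stated bounds.
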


A key subclass of guarded tgds is the class of {\em linear} tgds, that
is, tgds whose body consists of a single atom~\cite{CaGL12}, which in
turn subsume the well-known class of {\em inclusion dependencies}
(linear tgds without repeated variables neither in the body nor in the
head)~\cite{Fagin81}. Let $\linear$ and $\id$ be the classes of sets
of linear tgds and inclusions dependencies, respectively. $\Cont(\class{C})$, for $\class{C} \in \{\linear,\id\}$, is \PSPACE-complete, and \NP-complete if the arity of the schema is fixed~\cite{JoKl84}.

\medskip \noindent \underline{{\em Non-recursiveness:}} A set
$\Sigma$ of tgds is {\em non-recursive} if its predicate graph contains
no directed cycles. (Non-recursive sets of tgds are also known as {\em
acyclic}~\cite{FKMP05,LMPS15}, but we reserve this term for CQs). Non-recursiveness ensures the termination of the chase, and thus decidability of CQ containment. Let $\nr$ be the class of non-recursive sets of tgds. Then:

\begin{proposition} \label{prop:qa-nr}{\em \cite{LMPS15}}
$\Cont(\nr)$ is complete for \NEXP, even if the arity of the schema is fixed. It becomes \NP-complete if the schema is fixed.
\end{proposition}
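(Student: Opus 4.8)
The plan is to reduce everything to the chase characterisation of containment (Lemma~\ref{lemma:cq-equiv-tgds}) and then to exploit one structural consequence of non-recursiveness: although the chase of $q$ under a set $\dep \in \nr$ terminates, it need not be small --- iterating tgds with multi-atom bodies can blow the instance up to \emph{double}-exponential size --- so the whole algorithmic content is to avoid materialising it. The key observation I would isolate first is that every atom $\alpha \in \chase{q}{\dep}$ admits a \emph{derivation tree}: a tree whose root is $\alpha$, whose leaves are frozen atoms of $q$, and where each internal node is obtained from its children by one chase step. Because $\dep \in \nr$, the predicates along any root-to-leaf branch strictly decrease in a fixed topological order of the predicate graph, so the depth of such a tree is at most the number of predicates $p$, while its branching is at most the maximal body size $b$. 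Hence each atom has a derivation tree with at most $b^{p}$ nodes.

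For the \NEXP\ upper bound I would argue as follows. By Lemma~\ref{lemma:cq-equiv-tgds}, $q \subseteq_\dep q'$ iff there is a homomorphism $h$ from $q'$ into $\chase{q}{\dep}$, and such an $h$ touches at most $|q'|$ atoms of the chase. Taking the union of their derivation trees yields a sub-instance of at most $|q'| \cdot b^{p}$ atoms, i.e.\ of single-exponential size (the bound is independent of the arity, since $b,p \le |\dep|$). The algorithm therefore guesses a chase sequence of single-exponential length producing these atoms together with the homomorphism $h$, and verifies validity in exponential time; since a homomorphism into a finite chase prefix extends to one into $\chase{q}{\dep}$ and vice versa, this is sound and complete, placing $\Cont(\nr)$ in \NEXP\ even for fixed arity. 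When the whole schema is fixed, $p$ is a constant, so derivation trees shrink to polynomial size $b^{O(1)}$; the guessed chase prefix and the homomorphism are then of polynomial size and checkable in polynomial time, giving the \NP\ upper bound.

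For the lower bounds, \NP-hardness under a fixed schema is inherited from the constraint-free case: taking $\dep = \emptyset$ (which is trivially in $\nr$) specialises $\Cont(\nr)$ to plain CQ containment $q \subseteq q'$, equivalently to the existence of a homomorphism $q' \to q$, which is already \NP-hard over a single binary relation by reduction from $3$-colourability. The crux is the matching \NEXP\ lower bound under fixed arity. Here I would reduce from a \NEXP-complete tiling problem over a $2^n \times 2^n$ grid: a non-recursive set of bounded-arity tgds, built from an $n$-fold ``doubling'' gadget, generates from a constant-size seed an exponential address space together with the successor and adjacency structure of the grid, and $q,q'$ are engineered so that the homomorphism $q' \to \chase{q}{\dep}$ demanded by Lemma~\ref{lemma:cq-equiv-tgds} exists precisely when a valid tiling can be selected and checked locally. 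Making this encoding go through with arity bounded by a constant --- laying out an exponential grid and enforcing the tiling constraints with a polynomial-size query while keeping $\dep$ non-recursive --- is where I expect the real work to lie, both because the direction of containment has to be arranged so that the (existential) homomorphism carries the guess, and because the double-exponential size of the full chase must be tamed by the derivation-tree bound above.
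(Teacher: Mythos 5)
This proposition is imported by the paper from \cite{LMPS15} and is not proved there; the only in-paper material you can compare against is the sketch of the containment algorithm for $\nr$ in Section~5.1. Your upper-bound argument coincides with that sketch almost exactly: the derivation-tree bound (depth at most the number of predicates because non-recursiveness forces a strict descent in a topological order of the predicate graph, branching at most the maximal body size $b_\dep$) yields a witnessing chase prefix of size $|q'|\cdot (b_\dep)^{O(p)}$, which is precisely the bound $n = |q'|\cdot(b_\dep)^{\O(p_{q',\dep})}$ the paper states, and the guess-and-check of that prefix gives \NEXP\ in general and \NP\ for a fixed schema. Your observation that the full chase can be double-exponential, so the prefix bound is genuinely needed, is correct and is the right reason the argument is not trivial. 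The \NP-hardness for a fixed schema via $\dep=\emptyset$ and $3$-colourability is also fine.

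The genuine gap is the \NEXP\ lower bound under fixed arity, which is the substantive half of the statement. What you give is a plan, not a proof: you name a $2^n\times 2^n$ tiling problem and an ``$n$-fold doubling gadget,'' and then explicitly defer the construction, acknowledging that laying out the exponential grid with constant-arity, non-recursive tgds and arranging the direction of containment so that the homomorphism $q'\to\chase{q}{\dep}$ carries the tiling is ``where the real work lies.'' That work is exactly what a proof of hardness consists of, and it is not routine: one must check that the addressing gadget does not introduce recursion in the predicate graph, that the adjacency and tile-compatibility tests can be expressed by a polynomial-size right-hand query against the chase rather than by constraints that would need to ``inspect'' the whole grid, and that spurious homomorphisms (which are the standard failure mode of such encodings) are excluded. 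None of this is argued. As it stands the proposal establishes the upper bounds and the fixed-schema lower bound but leaves the \NEXP-completeness claim unproven; you would either have to carry out the tiling reduction in full or, as the paper does, cite \cite{LMPS15} for it.
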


\begin{figure}[t]
 \epsfclipon
  \centerline
  {\hbox{
  \leavevmode
  \epsffile{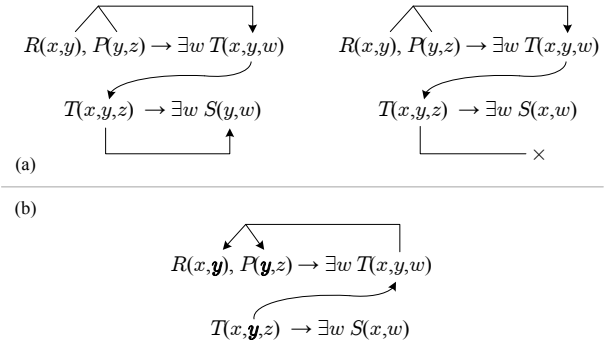}
  }} \epsfclipoff \caption{Stickiness and marking.}
  \label{fig:stickiness}
  \vspace{-2mm}
\end{figure}

\noindent
\underline{{\em Stickiness:}}
This condition ensures neither termination nor bounded treewidth of the chase. Instead, the decidability of query containment is obtained by exploiting {\em query rewriting} techniques.
The goal of stickiness is to capture joins among variables that are not expressible via guarded tgds, but without forcing the chase to terminate.
The key property underlying this condition can be described as follows: During the chase, terms that are associated (via a homomorphism) with variables that appear more than once in the body of a tgd (i.e., join variables) are always propagated (or ``stick'') to the inferred atoms.
This is illustrated in Figure~\ref{fig:stickiness}(a); the first set of tgds is sticky, while the second is not.
The formal definition is based on an inductive marking procedure that marks the variables that may violate the semantic property of the chase described above~\cite{CaGP12}.
Roughly, during the base step of this procedure, a variable that appears in the body of a tgd $\tau$ but not in every head-atom of $\tau$ is marked. Then, the marking is inductively propagated from head to body as shown in Figure~\ref{fig:stickiness}(b).
Finally, a finite set of tgds $\Sigma$ is {\em sticky} if no tgd in $\Sigma$ contains two occurrences of a marked variable. Then:

\begin{proposition} \label{prop:qa-sticky}{\em \cite{CaGP12}}
$\Cont(\sticky)$ is \EXP-complete. It becomes \NP-complete if the arity of the schema is fixed.
\end{proposition}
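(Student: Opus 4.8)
The plan is to reduce containment to query answering and then exploit the query-rewriting technique that is characteristic of stickiness. By Lemma~\ref{lemma:cq-equiv-tgds}, deciding $q \subseteq_\Sigma q'$ amounts to checking whether the frozen tuple $c(\bar x)$ lies in the evaluation of $q'$ over $\chase{q}{\Sigma}$; since the chase can be infinite for sticky $\Sigma$, I would avoid materialising it and instead compute a \emph{UCQ rewriting} of $q'$ with respect to $\Sigma$. This is a finite union $Q$ of CQs such that, for every database $D$, the tuple belongs to the evaluation of $q'$ over $\chase{D}{\Sigma}$ iff it belongs to the evaluation of $Q$ over $D$ alone (with no constraints). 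Applying this with $D$ the frozen database of $q$ turns the problem into ordinary UCQ evaluation, for which the cost is controlled by the size of $Q$.

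The rewriting is produced by a backward-chaining procedure: starting from $q'$, one repeatedly selects a tgd $\tau \in \Sigma$ and a set of atoms in the current CQ that unifies with (part of) the head of $\tau$ via a most general unifier, and replaces those atoms by the body of $\tau$ under the unifier; a factorisation step that identifies atoms mapped to a common head atom is also needed. Each CQ so generated is sound, and the set of all generated CQs is complete for the chase. The crux---and the place where stickiness is used---is to show that this process halts and to bound its output. Here I would prove the key combinatorial lemma: because marked (join) variables stick to every atom they help generate, as depicted in Figure~\ref{fig:stickiness}, backward resolution never needs to introduce unboundedly many fresh atoms, so every CQ in the rewriting has an atom-count bounded by a function of $q'$ and $\Sigma$ and uses only boundedly many variables.

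With this bound in hand the upper bounds follow by counting. Over a schema of arbitrary arity, the number of distinct atoms---hence of CQs---built from boundedly many variables is single-exponential, so $Q$ can be computed in exponential time and evaluated over the (small) frozen database of $q$ within \EXP; this yields the \EXP\ upper bound. When the arity is fixed, the number of candidate atoms over boundedly many variables is only polynomial, so instead of materialising $Q$ I would guess nondeterministically a single CQ of the rewriting, together with the polynomial-length derivation producing it and a homomorphism from it into the frozen database of $q$, yielding the \NP\ upper bound.

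For the lower bounds, \NP-hardness under fixed arity is inherited directly from the \NP-hardness of plain CQ evaluation~\cite{ChMe77} by taking $\Sigma = \emptyset$. The \EXP-hardness for arbitrary arity is the more delicate direction: I would encode the computation of an alternating polynomial-space Turing machine (whose acceptance problem is \EXP-complete) using high-arity predicates to represent exponentially many configurations succinctly, together with a sticky set of tgds propagating the transition relation, so that acceptance reduces to a query-answering instance. The main obstacle throughout is the combinatorial heart of the upper bound---proving that stickiness genuinely bounds the atom-count of every CQ produced by backward resolution---since establishing termination of the rewriting and the correct size bound simultaneously is exactly where the marking machinery must be exploited with care.
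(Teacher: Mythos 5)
The paper does not prove this proposition; it is imported verbatim from \cite{CaGP12}, so your proposal has to be judged on its own merits. Your overall plan---reduce containment to evaluation over the frozen database via Lemma~\ref{lemma:cq-equiv-tgds} and then exploit the UCQ rewritability of $\sticky$---is the right tool for the fixed-arity \NP\ upper bound (it is exactly how the paper later uses Proposition~\ref{pro:sticky-ucq-rewritable}), and your lower-bound sketches are in line with the standard arguments.

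The genuine gap is in your \EXP\ upper bound for unbounded arity. You claim that since each disjunct of the rewriting uses boundedly many variables, ``the number of distinct atoms---hence of CQs---\ldots is single-exponential,'' so $Q$ can be materialised in exponential time. This conflates atoms with conjunctions of atoms. Under stickiness the bound $f_{\sticky}(q,\dep)$ on the \emph{number of atoms per disjunct} is itself exponential in the arity, and this is unavoidable: Example~\ref{exa:function-fs-lower-bound} in this very paper exhibits a sticky set $\dep$ and a CQ whose every UCQ rewriting has a disjunct with exactly $2^n$ atoms. Consequently the candidate disjuncts are subsets of size up to $2^{\mathrm{poly}}$ of a single-exponential pool of atoms, i.e.\ there are doubly-exponentially many of them; materialising $Q$ costs \twoexptime, and the alternative of guessing one exponential-size disjunct together with its derivation and a homomorphism into $D_q$ only yields \NEXP, not \EXP. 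This is precisely the wall the authors hit for $\SA(\sticky)$: via UCQ rewritability they obtain only a \NEXP\ upper bound against an \EXP\ lower bound and explicitly leave closing that gap open. The \EXP\ membership of $\Cont(\sticky)$ in \cite{CaGP12} is therefore established by a different argument than enumerating the rewriting, and your proof as written does not deliver it. The fixed-arity case survives, since there $f_{\sticky}$ is polynomial and a polynomial-size disjunct plus homomorphism can be guessed and checked in \NP.
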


\noindent \underline{{\em Weak versions:}}
Each one of the previous
classes has an associated weak version, called {\em weakly-guarded}~\cite{CaGK13}, {\em weakly-acyclic}~\cite{FKMP05}, and {\em
weakly-sticky}~\cite{CaGP12}, respectively, that guarantees the decidability of query containment. The underlying idea of all these classes is the same: Relax the conditions in the definition of the class, so that only those positions that receive null values during the chase procedure are taken into
consideration. A key property of all these classes is that they
extend the class of {\em full tgds}, i.e., those without existentially
quantified variables. This is not the case for the ``unrelaxed'' versions presented above. 
\section{Semantic Acyclicity with TGDs}
\label{sec:tgds}

One of the main tasks of our work is to study the problem of checking whether a CQ $q$ is equivalent to an acyclic CQ over those instances that satisfy a set $\Sigma$ of tgds. When this is the case we say that $q$ is {\em semantically acyclic under $\Sigma$}. The semantic acyclicity problem is defined below; $\class{C}$ is a class of sets of tgds (e.g., guarded, non-recursive, sticky, etc.):

\begin{center}
\fbox{\begin{tabular}{ll}
{\small PROBLEM} : & \SA($\class{C}$)
\\{\small INPUT} : & A CQ $q$ and a finite set $\Sigma$ of
tgds in $\class{C}$.
\\
{\small QUESTION} : &  Is there an acyclic CQ $q'$ s.t.
$q \equiv_\Sigma q'$?
\end{tabular}}
\end{center}

\subsection{Infinite Instances vs. Finite Databases}

It is important to clarify that $\SA(\class{C})$ asks for the existence of an acyclic CQ $q'$ that is equivalent to $q$ under $\dep$ focussing on arbitrary (finite or infinite) instances. However, in practice we are concerned only with finite databases. Therefore, one may claim that the natural problem to investigate is $\FSA(\class{C})$, which accepts as input a CQ $q$ and a finite set $\dep \in \class{C}$ of tgds, and asks whether an acyclic CQ $q'$ exists such that $q(D) = q'(D)$ for every finite database $D \models \dep$.

Interestingly, for all the classes of sets of tgds discussed in the
previous section, $\SA$ and $\FSA$ coincide due to the fact that they
ensure the so-called \emph{finite controllability} of CQ containment.
This means that query containment under arbitrary instances and query
containment under finite databases are equivalent problems.
For non-recursive and weakly-acyclic sets of tgds this immediately
follows from the fact that the chase terminates. For guarded-based
classes of sets of tgds this has been shown in~\cite{BaGO14}, while
for sticky-based classes of sets of tgds it has been shown
in~\cite{GoMa13}.
Therefore, assuming that $\class{C}$ is one of the above syntactic
classes of sets of tgds, by giving a solution to $\SA(\class{C})$ we
immediately obtain a solution for $\FSA(\class{C})$.

The reason why we prefer to focus on $\SA(\class{C})$, instead of $\FSA(\class{C})$, is given by Lemma~\ref{lemma:cq-equiv-tgds}: Query containment under arbitrary instances can be characterized in terms of the chase. This is not true for CQ containment under finite databases simply because the chase is, in general, infinite.

\subsection{Semantic Acyclicity vs. Containment}

There is a close relationship between semantic acyclicity and a restricted version of CQ containment under sets of tgds, as we explain next. But first we need to recall the notion of connectedness for queries and tgds. A CQ is {\em connected} if its {\em Gaifman graph} is connected -- recall that the nodes of the Gaifman graph of a CQ $q$ are the variables of $q$, and there is an edge between variables $x$ and $y$ iff they appear together in some atom of $q$.
Analogously, a tgd $\tau$ is \emph{body-connected} if its body is connected. Then:

\begin{proposition} \label{prop:gral-reduction}
Let $\Sigma$ be a finite set of body-connected tgds and $q,q'$ two Boolean and connected CQs without common variables, such that $q$ is acyclic and $q'$ is not semantically acyclic under $\Sigma$. Then $q \subseteq_\Sigma q'$
iff $q \wedge q'$ is semantically acyclic under $\Sigma$.
\end{proposition}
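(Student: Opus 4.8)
The plan is to run everything through the chase characterisation of containment (Lemma~\ref{lemma:cq-equiv-tgds}): for Boolean CQs $a,b$ we have $a \subseteq_\Sigma b$ iff $b$ maps homomorphically into $\chase{a}{\Sigma}$. I write $C_q = \chase{q}{\Sigma}$ and $C_{q'} = \chase{q'}{\Sigma}$ throughout. The forward direction is short: assuming $q \subseteq_\Sigma q'$, I would show $q \wedge q' \equiv_\Sigma q$. The containment $q \wedge q' \subseteq_\Sigma q$ is immediate, and conversely, on any $I \models \Sigma$ satisfying $q$ the hypothesis yields a homomorphism from $q'$ to $I$ as well; since $q$ and $q'$ share no variables these two homomorphisms combine into one for $q \wedge q'$. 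As $q$ is acyclic, this exhibits an acyclic CQ equivalent to $q \wedge q'$ under $\Sigma$, so $q \wedge q'$ is semantically acyclic.

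The technical heart, which I would isolate as a lemma and prove first, is a chase-decomposition property: if a Boolean CQ is the variable-disjoint union of two connected CQs $s_1,s_2$ and every tgd in $\Sigma$ is body-connected, then $\chase{s_1 \wedge s_2}{\Sigma} = \chase{s_1}{\Sigma} \uplus \chase{s_2}{\Sigma}$, with the two parts on disjoint element sets. The argument is a chase-step induction: after freezing, $s_1$ and $s_2$ occupy disjoint connected components; a body-connected trigger has connected image, so it fires inside a single component and produces atoms using only that component's elements plus fresh nulls. Hence no chase step ever links the two sides, and restricting the sequence to each side yields a legal chase for that side. I expect this invariant to be the step requiring the most care, since one must verify that disconnected heads only ever spawn fresh components on the same side they came from.

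For the backward direction, suppose $q \wedge q'$ is semantically acyclic, witnessed by an acyclic CQ $p$ with $p \equiv_\Sigma q \wedge q'$. From $q \wedge q' \subseteq_\Sigma p$ and the decomposition lemma, $p$ maps into $C_q \uplus C_{q'}$; grouping the connected components of $p$ according to which disjoint part they land in, I write $p = p^q \wedge p^{q'}$ with $p^q \to C_q$ (hence $q \subseteq_\Sigma p^q$) and $p^{q'} \to C_{q'}$ (hence $q' \subseteq_\Sigma p^{q'}$), where each of $p^q, p^{q'}$ is acyclic as a union of components of $p$. From $p \subseteq_\Sigma q \wedge q'$ I get $q' \to \chase{p}{\Sigma}$, and applying the decomposition lemma again to $\chase{p}{\Sigma} = \chase{p^q}{\Sigma} \uplus \chase{p^{q'}}{\Sigma}$ together with the connectedness of $q'$ forces $q'$ entirely into one part. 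If $q'$ mapped into $\chase{p^{q'}}{\Sigma}$ then $p^{q'} \subseteq_\Sigma q'$, which with $q' \subseteq_\Sigma p^{q'}$ gives $q' \equiv_\Sigma p^{q'}$ with $p^{q'}$ acyclic, contradicting that $q'$ is \emph{not} semantically acyclic. Hence $q'$ maps into $\chase{p^q}{\Sigma}$, i.e. $p^q \subseteq_\Sigma q'$, and transitivity with $q \subseteq_\Sigma p^q$ yields $q \subseteq_\Sigma q'$.

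The degenerate cases should fall out automatically: every CQ has at least one atom and so cannot map into an empty chase, which is exactly what forces $q'$ onto the $p^q$ side whenever $p^{q'}$ is empty (and rules out $p^q$ being empty). The genuinely load-bearing hypotheses are body-connectedness of $\Sigma$, used in both applications of the decomposition lemma, and the failure of semantic acyclicity for $q'$, used to eliminate the bad subcase; the acyclicity of $q$ is what lets the forward direction supply an acyclic witness. I therefore expect the decomposition lemma to be the only real obstacle, with the remaining component-bookkeeping and the two transitivity steps being routine.
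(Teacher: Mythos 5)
Your proof is correct and follows essentially the same route as the paper: your chase-decomposition lemma for variable-disjoint connected queries under body-connected tgds is exactly the content of the paper's Lemma~\ref{lem:distribution-over-components} (stated there at the level of containment rather than of the chase), and your two-case analysis of where the connected $q'$ lands in $\chase{p}{\Sigma}$ mirrors the paper's case analysis on its index sets $S_q$ and $S_{q'}$, with the non-semantic-acyclicity of $q'$ killing the bad case in both. The only cosmetic difference is that you group the components of the acyclic witness into two blocks $p^q,p^{q'}$ rather than treating the individual components $q_A^1,\ldots,q_A^k$; note that your decomposition lemma must then be invoked for possibly non-connected blocks, but your component-wise chase induction already establishes that more general form.
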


As an immediate corollary of Proposition \ref{prop:gral-reduction}, we obtain an initial boundary for the decidability of $\SA$: We can only obtain a positive result for those classes of sets of tgds for which the restricted containment problem presented above is decidable. More formally, let us define $\RCont(\class{C})$ to be the problem of checking $q \subseteq_\Sigma q'$, given a set $\Sigma$ of body-connected tgds in $\class{C}$ and two Boolean and connected CQs $q$ and $q'$, without common variables, such that $q$ is acyclic and $q'$ is not semantically acyclic under $\Sigma$. Then:

\begin{corollary} \label{coro:easy-undec}
$\SA(\class{C})$ is undecidable for every class $\class{C}$ of tgds such that $\RCont(\class{C})$ is undecidable.
\end{corollary}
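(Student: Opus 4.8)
The plan is to derive Corollary~\ref{coro:easy-undec} directly from Proposition~\ref{prop:gral-reduction} via a straightforward many-one reduction, observing that the proposition yields a reduction from $\RCont(\class{C})$ to $\SA(\class{C})$. First I would recall what $\RCont(\class{C})$ demands as input: a set $\Sigma$ of body-connected tgds in $\class{C}$, together with two Boolean connected CQs $q$ and $q'$ sharing no variables, where $q$ is acyclic and $q'$ is \emph{not} semantically acyclic under $\Sigma$. These are precisely the hypotheses under which Proposition~\ref{prop:gral-reduction} applies. Given such an instance, I map it to the single CQ $q \wedge q'$ together with the same set $\Sigma$, which is a legitimate input to $\SA(\class{C})$.

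The correctness of the reduction is immediate from the ``iff'' in Proposition~\ref{prop:gral-reduction}: we have $q \subseteq_\Sigma q'$ if and only if $q \wedge q'$ is semantically acyclic under $\Sigma$. Thus the $\RCont(\class{C})$ instance is a ``yes'' instance exactly when the corresponding $\SA(\class{C})$ instance is a ``yes'' instance. Since the map $(\Sigma, q, q') \mapsto (\Sigma, q \wedge q')$ is clearly computable, this is a valid many-one (indeed polynomial-time) reduction from $\RCont(\class{C})$ to $\SA(\class{C})$. Consequently, if $\RCont(\class{C})$ is undecidable, then $\SA(\class{C})$ must be undecidable as well, which is exactly the statement of the corollary.

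The one subtlety I would address is that Proposition~\ref{prop:gral-reduction} requires its inputs to satisfy the side conditions (body-connectedness of $\Sigma$, connectedness of $q$ and $q'$, disjointness of their variables, acyclicity of $q$, and non-semantic-acyclicity of $q'$). Since $\RCont(\class{C})$ is \emph{defined} to restrict its inputs to exactly these well-formed instances, every $\RCont(\class{C})$ instance is a valid input to the proposition, so the reduction is applicable across the whole problem domain with no gaps. I do not anticipate a genuine obstacle here; the content of the result lies entirely in Proposition~\ref{prop:gral-reduction}, and the corollary is simply its contrapositive packaged as an undecidability transfer. The only thing to take care of is to invoke the proposition in the correct direction and to confirm that the restricted containment problem's input format matches the proposition's hypotheses verbatim.
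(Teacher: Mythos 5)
Your proof is correct and matches the paper's intent exactly: the paper presents this corollary as an immediate consequence of Proposition~\ref{prop:gral-reduction}, obtained by the very reduction $(\Sigma,q,q') \mapsto (\Sigma, q\wedge q')$ you describe. Your added remark that the side conditions of the proposition coincide verbatim with the input restrictions defining $\RCont(\class{C})$ is the only point worth checking, and you have handled it correctly.
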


As we shall discuss later, $\RCont$ is not easier than general CQ containment under tgds, which means that the only classes of tgds for
which we know the former problem to be decidable are those for which
we know CQ containment to be decidable (e.g., those introduced in
Section \ref{sec:prelim}).

At this point, one might be tempted to think that some version of the
converse of Proposition \ref{prop:gral-reduction} also holds; that is,
the semantic acyclicity problem for $\class{C}$ is reducible to the containment problem for $\class{C}$. This would imply the decidability
of $\SA$ for any class of sets of tgds for which the CQ containment problem is decidable. Our next result shows that the picture is more complicated than this as $\SA$ is undecidable even
over the class $\full$ of sets of full tgds, which ensures the decidability of CQ containment:

\begin{theorem} \label{theo:full-undec}
The problem \SA{\em ($\full$)} is undecidable.
\end{theorem}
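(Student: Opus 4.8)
The plan is to give a \emph{direct} reduction from an undecidable problem to \SA($\full$), rather than routing through Corollary~\ref{coro:easy-undec}: since $\Cont(\full)$ is decidable (the chase under full tgds terminates), Proposition~\ref{prop:gral-reduction} cannot by itself yield undecidability here. The source problem I would use is the halting problem for Turing machines (equivalently, a word or tiling problem), which is undecidable. Given a machine $M$, I would compute in polynomial time a set $\Sigma_M$ of full tgds and a fixed CQ $q_M$ such that $q_M$ is semantically acyclic under $\Sigma_M$ if and only if $M$ halts on the empty input. The essential point, and the reason that decidability of ordinary containment does not settle the question, is that a candidate acyclic witness $q'$ may be \emph{arbitrarily large}: its size will be forced to grow with the length of a halting computation, so deciding the existence of such a $q'$ amounts to deciding whether a halting computation exists at all.

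The mechanism generalizes Example~\ref{ex:intro}. I would let $q_M$ contain a small \emph{cyclic} core (a closed pattern over dedicated ``configuration'' predicates whose closing edge uses a distinguished predicate $E$), together with context atoms over the computation alphabet. The full tgds $\Sigma_M$ simulate the transition relation of $M$ and are arranged so that $E$ becomes \emph{derivable} from a set of context atoms precisely when those atoms encode a complete, valid halting run of $M$. Then, exactly as $\Sigma$ makes ${\tt Owns}$ redundant in the example, a halting run of length $n$ can be written as an \emph{acyclic} (path-shaped) gadget of atoms, yielding an acyclic CQ $q'$ whose extra atoms $\Sigma_M$-derive $E$; using Lemma~\ref{lemma:cq-equiv-tgds} one checks $q' \equiv_{\Sigma_M} q_M$, so $q_M$ is semantically acyclic. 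This gives the ``$M$ halts $\Rightarrow$ semantically acyclic'' direction by an explicit construction.

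The hard part is the converse (soundness): showing that if some acyclic $q'$ satisfies $q' \equiv_{\Sigma_M} q_M$, then $M$ must halt. Here I would unfold both containments via the chase (Lemma~\ref{lemma:cq-equiv-tgds}): from $q_M \subseteq_{\Sigma_M} q'$ and $q' \subseteq_{\Sigma_M} q_M$ one obtains homomorphisms between $\chase{q_M}{\Sigma_M}$ and $\chase{q'}{\Sigma_M}$ that must preserve the cyclic core of $q_M$. The crux is to argue that the cycle-closing edge $E$ cannot appear ``for free'' in $\chase{q'}{\Sigma_M}$: the design of $\Sigma_M$ must ensure that the \emph{only} way to derive $E$ is from a faithful encoding of a terminating computation, so $q'$ is forced to contain such an encoding as a subpattern. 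Acyclicity of $q'$ is then used to rule out degenerate or ``folded'' pseudo-runs that collapse homomorphically but do not correspond to genuine computations, pinning down an actual halting run of $M$. I expect this soundness step---certifying that no acyclic query can $\Sigma_M$-derive the closing edge except through a real run, robustly against homomorphic collapse---to be the main obstacle, and it is where the bulk of the technical work (the precise gadget for $\Sigma_M$ and the homomorphism/join-tree analysis) will lie.
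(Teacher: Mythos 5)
Your plan is essentially the paper's own proof: the paper reduces from PCP (one of the ``word problems'' you name, and a cleaner fit than Turing-machine halting since a solution is a single word laid out along a path), builds exactly your gadget---a cyclic Boolean query whose closing atoms (the ${\rm sync}$/finalization atoms) are derivable by the full tgds precisely from a path-shaped acyclic encoding of a solution---and argues the forward direction by explicit construction and the converse by tracing the derived atoms back through the chase to the initialization rule. The soundness step you single out as the main obstacle is indeed where the paper does its heavy lifting (assuming all PCP words have even length and reshaping $q$ so that an acyclic witness cannot hide ``folded'' parallel, reversed, or looping edges), so your assessment of where the technical work lies is accurate, though your proposal leaves that part as a plan rather than carrying it out.
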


\begin{proof}
We provide a sketch
since the complete construction is long.
We reduce from the {\em Post correspondence problem} (PCP)
over the alphabet $\{a,b\}$. The input to this problem are two equally long lists $w_1,\dots,w_n$ and $w'_1,\dots,w'_n$
of words over $\{a,b\}$, and we ask whether there is a {\em solution}, i.e., a nonempty sequence
$i_1 \dots i_m$ of indices in $\{1,\dots,n\}$ such that $w_{i_1} \dots w_{i_m} =
w'_{i_1} \dots w'_{i_m}$.

Let $w_1,\dots,w_n$ and $w'_1,\dots,w'_n$ be an instance of PCP.
In the full proof we construct a Boolean CQ $q$ and a
set $\Sigma$ of full tgds over the signature $\{P_a,P_b,P_\#,P_*,{\rm sync},{\rm start},{\rm end}\}$, where $P_a$, $P_b$, $P_\#$, $P_*$ and ${\rm sync}$ are binary predicates, and ${\rm start}$ and ${\rm end}$ are unary predicates, such that the PCP instance given by $w_1,\dots,w_n$ and $w'_1,\dots,w'_n$ has a solution iff there exists an acyclic CQ $q'$ such that $q \equiv_\Sigma q'$. In this sketch though, we concentrate on the case when the underlying graph of $q'$ is a directed path; i.e, we prove that the PCP instance has a solution iff there is a CQ $q'$ whose underlying graph is a directed path such that $q \equiv_\Sigma q'$. This does not imply the undecidability of the general case, but the proof of the latter is a generalization of the one we sketch below.

The restriction of the query $q$ to the symbols that
are not ${\rm sync}$ is graphically depicted in Figure
\ref{fig:undec1}.
\begin{figure}[t]
 \epsfclipon
  \centerline
  {\hbox{
  \leavevmode
  \epsffile{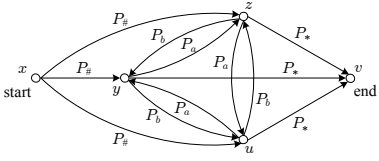}
  }} \epsfclipoff \caption{The query $q$ from the proof of Theorem \ref{theo:full-undec}.}
  \label{fig:undec1}
  \vspace{-2mm}
\end{figure}
There, $x,y,z,u,v$ denote the names of the respective variables. The
interpretation of ${\rm sync}$ in $q$ consists of all pairs in
$\{y,u,z\}$.

Our set $\Sigma$ of full tgds defines the {\em synchronization}
predicate ${\rm sync}$ over those acyclic CQs $q'$ whose underlying graph is a path. Assume that $q'$  encodes a word $w \in \{a,b\}^+$. We denote by $w[i]$, for $1 \leq i \leq |w|$, the prefix of $w$ of length $i$. In such case,
the predicate ${\rm sync}$ contains those pairs $(i,j)$ such that for some sequence $i_1 \dots i_m$ of indices in $\{1,\dots,n\}$ we have that
$w_{i_1} \dots w_{i_m} = w[i]$ and $w'_{i_1} \dots w'_{i_m} = w[j]$. Thus, if $w$ is a solution for the PCP instance, then $(|w|,|w|)$ belongs to the interpretation of ${\rm sync}$.

Formally, $\Sigma$ consists of the following rules:
\begin{enumerate}
\item An initialization rule:
\[
{\rm start}(x),P_\#(x,y) \, \rightarrow \, {\rm sync}(y,y).
\]
That is, the first element after the special symbol $\#$ (which denotes the beginning of a word over $\{a,b\}$)
is synchronized with itself.

\item For each $1 \leq i \leq n$, a synchronization rule:
\[
{\rm sync}(x,y),P_{w_i}(x,z),P_{w'_i}(y,u) \, \rightarrow \, {\rm
sync}(z,u).
\]
Here, $P_{w}(x,y)$, for $w = a_1 \dots a_t \in \{a,b\}^+$, denotes $P_{a_1}(x,x_1),\dots,$ $P_{a_t}(x_{t-1},y)$, where the $x_i$'s are fresh variables.
Roughly, if $(x,y)$ is synchronized and the element $z$ (resp., $u$) is reachable from $x$ (resp., $y$) by word $w_i$ (resp., $w'_i$), then $(z,u)$ is also synchronized.

\item For each $1 \leq i \leq n$, a finalization rule:
\begin{multline*}
{\rm start}(x),P_a(y,z),P_a(z,u),P_*(u,v),{\rm end}(v),\\
{\rm sync}(y_1,y_2),P_{w_i}(y_1,y),P_{w'_i}(y_2,y)
\, \rightarrow \, \psi,
\end{multline*}
where
$\psi$ is the conjunction of atoms:
\begin{multline*}
P_\#(x,y),P_\#(x,z),P_\#(x,u),P_*(y,v),P_*(z,v),\\
P_b(z,y),P_b(u,z),P_a(u,y),P_b(y,u), \\
{\rm sync}(y,y),{\rm sync}(z,z),{\rm sync}(y,z),{\rm sync}(z,y),\\
{\rm sync}(y,u),{\rm sync}(u,y),{\rm sync}(z,u),{\rm sync}(u,z).
\end{multline*}

This tgd enforces $\chase{q'}{\Sigma}$ to contain
a ``copy'' of $q$ whenever $q'$ encodes a solution for the PCP
instance.
\end{enumerate}

We first show that if the PCP instance has a solution given by the
nonempty sequence $i_1 \dots i_m$, with $1 \leq i_1,\dots,i_m \leq n$,
then there exists an acylic CQ $q'$ whose underlying graph is a
directed path
such that $q \equiv_\Sigma q'$. Let us assume that
$w_{i_1} \dots w_{i_m} = a_1 \dots a_t$, where each $a_i \in \{a,b\}$.
It is not hard to prove that $q \equiv_\Sigma q'$, where
$q'$ is as follows:
\begin{center}
\begin{picture}(0,0)%
\includegraphics{acq-body.pstex}%
\end{picture}%
\setlength{\unitlength}{2052sp}%
\begingroup\makeatletter\ifx\SetFigFont\undefined%
\gdef\SetFigFont#1#2#3#4#5{%
  \reset@font\fontsize{#1}{#2pt}%
  \fontfamily{#3}\fontseries{#4}\fontshape{#5}%
  \selectfont}%
\fi\endgroup%
\begin{picture}(7373,570)(2911,-1846)
\put(9601,-1486){\makebox(0,0)[lb]{\smash{{\SetFigFont{5}{6.0}{\familydefault}{\mddefault}{\updefault}{\color[rgb]{0,0,0}$P_{*}$}%
}}}}
\put(2926,-1786){\makebox(0,0)[lb]{\smash{{\SetFigFont{5}{6.0}{\familydefault}{\mddefault}{\updefault}{\color[rgb]{0,0,0}${\rm start}$}%
}}}}
\put(3451,-1486){\makebox(0,0)[lb]{\smash{{\SetFigFont{5}{6.0}{\familydefault}{\mddefault}{\updefault}{\color[rgb]{0,0,0}$P_{\#}$}%
}}}}
\put(4351,-1486){\makebox(0,0)[lb]{\smash{{\SetFigFont{5}{6.0}{\familydefault}{\mddefault}{\updefault}{\color[rgb]{0,0,0}$P_{a_1}$}%
}}}}
\put(3151,-1411){\makebox(0,0)[lb]{\smash{{\SetFigFont{5}{6.0}{\familydefault}{\mddefault}{\updefault}{\color[rgb]{0,0,0}$x'$}%
}}}}
\put(10051,-1786){\makebox(0,0)[lb]{\smash{{\SetFigFont{5}{6.0}{\familydefault}{\mddefault}{\updefault}{\color[rgb]{0,0,0}${\rm end}$}%
}}}}
\put(7501,-1411){\makebox(0,0)[lb]{\smash{{\SetFigFont{5}{6.0}{\familydefault}{\mddefault}{\updefault}{\color[rgb]{0,0,0}$y'$}%
}}}}
\put(8401,-1411){\makebox(0,0)[lb]{\smash{{\SetFigFont{5}{6.0}{\familydefault}{\mddefault}{\updefault}{\color[rgb]{0,0,0}$z'$}%
}}}}
\put(9301,-1411){\makebox(0,0)[lb]{\smash{{\SetFigFont{5}{6.0}{\familydefault}{\mddefault}{\updefault}{\color[rgb]{0,0,0}$u'$}%
}}}}
\put(10201,-1411){\makebox(0,0)[lb]{\smash{{\SetFigFont{5}{6.0}{\familydefault}{\mddefault}{\updefault}{\color[rgb]{0,0,0}$v'$}%
}}}}
\put(6826,-1486){\makebox(0,0)[lb]{\smash{{\SetFigFont{5}{6.0}{\familydefault}{\mddefault}{\updefault}{\color[rgb]{0,0,0}$P_{a_t}$}%
}}}}
\put(5476,-1561){\makebox(0,0)[lb]{\smash{{\SetFigFont{9}{10.8}{\familydefault}{\mddefault}{\updefault}{\color[rgb]{0,0,0}$\dots$}%
}}}}
\put(7801,-1486){\makebox(0,0)[lb]{\smash{{\SetFigFont{5}{6.0}{\familydefault}{\mddefault}{\updefault}{\color[rgb]{0,0,0}$P_{a}$}%
}}}}
\put(8701,-1486){\makebox(0,0)[lb]{\smash{{\SetFigFont{5}{6.0}{\familydefault}{\mddefault}{\updefault}{\color[rgb]{0,0,0}$P_{a}$}%
}}}}
\end{picture}%

\end{center}
Here, again, $x',y',z',u',v'$ denote the names of the respective
variables of $q'$.  All nodes in the above path are different. The main reason why $q \equiv_\Sigma q'$
holds is because the fact $w$ is a solution implies that there are elements $y_1$ and $y_2$ such that
${\rm sync}(y_1,y_2)$, $P_{w_1}(y_1,y)$ and $P_{w'_i}(y_2,y)$ hold in $\chase{q'}{\Sigma}$. Thus, the finalization rule is fired. This creates a copy of $q$ in $\chase{q'}{\Sigma}$, which allows $q$ to be
homomorphically mapped to $\chase{q'}{\Sigma}$.

Now we prove that if there exists an acyclic CQ $q'$ such that $q
\equiv_\Sigma q'$ and the underlying graph of $q'$ is a directed path,
then the PCP instance has a solution.
Since $q
\equiv_\Sigma q'$, Lemma
\ref{lemma:cq-equiv-tgds} tells us that $\chase{q}{\Sigma}
\equiv \chase{q'}{\Sigma}$ are homomorphically equivalent. But then
$\chase{q'}{\Sigma}$ must contain at least one variable labeled ${\rm
  start}$ and one variable labeled ${\rm end}$. The first variable cannot have incoming edges (otherwise,
$\chase{q'}{\Sigma}$ would not homomorphically map to
$\chase{q}{\Sigma}$), while the second one cannot have outcoming edges
(for the same reason). Thus, it is the first variable $x'$ of
$q'$ that is labeled ${\rm start}$ and the last one $v'$ that is
labeled ${\rm end}$. Further, all edges
reaching $v'$ in $q'$ must be labeled
$P_*$ (otherwise $q'$ does not homomorphically map to $q$).
Thus, this is the label of the last edge of $q'$ that goes
from variable $u'$ to $v'$. Analogously, the edge that leaves $x'$ in
$q'$ is labeled $P_\#$. Further, any other edge in $q'$ is
labeled $P_a$, $P_b$, or ${\rm sync}$.

Notice now that $v'$ must have an incoming edge labeled $P_*$ in
$\chase{q'}{\Sigma}$ from
some node $u''$ that has an outgoing edge with label $P_a$ (since $q$ homomorphically maps
to $\chase{q'}{\Sigma}$).
By definition of $\Sigma$, this could
only have happened if the finalization rule is fired. In particular,
$u'$ is preceded by node
$z'$, which in turn is preceded by $y'$,
and there are elements $y'_1$ and $y'_2$ such that
${\rm sync}(y'_1,y'_2)$, $P_{w_1}(y'_1,y')$ and $P_{w'_i}(y'_2,y')$ hold in $\chase{q'}{\Sigma}$.
 In fact, the
unique path from $y'_1$ (resp., $y'_2$) to $y'$ in $q'$ is labeled $w_i$
(resp., $w'_i$). This means that the atom ${\rm sync}(y'_1,y'_2)$ was
not one of the edges of $q'$, but must have been produced during the
chase by firing the initialization or the synchronization rules, and so on.
This process must finish in the second element $x^*$ of
$q'$. (Recall that ${\rm sync}(x^*,x^*)$ belongs to $\chase{q'}{\Sigma}$
due to the first rule of $\Sigma$).
We conclude that our PCP instance has a solution.
\end{proof}

Theorem \ref{theo:full-undec} rules out any class that captures the class of full tgds, e.g., weakly-guarded, weakly-acyclic and weakly-sticky sets of tgds.
The question that comes up is whether the non-weak versions of the
above classes, namely guarded, non-recursive and sticky sets of tgds,
ensure the decidability of $\SA$, and what is the complexity of the problem. This is the subject of the next two
sections.

\section{Acyclicity-Preserving Chase}\label{sec:apc}

We propose a semantic criterion, the so-called {\em acyclicity-preserving chase}, that ensures the decidability of $\SA(\class{C})$ whenever the problem $\Cont(\class{C})$ is decidable. This criterion guarantees that, starting from an acyclic instance, it is not possible to destroy its acyclicity during the construction of the chase.
We then proceed to show that the class of guarded sets of tgds has acyclicity-preserving chase, which immediately implies the decidability of $\SA(\guarded)$, and we pinpoint the exact complexity of the latter problem. Notice that non-recursiveness and stickiness do not enjoy this property, even in the restrictive setting where only unary and binary predicates can be used; more details are given in the next section. The formal definition of our semantic criterion follows:

\begin{definition}\label{def:apc}(\textbf{Acyclicity-preserving Chase})
We say that a class $\class{C}$ of sets of tgds has {\em acyclicity-preserving chase} if, for every acyclic CQ $q$, set $\dep \in \class{C}$, and chase sequence for $q$ under $\dep$, the result of such a chase sequence is acyclic. \hfill\markfull
\end{definition}

We can then prove the following small query property:

\begin{proposition}\label{prop:apc-small-query-property}
Let $\Sigma$ be a finite set of tgds that belongs to a class that has acyclicity-preserving chase, and $q$ a CQ. If $q$ is semantically acyclic under $\dep$, then there exists an acyclic CQ $q'$, where $|q'| \leq 2 \cdot |q|$, such that $q \equiv_{\dep} q'$.
\end{proposition}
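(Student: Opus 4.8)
The goal is a "small query property": if $q$ is semantically acyclic under $\Sigma$ (where $\Sigma$ lies in a class with acyclicity-preserving chase), then the witnessing acyclic CQ can be taken to have at most $2|q|$ atoms. The plan is to start from \emph{some} acyclic CQ $q'$ with $q \equiv_\Sigma q'$, which exists by hypothesis, and to \emph{shrink} it to an acyclic CQ of bounded size that is still equivalent to $q$ under $\Sigma$. The natural candidate for the bounded-size query is built directly from $\chase{q}{\Sigma}$, exploiting Lemma~\ref{lemma:cq-equiv-tgds}.

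First I would unfold the hypothesis via the chase characterization. Since $q \equiv_\Sigma q'$, Lemma~\ref{lemma:cq-equiv-tgds} gives homomorphisms in both directions between $\chase{q}{\Sigma}$ and $\chase{q'}{\Sigma}$ (matching the distinguished tuples $c(\bar x)$), so the two chases are homomorphically equivalent. Because the class has acyclicity-preserving chase and $q'$ is acyclic, $\chase{q'}{\Sigma}$ is acyclic. The key idea is then to pull an acyclic query back out of the picture: take the homomorphism $h \colon q' \to \chase{q}{\Sigma}$ and let $q''$ be the subquery of $\chase{q}{\Sigma}$ consisting of the image atoms $h(\text{atoms of }q')$, with the elements $c(\bar x)$ designated as free/answer terms. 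Since $q'$ is acyclic and a homomorphic image of an acyclic CQ along a join-tree-respecting map stays acyclic (one transports the join tree of $q'$ across $h$, merging nodes that collapse), $q''$ is acyclic. Moreover $q''$ maps homomorphically back into $q'$ and is mapped onto by $q'$, so $q' \equiv_\Sigma q''$, and hence $q \equiv_\Sigma q''$. Thus $q''$ is an acyclic equivalent query whose atoms all come from $\chase{q}{\Sigma}$.

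The remaining and central task is to bound the size of $q''$ by $2|q|$. Here I would invoke acyclicity of $q''$ together with the fact that $q''$ must retrace the distinguished atoms of $q$: because $q \subseteq_\Sigma q''$ forces $q$ to map homomorphically into $\chase{q''}{\Sigma}$, and $q'' \subseteq_\Sigma q$ forces $q''$ to map into $\chase{q}{\Sigma}$ so that $q$ maps back, one can argue that only atoms "close to'' the image of $q$ are needed. Concretely, I would take a join tree of $q''$, root it, and prune away any maximal subtree whose atoms are not needed to cover the homomorphic image of $q$ (the atoms of $\chase q \Sigma$ that $q$'s own atoms are mapped to) while preserving connectedness of shared variables. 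Pruning a leaf or subtree of a join tree preserves acyclicity, and a standard minimization/covering argument shows that a covering acyclic subquery can be chosen with a number of atoms linear in $|q|$ — the factor $2$ absorbing the auxiliary atoms a join tree may need to keep the query connected (i.e., Steiner-type connector atoms along tree paths between the "useful'' atoms).

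The step I expect to be the main obstacle is precisely this last size bound, namely establishing the constant $2$ rather than merely \emph{some} bound. Showing that an acyclic equivalent query exists inside $\chase{q}{\Sigma}$ is comparatively routine given acyclicity-preservation; the delicate combinatorics is arguing that, when one selects from a join tree a minimal set of atoms covering the $|q|$ "target'' atoms and keeping the selection connected in the tree, at most one extra connector atom per target atom is required, yielding the $2|q|$ bound. I would handle this by a careful leaf-stripping argument on a rooted join tree: repeatedly delete leaves that carry no target atom and whose removal does not disconnect a shared variable, and then charge each surviving non-target atom to a distinct target atom below it in the tree. Verifying that this charging is injective, and that the pruned query remains equivalent to $q$ under $\Sigma$ (which again reduces, via Lemma~\ref{lemma:cq-equiv-tgds} and acyclicity-preserving chase, to checking the two homomorphisms survive pruning), is where the real work lies.
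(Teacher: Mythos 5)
Your construction runs in the wrong direction, and this creates a genuine gap. You extract your candidate small query from $\chase{q}{\Sigma}$, as the homomorphic image of the acyclic equivalent $q'$. But the acyclicity-preserving-chase hypothesis gives you nothing about $\chase{q}{\Sigma}$: it only guarantees that the chase of an \emph{acyclic} query is acyclic, and $q$ itself is merely semantically acyclic, so $\chase{q}{\Sigma}$ may well be cyclic. Your fallback claim --- that a homomorphic image of an acyclic CQ is acyclic because one can ``transport the join tree across $h$, merging nodes that collapse'' --- is false: map the path $R(x,y) \wedge R(y,z) \wedge R(z,w)$ by $w \mapsto x$ and you obtain the triangle $R(x,y) \wedge R(y,z) \wedge R(z,x)$, which admits no join tree (the identification destroys the connectedness condition for the merged variable). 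Consequently your $q''$ need not be acyclic, your later pruning argument has no join tree of $q''$ to operate on, and even granting acyclicity the image of $q'$ has up to $|q'|$ atoms, with $|q'|$ unbounded in terms of $|q|$.

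The paper's proof uses the opposite homomorphism. Since $q \equiv_{\Sigma} q''$ for some acyclic $q''$, Lemma~\ref{lemma:cq-equiv-tgds} gives a homomorphism from $q$ (the small query) into $\chase{q''}{\Sigma}$, and \emph{this} instance is acyclic precisely by the acyclicity-preserving-chase hypothesis. Lemma~\ref{lem:from-cq-to-acq} then extracts from its join tree an acyclic $q'$ with $q' \subseteq q$ and $|q'| \leq 2 \cdot |q|$ that still holds in $\chase{q''}{\Sigma}$; the factor $2$ comes for free because the extracted tree consists only of the at most $|q|$ nodes covering the image of $q$ together with the branching and root nodes above them. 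The containments $q'' \subseteq_{\Sigma} q'$ (again by Lemma~\ref{lemma:cq-equiv-tgds}) and $q' \subseteq q$ then close the argument. Your instinct that the size bound is the delicate step is therefore misplaced: once you chase the acyclic side and pull $q$ into it, the $2 \cdot |q|$ bound is exactly the content of the already-available Lemma~\ref{lem:from-cq-to-acq}.
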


\begin{figure}[t]
 \epsfclipon
  \centerline
  {\hbox{
  \leavevmode
  \epsffile{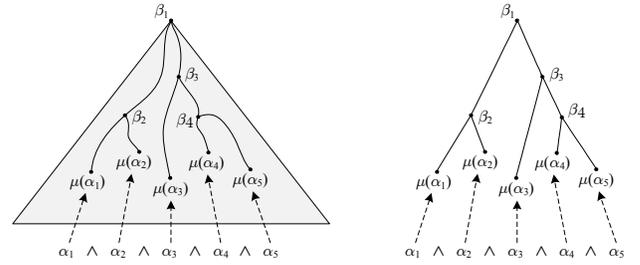}
  }} \epsfclipoff \caption{The compact acyclic query.}
  \label{fig:apc}
  \vspace{-2mm}
\end{figure}

The proof of the above result relies on the following technical lemma, established in~\cite{CaGK13} (using slightly different terminology), that will also be used later in our investigation:

\begin{lemma}\label{lem:from-cq-to-acq}
Let $q({\bar x})$ be a CQ, $I$ an acyclic instance, and ${\bar c}$ a tuple of distinct constants occurring in $I$ such that $q({\bar c})$ holds in $I$. There exists an acyclic CQ $q'({\bar x})$, where $q' \subseteq q$ and $|q'| \leq 2 \cdot |q|$, such that $q'({\bar c})$ holds in $I$.
\end{lemma}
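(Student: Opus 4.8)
The plan is to read $q'$ off a small acyclic sub-instance of $I$ that still contains a homomorphic image of $q$, using the join tree of $I$ to supply acyclicity for free. Fix a homomorphism $h$ from $q$ to $I$ with $h(\bar x)=\bar c$, and a join tree $(T,\lambda)$ witnessing the acyclicity of $I$. For each atom $\alpha$ of $q$ the image $h(\alpha)$ is an atom of $I$, so choose a node $t_\alpha$ of $T$ with $\lambda(t_\alpha)=h(\alpha)$, and let $W=\{t_\alpha:\alpha\in q\}$ be the resulting set of at most $|q|$ \emph{terminals}. Carrying the construction out inside $T$ is what will make the inherited tree structure double as a join tree of the object we build.

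First I would extract a subtree. Let $T'$ be the minimal subtree of $T$ spanning $W$ (the Steiner tree of $W$ in $T$) and let $I'$ be the sub-instance of $I$ formed by the atoms labelling the nodes of $T'$. Then $T'$, with the restriction of $\lambda$, is a join tree of $I'$: for every element $e$ the set of $T$-nodes whose atom mentions $e$ is a connected subtree, and the intersection of two connected subtrees of a tree is again connected, so that set stays connected inside $T'$. Hence $I'$ is acyclic, and since $W\subseteq T'$ we have $h(q)\subseteq I'$. To obtain the size bound I would then prune and contract $T'$: every leaf of $T'$ lies in $W$ (else it could be deleted), and contracting each maximal path of non-terminal degree-two nodes into a single edge preserves the join-tree property (a null shared by the two endpoints of such a path occurs, by connectivity, at every node of the path, so the endpoints stay adjacent after contraction). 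The contracted tree $T''$ has only terminals and branching nodes; since a tree with $\ell$ leaves has at most $\ell-2$ nodes of degree at least three and $\ell\le|W|\le|q|$, the tree $T''$ has at most $2|q|-2$ nodes, so the induced acyclic sub-instance $I''$ (still containing $h(q)$) has at most $2|q|$ atoms.

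Finally I would turn $I''$ into the query: define $q'(\bar x)$ by replacing, in the atoms of $I''$, each distinguished constant $c_i$ by the free variable $x_i$ and every other element by a fresh existential variable, keeping the relation symbols. The inclusion $h(q)\subseteq I''$ then gives a homomorphism from $q$ to $q'$ that is the identity on $\bar x$, so $q'\subseteq q$; and sending each fresh variable back to the element it came from gives a homomorphism from $q'$ to $I$ with $x_i\mapsto c_i$, so $q'(\bar c)$ holds in $I$. Together with $|q'|=|I''|\le 2|q|$, the only remaining obligation is that $q'$ be acyclic.

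This last point is where I expect the real work to be, and it is the main obstacle. The tree $T''$ is only guaranteed to connect the \emph{nulls} of $I''$, whereas the definition of an acyclic CQ (freezing all variables, $\bar x$ included) also demands that the occurrences of each promoted variable $x_i$ form a connected set; since constants are exempt from the connectivity condition in an instance, the nodes of $T''$ carrying a given $c_i$ need not a priori be connected, and a careless promotion can genuinely destroy acyclicity. Overcoming this is the heart of the argument: I would establish that the join tree of the acyclic witness $I$ in fact connects the occurrences of every element it joins on — the distinguished $c_i$ included — a property inherited by the spanning subtree $T'$ and preserved under the degree-two contraction, so that $T''$ is simultaneously a join tree for all variables of $q'$ (this is exactly the compact acyclic shape of Figure~\ref{fig:apc}, and it is here that distinctness of the $c_i$ is used). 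Verifying this connectivity claim, and thereby the acyclicity of $q'$, is the step I would treat with the most care.
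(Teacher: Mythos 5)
Your construction is the paper's construction. The paper proves the statement via an auxiliary result (Lemma~\ref{lem:small-acyclic-instance}): in a join tree of $I$ it takes the subforest spanned by the homomorphic image of $q$ (image nodes and their ancestors), keeps only roots, leaves and nodes with at least two children, and contracts the paths between them --- exactly your Steiner tree followed by pruning and degree-two contraction, with the same leaf/branching-node count giving the $2\cdot|q|$ bound --- and then reads $q'$ off the resulting sub-instance $J$ by renaming every term to a variable.

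The step you defer, that promoting the constants of $\bar c$ to free variables does not destroy acyclicity, is indeed the only delicate point, and it deserves one caution. The paper discharges it inside Lemma~\ref{lem:small-acyclic-instance} by running the connectivity argument for an \emph{arbitrary} term $t$ of $J$: two kept nodes sharing $t$ are joined in the contracted tree by a path whose nodes all lie on the corresponding path of the original join tree $T$ and hence all contain $t$. This is precisely the ``inherited under contraction'' argument you sketch, so your obligation does close along the route you anticipate. However, the blanket claim you propose to prove --- that a join tree of an acyclic instance connects the occurrences of \emph{every} element, constants included --- is not a consequence of the paper's definition (constants are exempt from the connectivity condition) and is false in general: the instance $\{R(c,n_1),\,S(c,n_2),\,T(n_1,n_2)\}$ with $c$ a constant and $n_1,n_2$ nulls is acyclic, but its only join tree is the path $R$--$T$--$S$, which does not connect the two occurrences of $c$. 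What makes the argument go through where the lemma is actually invoked (Propositions~\ref{prop:apc-small-query-property} and~\ref{prop:ucq-rewritability-small-query-property}) is that there $I$ is the chase of a \emph{frozen acyclic CQ}, whose ``constants'' $c(x)$ are frozen variables whose occurrences the supplied join tree does connect; the contraction then transfers this connectivity to $F$ and hence to $q'$. So state the needed hypothesis as a property of the particular join tree you start from (it connects the occurrences of each $c_i$), not as a property of acyclic instances in general.
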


For the sake of completeness, we would like to recall the idea of the construction underlying Lemma~\ref{lem:from-cq-to-acq}, which is illustrated in Figure~\ref{fig:apc}. Assuming that $\alpha_1, \ldots, \alpha_5$ are the atoms of $q$, there exists a homomorphism $\mu$ that maps $\alpha_1 \wedge \ldots \wedge \alpha_5$ to the join tree $T$ of the acyclic instance $I$ (the shaded tree in Figure~\ref{fig:apc}). Consider now the subtree $T_q$ of $T$ consisting of all the nodes in the image of the query and their ancestors. From $T_q$ we extract the smaller tree $F$ also depicted in Figure~\ref{fig:apc}; $F = (V,E)$ is obtained as follows:
\begin{enumerate}\itemsep-\parsep
\item $V$ consists of all the root and leaf nodes of $T_q$, and all the inner nodes of $T_q$ with at least two children; and

\item For every $v,u \in V$, $(v,u) \in E$ iff $u$ is a descendant of $v$ in $T_q$, and the only nodes of $V$ that occur on the unique shortest path from $v$ to $u$ in $T_q$ are $v$ and $u$.
\end{enumerate}
It is easy to verify that $F$ is a join tree, and has at most $2 \cdot |q|$ nodes. The acyclic conjunctive query $q'$ is defined as the conjunction of all atoms occurring in $F$.

Notice that a result similar to Lemma~\ref{lem:from-cq-to-acq} is implicit in~\cite{BaLR14}, where the problem of approximating conjunctive queries is investigated. However, from the results of~\cite{BaLR14}, we can only conclude the existence of an exponentially sized acyclic CQ in the arity of the underlying schema, while Lemma~\ref{lem:from-cq-to-acq} establishes the existence of an acyclic query of linear size. This is decisive for our later complexity analysis.
Having the above lemma in place, it is not difficult to establish Proposition~\ref{prop:apc-small-query-property}.

\medskip

\begin{proof}[of Proposition~\ref{prop:apc-small-query-property}]
Since, by hypothesis, $q$ is semantically acyclic under $\dep$, there exists an acyclic CQ $q''({\bar x})$ such that $q \equiv_{\dep} q''$. By Lemma~\ref{lemma:cq-equiv-tgds}, $c({\bar x})$ belongs to the evaluation of $q$ over $\chase{q''}{\dep}$. Recall that $\dep$ belongs to a class that has acyclicity-preserving chase, which implies that $\chase{q''}{\dep}$ is acyclic. Hence, by Lemma~\ref{lem:from-cq-to-acq}, there exists an acyclic CQ $q'$, where $q' \subseteq q$ and $|q'| \leq 2 \cdot |q|$, such that $c({\bar x})$ belongs to the evaluation of $q'$ over $\chase{q''}{\dep}$. By Lemma~\ref{lemma:cq-equiv-tgds}, $q'' \subseteq_{\dep} q'$, and therefore $q \subseteq_{\dep} q'$. We conclude that $q \equiv_{\dep} q'$, and the claim follows.
\end{proof}

It is clear that Proposition~\ref{prop:apc-small-query-property} provides a decision procedure for $\SA(\class{C})$ whenever $\class{C}$ has acyclicity-preserving chase and $\Cont(\class{C})$ is decidable. Given a CQ $q$, and a finite set $\dep \in \class{C}$:
\begin{enumerate}
\item Guess an acyclic CQ $q'$ of size at most $2 \cdot |q|$; and

\item Verify that $q \subseteq_{\dep} q'$ and $q' \subseteq_{\dep} q$.
\end{enumerate}

The next result follows:

\begin{theorem}\label{theo:apc}
Consider a class $\class{C}$ of sets of tgds that has acyclicity-preserving chase. If the problem $\Cont(\class{C})$ is decidable, then $\SA(\class{C})$ is also decidable.
\end{theorem}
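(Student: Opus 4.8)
The plan is to convert Proposition~\ref{prop:apc-small-query-property} directly into a brute-force decision procedure. The decisive point is that the small query property collapses the (a priori infinite) search for an acyclic equivalent of $q$ into a search over a \emph{finite} candidate space, after which the decidability of $\Cont(\class{C})$ lets us test each candidate.

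Concretely, given a CQ $q$ and a finite set $\dep \in \class{C}$, I would proceed in two stages. First, enumerate all acyclic CQs $q'$ over the (finite) schema of $q$ and $\dep$ having at most $2 \cdot |q|$ atoms. Since the schema is finite, the number of atoms is bounded by $2 \cdot |q|$, and hence the number of distinct variables is bounded as well, this collection is finite up to renaming of variables, so the enumeration terminates. Second, for each candidate $q'$, decide whether $q \equiv_{\dep} q'$; by definition this amounts to the two containment tests $q \subseteq_{\dep} q'$ and $q' \subseteq_{\dep} q$, each an instance of $\Cont(\class{C})$ and therefore decidable by hypothesis. The procedure answers ``yes'' precisely when some candidate passes both tests.

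Correctness then splits cleanly. Soundness is immediate: if the procedure returns ``yes'' it has exhibited an acyclic CQ $q'$ with $q \equiv_{\dep} q'$, so $q$ is semantically acyclic under $\dep$. Completeness is exactly where the acyclicity-preserving-chase hypothesis enters, through Proposition~\ref{prop:apc-small-query-property}: if $q$ is semantically acyclic under $\dep$, then there is an acyclic witness of size at most $2 \cdot |q|$, which is (up to renaming) among the enumerated candidates and will survive both containment checks.

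I expect no genuine obstacle here, since all the real work is already packaged in Proposition~\ref{prop:apc-small-query-property}; the only thing left to justify is that the candidate space is finite and effectively enumerable, which is routine once the schema is fixed and the atom count is bounded. The theorem follows.
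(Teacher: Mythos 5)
Your proof is correct and follows essentially the same route as the paper: the paper likewise derives Theorem~\ref{theo:apc} directly from Proposition~\ref{prop:apc-small-query-property} via a guess-and-check procedure that picks an acyclic CQ $q'$ with $|q'| \leq 2\cdot|q|$ and verifies $q \subseteq_{\dep} q'$ and $q' \subseteq_{\dep} q$ using the decidability of $\Cont(\class{C})$. Your deterministic enumeration of the finite candidate space is just the de-nondeterminized version of the same argument.
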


\subsection{Guardedness}

We proceed to show that $\SA(\guarded)$ is decidable and has the same complexity as CQ containment under guarded tgds:

\begin{theorem}\label{the:sa-guarded}
$\SA(\guarded)$ is complete for \twoexptime. It becomes
\EXP-complete if the arity of the schema is fixed, and \NP-complete if the schema is fixed.
\end{theorem}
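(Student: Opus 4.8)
The plan is to combine the machinery of the previous section with the known complexity of guarded CQ containment (Proposition~\ref{prop:qa-guarded}). For the upper bounds I would first show that the class $\guarded$ has acyclicity-preserving chase in the sense of Definition~\ref{def:apc}; this is the key ingredient, since it lets me invoke Theorem~\ref{theo:apc} for decidability and, more importantly, the small-query property of Proposition~\ref{prop:apc-small-query-property}. Given the latter, the decision procedure is exactly the guess-and-verify algorithm spelled out after Proposition~\ref{prop:apc-small-query-property}: guess an acyclic CQ $q'$ with $|q'| \leq 2 \cdot |q|$ and verify both $q \subseteq_{\dep} q'$ and $q' \subseteq_{\dep} q$ using the algorithm for $\Cont(\guarded)$. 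The matching lower bounds will come from Proposition~\ref{prop:gral-reduction}, which reduces restricted containment to semantic acyclicity.

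The heart of the argument, and the step I expect to be the main obstacle, is establishing that $\guarded$ has acyclicity-preserving chase. I would argue by induction on the length of the chase sequence that, starting from an acyclic CQ $q$, every intermediate instance stays acyclic. For the inductive step, consider a single chase step $I \xrightarrow{\tau,(\bar a,\bar b)} J$ with $\tau = \phi(\bar x,\bar y) \rightarrow \exists \bar z\, \psi(\bar x,\bar z)$ guarded, and let $g$ be the guard of $\tau$. Since $g$ contains every body variable, the image $h(g)$ of the guard is a single atom of $I$, so it labels some node $t_g$ in a join tree $T$ of $I$; in particular, all old elements shared between $J \setminus I$ and $I$ already occur together in $\lambda(t_g)$, while the nulls freshly introduced for $\bar z$ occur only in the new head atoms. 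The idea is then to graft the new atoms onto $T$ in the neighbourhood of $t_g$, exploiting that the only elements they share with $I$ all live in $\lambda(t_g)$ and that the remaining nulls are private to the newly added part. The delicate bookkeeping is precisely clause~2 of the join-tree definition: each old null reaching a new atom is already connected to $t_g$ in $T$ and stays connected after the new nodes are attached adjacent to $t_g$, and keeping careful track of which shared elements are constants (for which no connectivity is required) versus nulls is where the care is needed.

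For the upper bounds I would analyse the cost of the guess-and-verify procedure in each of the three regimes. In all cases the guessed $q'$ has at most $2 \cdot |q|$ atoms, hence is of polynomial size, and there are at most single-exponentially many candidates over the given schema. With an unbounded schema each of the two containment checks is in \twoexptime\ (Proposition~\ref{prop:qa-guarded}); enumerating the single-exponentially many candidates and running two \twoexptime\ checks per candidate stays within \twoexptime. When the arity is fixed the number of candidate atoms is polynomial in $|q|$, so the number of candidate queries is single-exponential, and each containment check is in \EXP, giving an \EXP\ bound overall. When the schema is fixed, $q'$ is of polynomial size and each containment check is in \NP, so guessing $q'$ together with the witnesses for both containments yields a single \NP\ procedure.

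For the lower bounds I would instantiate Proposition~\ref{prop:gral-reduction}: given an instance $(\dep,q,q')$ of the restricted containment problem $\RCont(\guarded)$ satisfying its hypotheses, the CQ $q \wedge q'$ is semantically acyclic under $\dep$ iff $q \subseteq_{\dep} q'$, so $\SA(\guarded)$ is at least as hard as $\RCont(\guarded)$. It then remains to transfer the known hardness of $\Cont(\guarded)$ (\twoexptime-hard in general, \EXP-hard for fixed arity, \NP-hard for fixed schema) to the restricted variant $\RCont(\guarded)$, by reworking the standard hardness reductions so that the produced tgds are body-connected and the two CQs are Boolean, connected, share no variables, with the first acyclic and the second not semantically acyclic. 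As the text already observes, $\RCont$ is no easier than general containment, which supplies the three matching lower bounds and closes the proof.
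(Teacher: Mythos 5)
Your upper-bound argument is the paper's own: Proposition~\ref{pro:guarded-tgds-apc} is proved there by exactly your grafting idea, formalized as the \emph{guarded chase forest}, where each atom produced at a chase step is attached below the node carrying the image of the guard, so that every term the new atoms share with the old instance already lives in that node; and the complexity analysis of the guess-and-verify procedure on a candidate $q'$ with $|q'|\leq 2\cdot|q|$ --- including the observation that for a fixed schema the containment checks fold into a single \NP\ guess --- matches the paper's.

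Where your proposal stops short is the lower bound. You correctly invoke Proposition~\ref{prop:gral-reduction}, but then defer the hardness of $\RCont(\guarded)$ to ``reworking the standard hardness reductions.'' This is precisely the step the paper does \emph{not} leave to ad hoc reworking: it introduces a generic \emph{connecting operator} mapping any instance of $\ABCont(\class{C})$ (containment with an acyclic Boolean left-hand query, for which the known $\Cont(\guarded)$ lower bounds already hold) to an instance of $\RCont(\class{C})$, for every class closed under the operation. Concretely, every atom of $q$, of $q'$ and of the tgds is padded with one fresh common variable $w$ (making everything connected and body-connected while preserving guardedness); $\mathsf{c}(q)$ additionally carries $\mi{aux}(w,w)$, so it stays acyclic; and $\mathsf{c}(q')$ carries an $\mi{aux}$-triangle $\mi{aux}(w,u),\mi{aux}(u,v),\mi{aux}(v,w)$ over a predicate the tgds never mention, which is what certifies the hypothesis of Proposition~\ref{prop:gral-reduction} that the right-hand query is \emph{not} semantically acyclic under the constraints. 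That hypothesis is a semantic condition about the very problem being decided, so it cannot simply be read off a reworked containment reduction; without a device like the triangle gadget your lower-bound argument is incomplete.
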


The rest of this section is devoted to establish Theorem~\ref{the:sa-guarded}.

\subsubsection*{\textbf{Decidability and Upper Bounds}}

We first show that:

\begin{proposition}\label{pro:guarded-tgds-apc}
$\guarded$ has acyclicity-preserving chase.
\end{proposition}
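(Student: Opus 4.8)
The plan is to prove the statement by building a join tree for $\chase{q}{\Sigma}$ \emph{incrementally} along the chase sequence, exploiting the defining feature of guardedness: whenever a guarded tgd fires, the image of its guard is a single atom that already contains every frontier term propagated into the head. Throughout I assume, as is standard for guarded tgds, that each dependency has a single atom in its head; as I explain at the end, this is exactly where the argument gets its leverage, and it is essential.

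Fix an acyclic CQ $q$, a set $\Sigma \in \guarded$, and a chase sequence $I_0 \xrightarrow{\tau_0,\bar c_0} I_1 \xrightarrow{\tau_1,\bar c_1} \cdots$ with $I_0 = q$ (variables replaced by fresh constants). Since $q$ is acyclic, $I_0$ admits a join tree $T_0$. I would maintain the invariant that each $I_i$ admits a join tree $T_i$ with $T_i \subseteq T_{i+1}$, i.e.\ $T_{i+1}$ extends $T_i$ only by adjoining new nodes as children, so that the limit $T = \bigcup_i T_i$ is a well-defined tree and $\lambda = \bigcup_i \lambda_i$ its labelling.

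The inductive step is the heart of the proof. Consider $I_i \xrightarrow{\tau_i,h} I_{i+1}$, where $\tau_i$ has guard $g$ and single head atom $\beta$, and $h$ is the trigger. Let $\alpha = h(g) \in I_i$. By guardedness $\alpha$ contains $h(x)$ for every frontier variable $x$, so the new atom $\beta' = h(\beta)$ is built only from terms of $\alpha$ together with the fresh nulls $\bar n$ introduced for the existential variables. Pick any node $t_\alpha$ of $T_i$ with $\lambda(t_\alpha) = \alpha$ (one exists since $T_i$ is a join tree) and attach a new leaf $s$ with $\lambda(s) = \beta'$ as a child of $t_\alpha$. I then verify the join-tree conditions for $T_{i+1}$: coverage is immediate; for a fresh null in $\bar n$ the node set is exactly $\{s\}$, hence connected; for any older null occurring in $\beta'$, that null is a frontier image and so already lies in $\alpha = \lambda(t_\alpha)$, whence its node set was connected and contained $t_\alpha$, and adjoining the adjacent node $s$ preserves connectivity; all other null node sets are untouched. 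Passing to the limit, every atom appears at some finite stage, and every null's node set is an increasing union of connected subtrees, each obtained by adjoining a node adjacent to the previous one, hence connected; so $T$ is a join tree of $\bigcup_i I_i = \chase{q}{\Sigma}$.

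The main obstacle is precisely the connectivity of the nulls created in a single chase step, and this is what forces the single-head-atom viewpoint. If a head contained several atoms sharing a fresh existential null (think of a ``triangle'' head $R(z_1,z_2), R(z_2,z_3), R(z_3,z_1)$), attaching the head atoms as siblings under $t_\alpha$ would break the connectivity requirement for that null, and indeed the resulting instance need not be acyclic; thus the property genuinely relies on the single-head-atom normal form, which is the standard convention for guarded tgds. Under that convention the construction above goes through verbatim, and combined with Theorem~\ref{theo:apc} it yields the decidability and upper bounds claimed for $\SA(\guarded)$.
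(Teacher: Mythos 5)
Your proof is correct and is in essence the paper's own argument: the paper builds the \emph{guarded chase forest}, in which every atom produced at a chase step is attached as a child of the image of the guard, shows that each component rooted at an atom $\alpha$ of $q$ is a join tree, and grafts these components onto a join tree of $q$; your step-by-step invariant constructs exactly that tree incrementally, and your verification of the two join-tree conditions (fresh nulls live only in the new leaf, older nulls in the new atom are frontier images and hence already occur in the guard image) matches the paper's reasoning. The one genuine difference is your explicit single-head-atom hypothesis, and here you are more careful than the paper. The paper's tgds officially allow conjunctions of atoms in the head, and its connectivity argument (a term $t$ occurring in a node but not in its parent would contradict guardedness) is only valid when $t$ is the image of a frontier variable; it silently skips the case of a fresh existential null shared by two sibling head atoms, which is exactly the configuration your triangle example exhibits: chasing the acyclic query $R(x)$ with the guarded tgd $R(x)\rightarrow\exists z_1\exists z_2\exists z_3\,\big(S(z_1,z_2)\wedge S(z_2,z_3)\wedge S(z_3,z_1)\big)$ yields a non-acyclic instance, so the proposition is false for unrestricted multi-atom heads. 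Your proof therefore establishes the statement under the single-head normal form (the usual convention in the guarded literature the paper builds on), and your observation that this normal form is essential is a substantive point the paper's write-up glosses over; the only caveat is that the standard normalization, which introduces an auxiliary atom $P_\tau(\bar x,\bar z)$ collecting all head terms, changes the schema and the chase, so one should either take the normal form as part of the definition of $\guarded$ or add the (easy) remark that the normalized chase is acyclic and projects onto the original one.
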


The above result, combined with Theorem~\ref{theo:apc}, implies the decidability of $\SA(\guarded)$. However, this does not say anything about the complexity of the problem. With the aim of pinpointing the exact complexity of
$\SA(\guarded)$, we proceed to analyze the complexity of the decision procedure underlying Theorem~\ref{theo:apc}.
Recall that, given a CQ $q$, and a finite set $\dep \in \guarded$, we guess an acyclic CQ $q'$ such that $|q'| \leq 2 \cdot |q|$, and verify that $q \equiv_{\dep} q'$. It is clear that this algorithm runs in non-deterministic polynomial time with a call to a $\C$ oracle, where $\C$ is a complexity class powerful enough for solving $\Cont(\guarded)$. Thus, Proposition~\ref{prop:qa-guarded} implies that $\SA(\guarded)$ is in \twoexptime, in \EXP~if the arity of the schema is fixed, and in \NP~if the schema is fixed.
One may ask why for a fixed schema the obtained upper bound is \NP~and not $\Sigma_{2}^{P}$. Observe that the oracle is called only once in order to solve $\Cont(\guarded)$, and since $\Cont(\guarded)$ is already in \NP~when the schema is fixed, it is not really needed in this case.

\subsubsection*{\textbf{Lower Bounds}}

Let us now show that the above upper bounds are optimal. By Proposition~\ref{prop:gral-reduction}, $\RCont(\class{\guarded})$ can be reduced in constant time to $\SA(\guarded)$. Thus, to obtain the desired lower bounds, it suffices to reduce in polynomial time $\Cont(\guarded)$ to $\RCont(\class{\guarded})$.
Interestingly, the lower bounds given in Section~\ref{sec:prelim} for $\Cont(\guarded)$ hold even if we focus on Boolean CQs and the left-hand side query is acyclic. In fact, this is true, not only for guarded, but also for non-recursive and sticky sets of tgds.
Let $\ABCont(\class{C})$ be the following problem: Given an acyclic Boolean CQ $q$, a Boolean CQ $q'$, and a finite set $\dep \in \class{C}$ of tgds, is it the case $q \subseteq_{\dep} q'$?

From the above discussion, to establish the desired lower bounds for guarded sets of tgds (and also for the other classes of tgds considered in this work), it suffices to reduce in polynomial time $\ABCont$ to $\RCont$. To this end, we introduce the so-called connecting operator, which provides a generic reduction from $\ABCont$ to $\RCont$.

\medskip
\noindent \paragraph{Connecting operator.}
Consider an acyclic Boolean CQ $q$, a Boolean CQ $q'$, and a finite set $\dep$ of tgds. We assume that both $q,q'$ are of the form $\exists \bar y \big(R_1(\bar v_1) \wedge \dots \wedge R_m(\bar v_m)\big)$.
The application of the {\em connecting operator} on $(q,q',\dep)$ returns the triple $(\mathsf{c}(q),\mathsf{c}(q'),\mathsf{c}(\dep))$, where
\begin{itemize}
\item $\mathsf{c}(q)$ is the query
    \[
    \exists \bar y \exists w \big(R_{1}^{\star}(\bar v_1,w) \wedge \dots \wedge R_{m}^{\star}(\bar v_m,w) \wedge \mi{aux}(w,w)\big),
    \]
    where $w$ is a new variable not in $q$, each $R_{i}^{\star}$ is a new predicate, and also $\mi{aux}$ is a new binary predicate;

\item $\mathsf{c}(q')$ is the query
    \begin{multline*}
    \exists \bar y \exists w \exists u \exists v \big(R_{1}^{\star}(\bar v_1,w) \wedge \dots \wedge R_{m}^{\star}(\bar v_m,w) \, \wedge\\
    \hspace{23mm} \mi{aux}(w,u)\ \wedge\ \mi{aux}(u,v)\ \wedge\ \mi{aux}(v,w)\big),
    \end{multline*}
    where $w,u,v$ are new variables not in $q$; and

\item Finally, $\mathsf{c}(\dep) = \{\mathsf{c}(\tau) \mid \tau \in \dep\}$, where for a tgd $\tau = \phi(\bar x,\bar y) \rightarrow \exists \bar z \psi(\bar x,\bar z)$, $\mathsf{c}(\tau)$ is the tgd
    \[
    \phi^\star(\bar x,\bar y,w) \rightarrow \exists \bar z \psi^\star(\bar x,\bar z,w),
    \]
    with $\phi^\star(\bar x,\bar y,w), \psi^\star(\bar x,\bar z,w)$ be the conjunctions obtained from $\phi(\bar x,\bar y), \psi(\bar x,\bar z)$, respectively, by replacing each atom $R(x_1,\ldots,x_n)$ with $R^\star(x_1,\ldots,x_n,w)$, where $w$ is a new variable not occurring in $\tau$.
\end{itemize}
This concludes the definition of the connecting operator. A class $\class{C}$ of sets of tgds is {\em closed under connecting} if, for every set $\dep \in \class{C}$, $\mathsf{c}(\dep) \in \class{C}$.
It is easy to verify that $\mathsf{c}(q)$ remains acyclic and is connected, $\mathsf{c}(q')$ is connected and not semantically acyclic under $\mathsf{c}(\dep)$, and $\mathsf{c}(\dep)$ is a set of body-connected tgds.
It can be also shown that $q \subseteq_{\dep} q'$ iff $\mathsf{c}(q) \subseteq_{\mathsf{c}(\dep)} \mathsf{c}(q')$.

From the above discussion, it is clear that the connecting operator provides a generic polynomial time reduction from $\ABCont(\class{C})$ to $\RCont(\class{C})$, for every class $\class{C}$ of sets of tgds that is closed under connecting. Then:

\begin{proposition}\label{pro:generic-lower-bound}
Let $\class{C}$ be a class of sets of tgds that is closed under connecting such that $\ABCont(\class{C})$ is hard for a complexity class $\C$ that is closed under polynomial time reductions. Then, $\SA(\class{C})$ is also $\C$-hard.
\end{proposition}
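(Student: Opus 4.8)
The plan is to produce a single polynomial-time many-one reduction from $\ABCont(\class{C})$ to $\SA(\class{C})$. Since $\ABCont(\class{C})$ is assumed $\C$-hard and polynomial-time reductions compose (and $\C$ is closed under them), exhibiting such a reduction immediately transfers the hardness to $\SA(\class{C})$. The reduction is obtained by chaining together two reductions that are, in effect, already in place: the connecting operator, which maps $\ABCont(\class{C})$ to $\RCont(\class{C})$, followed by the reduction of $\RCont(\class{C})$ to $\SA(\class{C})$ supplied by Proposition~\ref{prop:gral-reduction}.

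First I would take an arbitrary instance $(q,q',\dep)$ of $\ABCont(\class{C})$, where $q$ is an acyclic Boolean CQ, $q'$ a Boolean CQ, and $\dep \in \class{C}$, and apply the connecting operator to obtain $(\mathsf{c}(q),\mathsf{c}(q'),\mathsf{c}(\dep))$. Because $\class{C}$ is closed under connecting, $\mathsf{c}(\dep) \in \class{C}$, and by the recorded properties of the operator it is a set of body-connected tgds, $\mathsf{c}(q)$ is acyclic and connected, $\mathsf{c}(q')$ is connected and not semantically acyclic under $\mathsf{c}(\dep)$, and, crucially, $q \subseteq_\dep q'$ iff $\mathsf{c}(q) \subseteq_{\mathsf{c}(\dep)} \mathsf{c}(q')$. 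The whole construction is clearly computable in polynomial time. The only bookkeeping point is that both $\RCont$ and the conjunction used below require $\mathsf{c}(q)$ and $\mathsf{c}(q')$ to share no variables; since both queries are Boolean, I would simply rename the bound variables of $\mathsf{c}(q')$ to fresh ones, which leaves it connected and not semantically acyclic and preserves the containment relationship.

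Next I would feed the resulting triple into Proposition~\ref{prop:gral-reduction}, all of whose hypotheses are now met: $\mathsf{c}(\dep)$ is body-connected, $\mathsf{c}(q)$ and $\mathsf{c}(q')$ are Boolean, connected, and variable-disjoint, $\mathsf{c}(q)$ is acyclic, and $\mathsf{c}(q')$ is not semantically acyclic under $\mathsf{c}(\dep)$. Hence $\mathsf{c}(q) \subseteq_{\mathsf{c}(\dep)} \mathsf{c}(q')$ iff $\mathsf{c}(q) \wedge \mathsf{c}(q')$ is semantically acyclic under $\mathsf{c}(\dep)$. Composing this with the equivalence from the previous step gives $q \subseteq_\dep q'$ iff $(\mathsf{c}(q) \wedge \mathsf{c}(q'),\, \mathsf{c}(\dep))$ is a yes-instance of $\SA(\class{C})$, which is exactly the required polynomial-time reduction, and $\C$-hardness of $\SA(\class{C})$ follows.

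Since essentially all the technical content is already packaged in the stated properties of the connecting operator and in Proposition~\ref{prop:gral-reduction}, I do not expect a genuine obstacle here; the proof is a composition of reductions. The only things that demand care are verifying that every hypothesis of Proposition~\ref{prop:gral-reduction} is inherited from the operator's properties and the trivial variable-renaming enforcing disjointness. The delicate work — establishing that $\mathsf{c}(q')$ fails to be semantically acyclic and that the connecting operator preserves containment — lies in the correctness argument for the operator itself, which has been carried out separately before the proposition.
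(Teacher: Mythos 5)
Your proposal is correct and follows exactly the paper's route: the connecting operator gives a polynomial-time reduction from $\ABCont(\class{C})$ to $\RCont(\class{C})$ (using closure under connecting and the recorded properties of the operator), and Proposition~\ref{prop:gral-reduction} then reduces $\RCont(\class{C})$ to $\SA(\class{C})$ by forming the conjunction $\mathsf{c}(q) \wedge \mathsf{c}(q')$. Your explicit check of the hypotheses of Proposition~\ref{prop:gral-reduction} and the variable-renaming remark are sensible housekeeping that the paper leaves implicit.
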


\paragraph{Back to guardedness.} It is easy to verify that the class of guarded sets of tgds is closed under connecting. Thus, the lower bounds for $\SA(\guarded)$ stated in Theorem~\ref{the:sa-guarded} follow from Propositions~\ref{prop:qa-guarded} and~\ref{pro:generic-lower-bound}. Note that, although Proposition~\ref{prop:qa-guarded} refers to $\Cont(\guarded)$, the lower bounds hold for $\ABCont(\guarded)$; this is implicit in~\cite{CaGK13}.

As said in Section~\ref{sec:prelim}, a key subclass of guarded sets of tgds is the class of linear tgds, i.e., tgds whose body consists of a single atom, which in turn subsume the well-known class of inclusion dependencies. By exploiting the non-deterministic procedure employed for $\SA(\guarded)$, and the fact that both linear tgds and inclusion dependencies are closed under connecting, we can show that:

\begin{theorem}\label{the:sa-linear}
$\SA(\class{C})$, for $\class{C} \in \{\linear,\id\}$, is complete for \PSPACE. It becomes \NP-complete if the arity of the schema is fixed.
\end{theorem}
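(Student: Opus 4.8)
The plan is to reuse the non-deterministic decision procedure underlying Theorem~\ref{theo:apc}, instantiated with the known complexity of $\Cont(\class{C})$ for $\class{C} \in \{\linear,\id\}$, and to match it with lower bounds obtained through the connecting operator.

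For the upper bounds, note first that every linear tgd, and in particular every inclusion dependency, is guarded. Hence, by Proposition~\ref{pro:guarded-tgds-apc}, both $\linear$ and $\id$ have acyclicity-preserving chase, and Proposition~\ref{prop:apc-small-query-property} applies. The resulting algorithm guesses an acyclic CQ $q'$ with $|q'| \leq 2 \cdot |q|$ and then verifies $q \subseteq_\dep q'$ and $q' \subseteq_\dep q$. In the general case $\Cont(\class{C})$ is \PSPACE-complete, so the procedure performs a polynomially-sized guess followed by two \PSPACE~containment checks; since every candidate $q'$ is of polynomial size it can be enumerated within polynomial space, and running the two checks for each candidate keeps the whole computation in \PSPACE~(an \NP~search whose certificates are validated by a \PSPACE~procedure does not leave \PSPACE). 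When the arity of the schema is fixed, $\Cont(\class{C})$ lies in \NP, so each of the two containments that must \emph{hold} admits a polynomially-sized certificate verifiable in polynomial time. Exactly as in the remark following Theorem~\ref{the:sa-guarded}, the query $q'$ and these two certificates are all existential guesses and can be folded into a single \NP~computation; no alternation appears because we only need the containments to hold (an \NP, rather than a \co\NP, test), so we obtain \NP~and not $\Sigma_{2}^{P}$.

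For the lower bounds, I would apply Proposition~\ref{pro:generic-lower-bound}, which yields $\C$-hardness of $\SA(\class{C})$ whenever $\class{C}$ is closed under connecting and $\ABCont(\class{C})$ is $\C$-hard. Closure under connecting is straightforward: the connecting operator only appends one shared fresh variable $w$ to every atom of a tgd. If $\tau$ is linear its body stays a single atom, so $\mathsf{c}(\tau)$ is linear; if $\tau$ is an inclusion dependency, the extra position carrying $w$ creates no repeated variable in the single body atom nor in the single head atom, so $\mathsf{c}(\tau)$ is again an inclusion dependency (and $\mathsf{c}(\dep)$ never mentions $\mi{aux}$). It then remains to observe that the \PSPACE~(resp.\ \NP~for fixed arity) hardness of $\Cont(\linear)$ and $\Cont(\id)$ from~\cite{JoKl84} already holds for $\ABCont$, that is, with a Boolean and acyclic left-hand query: in the standard reductions the chased query may be taken to be a single atom, which is trivially acyclic. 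Feeding this into Proposition~\ref{pro:generic-lower-bound} transfers \PSPACE-hardness in general and \NP-hardness for fixed arity to $\SA(\class{C})$, matching the upper bounds.

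The delicate step is the fixed-arity \NP~upper bound. The naive reading of the procedure---guess $q'$ and then consult a containment oracle---only places the problem in $\Sigma_{2}^{P}$; the improvement to \NP~relies crucially on $\Cont(\class{C})$ being \emph{in} \NP\ (not merely decidable), which lets us replace the oracle calls by in-lined existential certificates and collapse the two levels into one. A secondary point that must be checked is that the containment lower bounds for $\linear$ and $\id$ persist under the restriction to an acyclic Boolean left query, so that $\ABCont$ genuinely inherits them.
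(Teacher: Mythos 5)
Your proposal is correct and follows essentially the same route as the paper: the upper bounds come from the guess-and-check procedure of Theorem~\ref{theo:apc} (valid since linear tgds are guarded and hence have acyclicity-preserving chase), instantiated with the \PSPACE{} (resp.\ \NP{} for fixed arity) complexity of $\Cont(\linear)$ and $\Cont(\id)$, and the lower bounds come from Proposition~\ref{pro:generic-lower-bound} via closure of $\linear$ and $\id$ under the connecting operator together with $\ABCont$-hardness. Your explicit verification that the connecting operator preserves linearity and the inclusion-dependency shape, and your remark that the collapse to \NP{} (rather than $\Sigma_2^P$) for fixed arity hinges on $\Cont$ being \emph{in} \NP, are exactly the details the paper leaves implicit.
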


\section{UCQ Rewritability}
\label{sec:ucq}

Even though the acyclicity-preserving chase criterion was very useful for solving $\SA(\guarded)$, it is of little use for non-recursive and sticky sets of tgds. As we show in the next example, neither $\class{NR}$ nor $\class{S}$ have acyclicity-preserving chase:

\begin{example}\label{exa:nr-sticky-violate-apc}
Consider the acyclic CQ and the tgd
\[
q = \exists \bar x \big(P(x_1) \wedge \ldots \wedge P(x_n)\big) \quad \tau = P(x),P(y) \ra R(x,y),
\]
where $\{\tau\}$ is both non-recursive and sticky, but not guarded.
In $\chase{q}{\{\tau\}}$ the predicate $R$ holds all the possible pairs that can be formed using the terms $x_1,\ldots,x_n$. Thus, in the Gaifman graph of $\chase{q}{\{\tau\}}$ we have an $n$-clique, which means that is highly cyclic.
Notice that our example illustrates that also other favorable properties of the CQ are destroyed after chasing with non-recursive and sticky sets of tgds, namely bounded (hyper)tree width.\footnote{Notice that guarded sets of tgds over predicates of bounded arity preserve the bounded hyper(tree) width of the query.} \hfill\markfull
\end{example}

In view of the fact that the methods devised in Section~\ref{sec:apc} cannot be used for non-recursive and sticky sets of tgds, new techniques must be developed. Interestingly, $\class{NR}$ and $\class{S}$ share an important property, which turned out to be very useful for semantic acyclicity: {\em UCQ rewritability}.
Recall that a {\em union of conjunctive queries (UCQ)} is an expression of the form $Q(\bar x) = \bigvee_{1 \leq i \leq n} q_i(\bar x)$, where each $q_i$ is a CQ over the same schema $\sigma$. The evaluation of $Q$ over an instance $I$, denoted $Q(I)$, is defined as $\bigcup_{1 \leq i \leq n} q_i(I)$.
The formal definition of UCQ rewritability follows:

\begin{definition}\label{def:ucq-rewritability}(\textbf{UCQ Rewritability})
A class $\class{C}$ of sets of tgds is {\em UCQ rewritable} if, for every CQ $q$, and $\dep \in \class{C}$, we can construct a UCQ $Q$ such that: For every CQ $q'(\bar x)$, $q' \subseteq_{\dep} q$ iff $c(\bar x) \in Q(D_{q'})$, with $D_{q'}$ be the database obtained from $q'$ after replacing each variable $x$ with $c(x)$. \hfill\markfull
\end{definition}

In other words, UCQ rewritability suggests that query containment can be reduced to the problem of UCQ evaluation. It is important to say that this reduction depends only on the right-hand side CQ and the set of tgds, but not on the left-hand side query. This is crucial for establishing the desirable small query property whenever we focus on sets of tgds that belong to a UCQ rewritable class.
At this point, let us clarify that the class of guarded sets of tgds is not UCQ rewritable, which justifies our choice of a different semantic property, that is, acyclicity-preserving chase, for its study. 

%
%

Let us now show the desirable small query property. For each UCQ rewritable class $\class{C}$ of sets of tgds, there exists a computable function $f_{\class{C}}(\cdot,\cdot)$ from the set of pairs consisting of a CQ and a set of tgds in $\class{C}$ to positive integers such that the following holds: For every CQ $q$, set $\dep \in \class{C}$, and UCQ rewriting $Q$ of $q$ and $\dep$, the {\em height} of $Q$, that is, the maximal size of its disjuncts, is at most $f_{\class{C}}(q,\dep)$. The existence of the function $f_{\class{C}}$ follows by the definition of UCQ rewritability. Then, we show the following:

\begin{proposition}\label{prop:ucq-rewritability-small-query-property}
Let $\class{C}$ be a UCQ rewritable class, $\dep \in \class{C}$ a finite set of tgds, and $q$ a CQ. If $q$ is semantically acyclic under $\dep$, then there exists an acyclic CQ $q'$, where $|q'| \leq 2 \cdot f_{\class{C}}(q,\dep)$, such that $q \equiv_{\dep} q'$.
\end{proposition}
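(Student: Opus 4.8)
The plan is to mirror the proof of Proposition~\ref{prop:apc-small-query-property}, replacing the role played there by the acyclicity of $\chase{q''}{\dep}$ with an argument based on the UCQ rewriting. Since $q$ is semantically acyclic under $\dep$, fix an acyclic CQ $q''(\bar x)$ with $q \equiv_{\dep} q''$. Let $Q = \bigvee_i p_i$ be a UCQ rewriting of $q$ and $\dep$ as guaranteed by Definition~\ref{def:ucq-rewritability}, so that for every CQ $r(\bar x)$ we have $r \subseteq_{\dep} q$ iff $c(\bar x) \in Q(D_r)$, and so that each disjunct $p_i$ satisfies $|p_i| \leq f_{\class{C}}(q,\dep)$. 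Applying this equivalence to $r = q''$, and using $q'' \subseteq_{\dep} q$ (which follows from $q \equiv_{\dep} q''$), we obtain $c(\bar x) \in Q(D_{q''})$. Hence there is a single disjunct $p_j$ of $Q$ with $c(\bar x) \in p_j(D_{q''})$ and $|p_j| \leq f_{\class{C}}(q,\dep)$.

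The next step is to shrink $p_j$ into a small acyclic query using Lemma~\ref{lem:from-cq-to-acq}. The instance I would feed to the lemma is the canonical instance of the acyclic query $q''$: replace the free variables $\bar x$ of $q''$ by the distinct constants $c(\bar x)$ and each existential variable by a fresh null. Call this instance $I$. Because $q''$ is acyclic, any join tree witnessing this remains a join tree after turning the free variables into constants, so $I$ is acyclic. Moreover $I$ is homomorphically equivalent to $D_{q''}$ via the maps identifying the frozen existential constants with the corresponding nulls (both maps fixing $c(\bar x)$), so $c(\bar x) \in p_j(D_{q''})$ yields $c(\bar x) \in p_j(I)$. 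Lemma~\ref{lem:from-cq-to-acq}, applied to $p_j$, the acyclic instance $I$, and the tuple $c(\bar x)$, then produces an acyclic CQ $q'(\bar x)$ with $q' \subseteq p_j$, $|q'| \leq 2 \cdot |p_j| \leq 2 \cdot f_{\class{C}}(q,\dep)$, and $c(\bar x) \in q'(I)$.

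It remains to verify $q \equiv_{\dep} q'$. For $q \subseteq_{\dep} q'$: transporting $c(\bar x) \in q'(I)$ back across the homomorphism $I \to D_{q''}$ gives $c(\bar x) \in q'(D_{q''})$, which by the homomorphism characterization of containment (Lemma~\ref{lemma:cq-equiv-tgds} with empty $\dep$) means $q'' \subseteq q'$ and hence $q'' \subseteq_{\dep} q'$; combined with $q \subseteq_{\dep} q''$ this yields $q \subseteq_{\dep} q'$. For $q' \subseteq_{\dep} q$: the relation $q' \subseteq p_j$ means exactly that $c(\bar x) \in p_j(D_{q'}) \subseteq Q(D_{q'})$, and the rewriting equivalence (now read with $r = q'$) gives $q' \subseteq_{\dep} q$. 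Thus $q \equiv_{\dep} q'$ with $q'$ acyclic and of the claimed size. The main obstacle I anticipate is purely bookkeeping around the freezing: keeping track of which variables become constants versus nulls so that $I$ is genuinely acyclic and homomorphically interchangeable with $D_{q''}$ in both directions, since this is what lets the two applications of the rewriting equivalence (for $r = q''$ and $r = q'$) and the single application of Lemma~\ref{lem:from-cq-to-acq} fit together.
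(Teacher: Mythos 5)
Your proposal is correct and follows essentially the same route as the paper's proof: fix an acyclic $q''$ with $q \equiv_{\dep} q''$, use the UCQ rewriting to find a disjunct $p_j$ of height at most $f_{\class{C}}(q,\dep)$ with $c(\bar x) \in p_j(D_{q''})$, shrink it via Lemma~\ref{lem:from-cq-to-acq} over the acyclic canonical instance of $q''$, and then verify both containments (the paper argues $q' \subseteq_{\dep} q$ via $q' \subseteq q_r \subseteq_{\dep} q$, while you apply the rewriting equivalence directly to $q'$ — an equivalent step). Your extra bookkeeping distinguishing the frozen instance $I$ from $D_{q''}$ is a harmless refinement of what the paper does implicitly.
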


\begin{proof}
  Since $q$ is semantically acyclic under $\dep$, there exists an
  acyclic CQ $q''({\bar x})$ such that $q \equiv_{\dep} q''$. As $\class{C}$ is UCQ rewritable, there exists a UCQ $Q$ such that
  $c(\bar x) \in Q(D_{q''})$, which implies that there exists a CQ
  $q_r$ (one of the disjuncts of $Q$) such that $c(\bar x) \in
  q_r(D_{q''})$. Clearly, $|q_r| \leq f_{\class{C}}(q,\dep)$.
  But $D_{q''}$ is acyclic, and thus Lemma~\ref{lem:from-cq-to-acq}
  implies the existence of an acyclic CQ $q'$, where $q' \subseteq
  q_r$ and $|q'| \leq 2 \cdot f_{\class{C}}(q,\dep)$, such that
  $c({\bar x}) \in q'(D_{q''})$. The latter implies that $q''
  \subseteq q'$. By hypothesis, $q \subseteq_{\dep} q''$, and hence $q
  \subseteq_{\dep} q'$. For the other direction, we first show that
  $q_r \subseteq_{\dep} q$ (otherwise, $Q$ is not a UCQ rewriting).
  Since $q' \subseteq q_r$, we get that $q' \subseteq_{\dep} q$. We
  conclude that $q \equiv_{\dep} q'$, and the claim follows.
\end{proof}

It is clear that Proposition~\ref{prop:ucq-rewritability-small-query-property} provides a decision procedure for $\SA(\class{C})$ whenever $\class{C}$ is UCQ rewritable, and $\Cont(\class{C})$ is decidable. Given a CQ $q$, and a finite set $\dep \in \class{C}$:
\begin{enumerate}
\item Guess an acyclic CQ $q'$ of size at most $2 \cdot f_{\class{C}}(q,\dep)$; and

\item Verify that $q \subseteq_{\dep} q'$ and $q' \subseteq_{\dep} q$.
\end{enumerate}

The next result follows:

\begin{theorem}\label{theo:ucq-rewritability}
Consider a class $\class{C}$ of sets of tgds that is UCQ rewritable. If the problem $\Cont(\class{C})$ is decidable, then $\SA(\class{C})$ is also decidable.
\end{theorem}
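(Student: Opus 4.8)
The plan is to establish decidability exactly as in the proof of Theorem~\ref{theo:apc}, replacing the small query property for acyclicity-preserving chase (Proposition~\ref{prop:apc-small-query-property}) with its analogue for UCQ rewritable classes (Proposition~\ref{prop:ucq-rewritability-small-query-property}). The whole argument reduces $\SA(\class{C})$ to a bounded search through candidate acyclic queries, where each candidate is checked for equivalence to $q$ using the assumed decision procedure for $\Cont(\class{C})$.

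Concretely, on input a CQ $q$ and a finite set $\dep \in \class{C}$, the procedure would: (1) compute the bound $N = 2 \cdot f_{\class{C}}(q,\dep)$; (2) enumerate every acyclic CQ $q'$ over the schema of $\dep$ with the same free variables as $q$ and with $|q'| \leq N$; and (3) for each such $q'$, test whether $q \subseteq_{\dep} q'$ and $q' \subseteq_{\dep} q$, accepting iff some $q'$ passes both tests. Termination of step (1) follows because $f_{\class{C}}$ is computable -- this is precisely what UCQ rewritability guarantees, the existence of $f_{\class{C}}$ being argued just before Proposition~\ref{prop:ucq-rewritability-small-query-property}. Step (2) terminates because, over a finite schema and for a fixed size bound $N$, there are only finitely many acyclic CQs up to isomorphism. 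Each test in step (3) terminates because $\Cont(\class{C})$ is decidable by hypothesis.

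For correctness, suppose first that $q$ is semantically acyclic under $\dep$. Then Proposition~\ref{prop:ucq-rewritability-small-query-property} yields an acyclic CQ $q'$ with $|q'| \leq 2 \cdot f_{\class{C}}(q,\dep) = N$ and $q \equiv_{\dep} q'$; this $q'$ is among the enumerated candidates and passes both containment tests, so the procedure accepts. Conversely, if the procedure accepts, it has found an acyclic CQ $q'$ with $q \subseteq_{\dep} q'$ and $q' \subseteq_{\dep} q$, i.e., $q \equiv_{\dep} q'$, which witnesses that $q$ is semantically acyclic under $\dep$. Hence the procedure is a correct decision procedure for $\SA(\class{C})$.

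The only real content sits in Proposition~\ref{prop:ucq-rewritability-small-query-property}, so no genuinely new obstacle arises at this stage; the theorem is simply the packaging of that small query property together with the containment oracle. The one point meriting care is that the bound $N$ must be \emph{effectively computable} rather than merely finite, which is exactly why UCQ rewritability is formulated so as to supply the computable function $f_{\class{C}}$: without its computability the enumeration in step (2) could not be carried out, even though a suitable $q'$ would still be guaranteed to exist.
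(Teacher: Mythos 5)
Your proposal is correct and follows essentially the same route as the paper: the paper's proof is exactly the guess-and-check procedure (guess an acyclic $q'$ with $|q'| \leq 2 \cdot f_{\class{C}}(q,\dep)$, then verify $q \subseteq_{\dep} q'$ and $q' \subseteq_{\dep} q$ via the $\Cont(\class{C})$ oracle), with completeness supplied by Proposition~\ref{prop:ucq-rewritability-small-query-property}. Your deterministic enumeration is just the exhaustive version of that guess, and your remark that the computability of $f_{\class{C}}$ is the essential ingredient is exactly the point; the only nitpick is that the candidates should range over the predicates of both $q$ and $\dep$, not of $\dep$ alone.
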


\subsection{Non-Recursiveness}

As already said, the key property of $\nr$ that we are going to exploit for solving $\SA(\nr)$ is UCQ rewritability.
For a CQ $q$ and a set $\dep$ of tgds, let $p_{q,\dep}$ and $a_{q,\dep}$ be the number of predicates in $q$ and $\dep$, and the maximum arity over all those predicates, respectively.
The next result is implicit in~\cite{GoOP14}:\footnote{The work~\cite{GoOP14} does not consider $\nr$. However, the rewriting algorithm in that paper works also for non-recursive sets of tgds.}

\begin{proposition}\label{pro:nr-ucq-rewritable}
$\nr$ is UCQ rewritable. Furthermore, $f_{\nr}(q,\dep) = p_{q,\dep} \cdot (a_{q,\dep} \cdot |q| + 1)^{a_{q,\dep}}$.
\end{proposition}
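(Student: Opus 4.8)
The plan is to establish two things: that $\nr$ is UCQ rewritable, and that the height bound $f_{\nr}(q,\dep) = p_{q,\dep} \cdot (a_{q,\dep} \cdot |q| + 1)^{a_{q,\dep}}$ holds. Since the proposition states both are ``implicit in'' an existing rewriting algorithm, the real work is to recall the shape of that algorithm and bound the size of the CQs it produces, rather than to reinvent rewriting from scratch.

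First I would describe the backward-chaining rewriting procedure for non-recursive sets of tgds. Given the right-hand side CQ $q$ and $\dep \in \nr$, one repeatedly takes an atom (or a set of atoms) in a current CQ and, if it unifies with the head of some tgd $\tau \in \dep$, replaces it by the body of $\tau$ (suitably instantiated via a most general unifier), producing a new CQ. Collecting $q$ together with all CQs obtained by iterating this step yields a UCQ $Q$. The semantic correctness claim --- that for every CQ $q'$, $q' \subseteq_\dep q$ iff $c(\bar x) \in Q(D_{q'})$ --- is exactly the soundness and completeness of the rewriting, so it reduces to $c(\bar x) \in q(\chase{q'}{\dep})$ via Lemma~\ref{lemma:cq-equiv-tgds}; I would cite~\cite{GoOP14} for this and note that non-recursiveness guarantees the rewriting process terminates because the predicate graph is acyclic, so there is a bound on how deep the backward substitutions can go.

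The heart of the proof is the size bound, which I would obtain by analyzing how large a single disjunct of $Q$ can grow. Each rewriting step replaces a matched atom by a tgd body, and the blow-up in the number of variables is controlled by the arity: a rewriting step can introduce at most $a_{q,\dep}$ fresh frontier positions per atom, and because $\dep$ is non-recursive the number of rounds of unfolding along any chain is bounded independently of the data. The factor $(a_{q,\dep} \cdot |q| + 1)^{a_{q,\dep}}$ arises as the count of distinct atoms expressible over a bounded set of terms: with at most $a_{q,\dep} \cdot |q| + 1$ available terms and predicates of arity at most $a_{q,\dep}$, there are at most $(a_{q,\dep} \cdot |q| + 1)^{a_{q,\dep}}$ ways to fill the argument positions of one predicate, and multiplying by the number $p_{q,\dep}$ of predicates bounds the total number of atoms that can occur in any single disjunct. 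I would make this precise by arguing that, up to variable renaming, every disjunct of $Q$ can be normalized so that its terms come from a fixed pool of size $a_{q,\dep}\cdot|q|+1$, so that the height --- the maximal disjunct size --- is at most $p_{q,\dep} \cdot (a_{q,\dep} \cdot |q| + 1)^{a_{q,\dep}}$, as claimed.

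The main obstacle I anticipate is making the term-pool argument rigorous: one must verify that the backward-chaining steps never need more than $a_{q,\dep}\cdot|q|+1$ distinct terms in any single rewritten CQ, which requires a careful invariant showing that fresh variables introduced by unfolding a tgd body can be identified with, or bounded in number by, the arity-scaled term budget, and that non-recursiveness prevents unbounded accumulation of new terms across successive unfoldings. Once that invariant is in place, the counting of atoms over a bounded term pool is routine, and the height bound follows directly. I would therefore structure the proof as: (i) recall the rewriting and its correctness from~\cite{GoOP14}; (ii) state and prove the bounded-term-pool invariant using non-recursiveness; and (iii) derive the stated $f_{\nr}$ by counting atoms.
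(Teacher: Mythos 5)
Your step (i) is exactly what the paper does: the paper offers no proof of this proposition at all, only the remark that it is ``implicit in''~\cite{GoOP14}, plus a footnote that the rewriting algorithm there also works (terminates, remains sound and complete) for non-recursive sets. So importing the backward-chaining construction and its correctness via Lemma~\ref{lemma:cq-equiv-tgds} is unobjectionable. The genuine gap is your step (ii), the bounded-term-pool invariant, which you rightly flag as the crux but never actually argue for --- and which I do not see how to establish for $\nr$. Non-recursiveness bounds the \emph{depth} of backward unfolding, not its breadth: a single resolution step replaces one atom by an entire tgd body, introducing as many pairwise non-identifiable fresh variables as that body has. Concretely, let $q$ be the Boolean CQ $\exists u \exists v\, R(u,v)$ and let $\dep$ consist of the single non-recursive tgd $S(x_0,x_1),\ldots,S(x_{m-1},x_m) \rightarrow \exists y\, R(x_0,y)$. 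Then $p_{q,\dep}=2$, $a_{q,\dep}=2$, $|q|=1$, so the claimed height bound is $2\cdot 3^2=18$. But take $q'$ to be the directed $S$-path of length $m$ with all variables distinct (a core): $q' \subseteq_{\dep} q$, so some disjunct $p$ of the rewriting must map into $D_{q'}$; such a $p$ contains only $S$-atoms, yet soundness forces $p \subseteq_{\dep} q$, which in turn forces the $m$-atom tgd body to map homomorphically into $p$. Composing the two homomorphisms and using that the path is a core, $p$ must contain at least $m$ atoms. Hence for $m>18$ no rewriting of the claimed height exists, and no normalization to a term pool of size $a_{q,\dep}\cdot|q|+1$ is possible.

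What this reveals is that the $(a_{q,\dep}\cdot|q|+1)^{a_{q,\dep}}$-style counting is really the argument for $\sticky$ (Proposition~\ref{pro:sticky-ucq-rewritable}): there the stickiness condition itself forbids two occurrences of a marked variable in a tgd body, and it is that syntactic restriction --- not termination of the unfolding --- which caps the number of variables per disjunct. (My example tgd is not sticky: each $x_i$ with $1 \le i \le m-1$ is marked and occurs twice.) For $\nr$ the bound one honestly extracts from the unfolding tree has the shape $|q|\cdot(b_\dep)^{O(p_{q,\dep})}$, with $b_\dep$ the maximal body size --- exponential in the number of predicates rather than in the arity. That weaker bound still supports the fixed-schema \NP\ claim and the \NEXP\ upper bound, but not the remark that $f_{\nr}$ is polynomial when only the arity is fixed. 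So you would need either an argument specific to the normalization of~\cite{GoOP14} that genuinely re-derives the stated formula for $\nr$, or a revision of the formula itself; your current sketch of step (ii) supplies neither, and as written the ``careful invariant'' you defer to cannot be proved.
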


The above result, combined with Theorem~\ref{theo:ucq-rewritability}, implies the decidability of $\SA(\nr)$. For the exact complexity of the problem, we simply need to analyze the complexity of the non-deterministic algorithm underlying Theorem~\ref{theo:ucq-rewritability}.
Observe that when the arity of the schema is fixed the function $f_{\nr}$ is polynomial, and therefore Proposition~\ref{pro:nr-ucq-rewritable} guarantees the existence of a polynomially sized acyclic CQ. In this case, by exploiting Proposition~\ref{prop:qa-nr}, it is easy to show that $\SA(\nr)$ is in \NEXP, and in \NP~if the schema is fixed.
However, things are a bit cryptic when the arity of the schema is not fixed. In this case, $f_{\nr}$ is exponential, and thus we have to guess an acyclic CQ of exponential size. But now the fact that $\Cont(\nr)$ is in \NEXP~(by Proposition~\ref{prop:qa-nr}) alone is not enough to conclude that $\SA(\nr)$ is also in \NEXP.
We need to understand better the complexity of the query containment algorithm for $\nr$.

Recall that given two CQs $q(\bar x), q'(\bar x)$, and a finite set $\dep \in \nr$, by Lemma~\ref{lemma:cq-equiv-tgds}, $q \subseteq_{\dep} q'$ iff $c(\bar x) \in q'(\chase{q}{\dep})$. By exploiting non-recursiveness, it can be shown that if $c(\bar x) \in q'(\chase{q}{\dep})$, then there exists a chase sequence
\[
q = I_0 \xrightarrow{\tau_0,\bar c_0} I_1 \xrightarrow{\tau_1,\bar c_1} I_2 \ldots I_{n-1} \xrightarrow{\tau_{n-1},\bar c_{n-1}} I_n
\]
of $q$ and $\dep$, where $n = |q'| \cdot (b_\dep)^{\O(p_{q',\dep})}$, with $b_\dep$ be the maximum number of atoms in the body of a tgd of $\dep$, such that $c(\bar x) \in q'(I_n)$.
The query containment algorithm for $\nr$ simply guesses such a chase sequence of $q$ and $\dep$, and checks whether $c(\bar x) \in q'(I_n)$. Since $n$ is exponential, this algorithm runs in non-deterministic exponential time.
Now, recall that for $\SA(\nr)$ we need to perform two containment checks where either the left-hand side or the right-hand side query is of exponential size. But in both cases the containment algorithm for $\nr$ runs in non-deterministic exponential time, and hence $\SA(\nr)$ is in \NEXP.
The lower bounds are inherited from $\ABCont(\nr)$ since $\nr$ is closed under connecting (see Proposition~\ref{pro:generic-lower-bound}). Then:

\begin{theorem}\label{the:sa-nr}
$\SA(\nr)$ is complete for \NEXP, even if the arity of the schema is fixed. It becomes \NP-complete if the schema is fixed.
\end{theorem}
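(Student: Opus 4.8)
The plan is to derive the upper bound from the generic guess-and-verify algorithm underlying Theorem~\ref{theo:ucq-rewritability}, and the matching lower bounds from the connecting operator. Since $\nr$ is UCQ rewritable (Proposition~\ref{pro:nr-ucq-rewritable}), Proposition~\ref{prop:ucq-rewritability-small-query-property} guarantees that whenever $q$ is semantically acyclic under $\dep$ there is an equivalent acyclic CQ $q'$ with $|q'| \leq 2 \cdot f_{\nr}(q,\dep)$. Hence the decision procedure guesses such a $q'$ and verifies $q \subseteq_{\dep} q'$ and $q' \subseteq_{\dep} q$. Everything then reduces to bounding (i)~the size of the guessed query and (ii)~the cost of the two containment checks, and I would split the analysis into the fixed-arity and unbounded-arity regimes, since $f_{\nr}$ behaves very differently in the two.

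In the fixed-arity regime, $f_{\nr}(q,\dep) = p_{q,\dep} \cdot (a_{q,\dep}\cdot|q|+1)^{a_{q,\dep}}$ is polynomial in $|q|$, so the guessed $q'$ is polynomially sized and each containment check is a $\Cont(\nr)$ instance of polynomial size; plugging in the $\NEXP$ bound of Proposition~\ref{prop:qa-nr} gives $\SA(\nr)\in\NEXP$, and its $\NP$ bound gives $\SA(\nr)\in\NP$ once the whole schema is fixed. The delicate case is unbounded arity, where $f_{\nr}$ is exponential and $q'$ has exponential size: here I cannot simply feed $q'$ into the $\NEXP$ procedure of Proposition~\ref{prop:qa-nr}, since that bound is stated in terms of input size and would yield a double-exponential running time. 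This is the main obstacle.

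To get around it, I would open up the containment algorithm for $\nr$ rather than use it as a black box. By Lemma~\ref{lemma:cq-equiv-tgds}, $q_1 \subseteq_{\dep} q_2$ iff $c(\bar x) \in q_2(\chase{q_1}{\dep})$, and exploiting non-recursiveness one shows that a chase prefix of length $n = |q_2| \cdot (b_\dep)^{\O(p_{q_2,\dep})}$ already suffices to host an image of $q_2$. The key observation is that $p_{q_2,\dep}$ counts predicates of the input schema, which is polynomial in the original input regardless of whether one of the queries has been blown up to exponential size; hence $n$ is merely exponential, the prefix $I_n$ is exponential, and guessing it together with a homomorphism witnessing $c(\bar x) \in q_2(I_n)$ runs in $\NEXP$. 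Applying this to both directions --- $q \subseteq_{\dep} q'$ (chase the small $q$, evaluate the exponential $q'$) and $q' \subseteq_{\dep} q$ (chase the exponential $q'$, evaluate the small $q$) --- each check stays in $\NEXP$, so the overall procedure is in $\NEXP$.

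For the lower bounds I would invoke the connecting operator: $\nr$ is closed under connecting, so Proposition~\ref{pro:generic-lower-bound} transfers any hardness of $\ABCont(\nr)$ to $\SA(\nr)$. Since the $\NEXP$-hardness of Proposition~\ref{prop:qa-nr} already holds for acyclic Boolean left-hand sides and fixed arity (i.e.\ for $\ABCont(\nr)$), this yields $\NEXP$-hardness of $\SA(\nr)$ even for fixed arity, and the corresponding $\NP$-hardness for fixed schema, matching the upper bounds. The only step that genuinely needs care is the size/complexity bookkeeping in the unbounded-arity upper bound described above; the rest is routine assembly of the already-established machinery.
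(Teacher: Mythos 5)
Your proposal is correct and follows essentially the same route as the paper: the guess-and-verify algorithm from UCQ rewritability, the case split on fixed versus unbounded arity, the refined analysis of the $\Cont(\nr)$ algorithm via the exponential-length chase prefix $n = |q'|\cdot(b_\dep)^{\O(p_{q',\dep})}$ to keep both containment checks in \NEXP\ despite the exponentially sized guessed query, and the lower bounds via closure of $\nr$ under connecting and $\ABCont(\nr)$-hardness. No substantive differences.
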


\subsection{Stickiness}

We now focus on sticky sets of tgds. As for $\nr$,
the key property of $\sticky$ that we are going to use is UCQ rewritability. The next result has been explicitly shown in~\cite{GoOP14}:

\begin{proposition}\label{pro:sticky-ucq-rewritable}
$\sticky$ is UCQ rewritable. Furthermore, $f_{\sticky}(q,\dep) = p_{q,\dep} \cdot (a_{q,\dep} \cdot |q| + 1)^{a_{q,\dep}}$.
\end{proposition}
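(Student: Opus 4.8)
The plan is to establish UCQ rewritability constructively, via the standard backward-chaining (resolution-based) rewriting procedure for query answering under tgds, and then to read off the height bound from the stickiness condition. Concretely, given a CQ $q$ and a sticky set $\dep \in \sticky$, I would run the rewriting algorithm of~\cite{GoOP14} (see also~\cite{CaGP12}): starting from $q$, repeatedly select an atom $\alpha$ of a current CQ together with a tgd $\tau \in \dep$ whose head unifies with $\alpha$ via a most general unifier, and replace $\alpha$ by the correspondingly instantiated body of $\tau$, thereby ``undoing'' one chase step. Interleaving this with the usual factorization step (unifying two atoms of a current CQ), one collects all CQs generated until saturation; the resulting finite set is the candidate UCQ $Q$.

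For correctness, I would show that $Q$ meets Definition~\ref{def:ucq-rewritability}. By Lemma~\ref{lemma:cq-equiv-tgds}, $q' \subseteq_\dep q$ holds iff $c(\bar x)$ belongs to the evaluation of $q$ over $\chase{q'}{\dep} = \chase{D_{q'}}{\dep}$, so it suffices to prove that $c(\bar x) \in q(\chase{D_{q'}}{\dep})$ iff $c(\bar x) \in Q(D_{q'})$ for every $q'$; this is precisely the statement that $Q$ is a sound and complete UCQ rewriting of $q$ under $\dep$. Soundness holds because each rewriting step is logically entailed by $\dep$, so every disjunct of $Q$ is contained in $q$ under $\dep$. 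Completeness follows by tracing any homomorphism from $q$ into the (possibly infinite) chase back through the finitely many chase steps that produced its image: each such step is mirrored by a rewriting step, and the procedure terminates in a CQ that maps directly into $D_{q'}$, hence is subsumed by a disjunct of $Q$.

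The class-specific part, and the crux of the argument, is the height bound $f_{\sticky}(q,\dep) = p_{q,\dep} \cdot (a_{q,\dep} \cdot |q| + 1)^{a_{q,\dep}}$. Here I would invoke the defining marking discipline of sticky sets: the variables subject to a join (the marked ones) occur at most once in every tgd and are therefore \emph{propagated} rather than unified away. Read in the backward direction, this guarantees the invariant that no rewriting step ever introduces a genuinely fresh join variable, so that, up to renaming, every generated CQ draws its terms from a fixed pool whose size is bounded by the number of terms already present in $q$, namely at most $a_{q,\dep}\cdot|q| + 1$. A CQ over $p_{q,\dep}$ predicates of arity at most $a_{q,\dep}$ using at most $a_{q,\dep}\cdot|q|+1$ distinct terms contains at most $p_{q,\dep} \cdot (a_{q,\dep}\cdot|q|+1)^{a_{q,\dep}}$ distinct atoms, which both forces termination of the saturation (the supply of possible atoms, and hence of distinct CQs, is finite) and yields the claimed height.

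I expect the term-bound invariant to be the main obstacle: one must verify carefully that the stickiness marking indeed forbids the backward step from creating new join variables, and that this invariant is preserved along \emph{every} rewriting sequence and under factorization, not merely in a single step. Once this is in place, UCQ rewritability follows from the correctness argument above and the height bound matches that of the non-recursive case in Proposition~\ref{pro:nr-ucq-rewritable}, so the decidability of $\SA(\sticky)$ follows from Theorem~\ref{theo:ucq-rewritability} and the subsequent complexity analysis proceeds exactly as before.
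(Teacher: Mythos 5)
The paper offers no proof of this proposition at all: it is imported from~\cite{GoOP14} with the remark that the result ``has been explicitly shown'' there. So the comparison is really against the argument of that reference, and your sketch does reconstruct its architecture correctly — the backward-resolution-plus-factorization rewriting (XRewrite), soundness and completeness relative to the chase via Lemma~\ref{lemma:cq-equiv-tgds}, and a stickiness-based invariant that simultaneously forces termination and yields the height bound.

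However, the load-bearing step is misstated in a way that, as written, fails. It is not true that ``every generated CQ draws its terms from a fixed pool whose size is at most $a_{q,\dep}\cdot|q|+1$'': a backward step that resolves an atom against the head of $\phi(\bar x,\bar y)\rightarrow\exists\bar z\,\psi(\bar x,\bar z)$ necessarily introduces the body variables $\bar y$ as fresh existential variables, and nothing bounds how many such variables accumulate across a rewriting sequence. What stickiness actually delivers is weaker and subtly different: any variable occurring in a body but not in every head atom is marked, and no marked variable may occur twice in a body, so every fresh variable introduced by a backward step occurs exactly \emph{once} among the introduced atoms; propagating this through the marking (and checking it survives factorization) gives the invariant that every \emph{join} variable of every generated CQ is a term of the original $q$. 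The numeric bound then requires a further normalization step that your sketch omits: collapse all non-shared variables of a disjunct to a single wildcard symbol — this is the otherwise unexplained ``$+1$'' in $p_{q,\dep}\cdot(a_{q,\dep}\cdot|q|+1)^{a_{q,\dep}}$ — and fold together atoms with identical normalized form, which is sound precisely because the collapsed variables are non-shared. Without that folding, counting the possible normalized atoms does not bound the number of atoms in a disjunct. A smaller slip: the marked variables are not ``the variables subject to a join''; the sticky condition asserts the opposite, namely that any variable occurring twice in a body (a join variable) must be \emph{unmarked}, hence propagated to every head atom.
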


The above result, combined with Theorem~\ref{theo:ucq-rewritability}, implies the decidability of $\SA(\sticky)$. Moreover, Proposition~\ref{pro:sticky-ucq-rewritable} allows us to establish an optimal upper bound when the arity of the schema is fixed since in this case the function $f_{\sticky}$ is polynomial. In fact, we show that $\SA(\sticky)$ is \NP-complete when the arity of the schema is fixed. The \NP-hardness is inherited from $\ABCont(\sticky)$ since $\sticky$ is closed under connecting (see Proposition~\ref{pro:generic-lower-bound}).
Now, when the arity of the schema is not fixed the picture is still foggy. In this case, the function $f_{\sticky}$ is exponential, and thus by following the usual guess and check approach we get that $\SA(\sticky)$ is in \NEXP, while Proposition~\ref{pro:generic-lower-bound} implies an \EXP~lower bound.
To sum up, our generic machinery based on UCQ rewritability shows that:

\begin{theorem}\label{the:sa-sticky}
$\SA(\sticky)$ is in \NEXP~and hard for \EXP. It becomes \NP-complete if the arity is fixed.
\end{theorem}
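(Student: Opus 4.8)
The plan is to instantiate the generic UCQ-rewritability machinery of Section~\ref{sec:ucq}. First I would invoke Proposition~\ref{pro:sticky-ucq-rewritable}, which gives that $\sticky$ is UCQ rewritable with height bound $f_{\sticky}(q,\dep) = p_{q,\dep} \cdot (a_{q,\dep} \cdot |q| + 1)^{a_{q,\dep}}$, and combine it with the decidability of $\Cont(\sticky)$ (Proposition~\ref{prop:qa-sticky}). Theorem~\ref{theo:ucq-rewritability} then already yields decidability of $\SA(\sticky)$, so all the remaining work is in pinning down the complexity. The decision procedure is the usual guess-and-check: by Proposition~\ref{prop:ucq-rewritability-small-query-property}, if $q$ is semantically acyclic under $\dep$ then there is an acyclic witness $q'$ with $|q'| \leq 2 \cdot f_{\sticky}(q,\dep)$, so I would guess such a $q'$ and verify $q \subseteq_\dep q'$ and $q' \subseteq_\dep q$.

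The fixed-arity case is immediate. When $a_{q,\dep}$ is a constant the function $f_{\sticky}(q,\dep)$ is polynomial, so the guessed $q'$ has polynomial size, and each of the two containment checks is in \NP by Proposition~\ref{prop:qa-sticky}; hence $\SA(\sticky)$ is in \NP. The matching \NP-hardness is inherited from $\ABCont(\sticky)$, which is \NP-hard already over a fixed schema (plain CQ containment is \NP-hard): since $\sticky$ is closed under connecting, Proposition~\ref{pro:generic-lower-bound} transfers this hardness to $\SA(\sticky)$.

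The crux is the unfixed-arity upper bound, and I expect it to be the main obstacle. Here $f_{\sticky}(q,\dep)$ is single-exponential, so the guessed acyclic $q'$ has single-exponential size, and the two containment checks must be performed with one exponentially large query; at first sight rewriting the large query $q'$ to settle $q \subseteq_\dep q'$ looks as if it could cost a double exponential. The observation that rescues the bound is arithmetic: $q'$ uses the same predicates and arity as $q$, so the height bound applied to $q'$ is $f_{\sticky}(q',\dep) = p_{q,\dep} \cdot (a_{q,\dep} \cdot |q'| + 1)^{a_{q,\dep}}$, and since $|q'|$ is only single-exponential while $a_{q,\dep}$ is at most polynomial, raising a single exponential to a polynomial power still yields a single exponential. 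Concretely, I would realise each check through Lemma~\ref{lemma:cq-equiv-tgds}, i.e. $q \subseteq_\dep q'$ iff $c(\bar x)$ belongs to the evaluation of $q'$ over $\chase{q}{\dep}$, and dually for the other direction. For each direction one guesses a self-certifying chase sequence of the left-hand query, of single-exponential length (the bound being provided by the UCQ-rewriting height $f_{\sticky}$ of the right-hand query, which is single-exponential in both directions thanks to the arithmetic above), together with a homomorphism from the right-hand query into the resulting instance. The chase sequence and the homomorphism both have single-exponential size and are verifiable in exponential time, so each check, and therefore the whole guess-and-check procedure, runs in \NEXP.

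Finally, the \EXP~lower bound for unfixed arity is obtained exactly as in the guarded and non-recursive cases: the lower bounds of Proposition~\ref{prop:qa-sticky} already hold for $\ABCont(\sticky)$ (acyclic Boolean left-hand query, Boolean right-hand query), and since $\sticky$ is closed under connecting, Proposition~\ref{pro:generic-lower-bound} with $\C = \EXP$ yields \EXP-hardness of $\SA(\sticky)$. The residual gap between the \NEXP~upper bound and the \EXP~lower bound is precisely what the statement leaves unresolved.
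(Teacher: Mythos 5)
Your proposal follows essentially the same route as the paper: UCQ rewritability of $\sticky$ (Proposition~\ref{pro:sticky-ucq-rewritable}) plus the small query property (Proposition~\ref{prop:ucq-rewritability-small-query-property}) give the guess-and-check procedure, with the fixed-arity \NP\ bound from the polynomiality of $f_{\sticky}$ and the lower bounds inherited from $\ABCont(\sticky)$ via closure under connecting and Proposition~\ref{pro:generic-lower-bound}. Your arithmetic observation that $f_{\sticky}(q',\dep)$ remains single-exponential when $|q'|$ is single-exponential (since the exponent $a_{q,\dep}$ is only polynomial) correctly fills in the detail behind the paper's terse ``usual guess and check approach'' remark for the unfixed-arity \NEXP\ upper bound.
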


An interesting question that comes up is whether for sticky sets of
tgds a stronger small query property than
Proposition~\ref{prop:ucq-rewritability-small-query-property} can be
established, which guarantees the existence of a polynomially sized
equivalent acyclic CQ. It is clear that such a result would allow us to
establish an \EXP~upper bound for $\SA(\sticky)$.
At this point, one might be tempted to think that this can be achieved by showing that the function $f_{\sticky}$ is actually polynomial even if the arity of the schema is not fixed.
The next example shows that this is not the case. We can construct a sticky set $\dep$ of tgds and a CQ $q$ such that, for every UCQ rewriting $Q$ for $q$ and $\dep$, the height of $Q$ is exponential in the arity.

\begin{example}\label{exa:function-fs-lower-bound}
Let $\dep$ be the sticky set of tgds given below; we write ${\bar x}_{i}^{j}$ for the tuple of variables $x_i,x_{i+1},\ldots,x_j$:
\begin{align*}
&\big\{P_i({\bar x}_{1}^{i-1},Z,{\bar x}_{i+1}^{n},Z,O),
P_i({\bar x}_{1}^{i-1},O,{\bar x}_{i+1}^{n},Z,O) \ra\\
&\hspace{29mm} P_{i-1}({\bar x}_{1}^{i-1},Z,{\bar x}_{i+1}^{n},Z,O)\big\}_{i \in \{1,\ldots,n\}}.
\end{align*}
Consider also the Boolean CQ
\[
P_0(0,\ldots,0,0,1).
\]
It can be shown that, for every UCQ rewriting $Q$ for $q$ and $\dep$, the disjunct of $Q$ that mentions only the predicate $P_n$ contains exactly $2^n$ atoms. Therefore, there is no UCQ rewriting for $q$ and $\dep$ of polynomial height, which in turn implies that $f_{\sticky}$ cannot be polynomial in the arity of the schema. \hfill\markfull
\end{example}

The above discussion reveals the need to identify a more refined property of stickiness than UCQ rewritability, which will allow us to close the complexity of $\SA(\sticky)$ when the arity is not fixed. This is left as an interesting open problem. 
\section{Semantic Acyclicity with EGDs}\label{sec:semac-egds}

Up to now, we have considered classes of constraints that are based on tgds. However, semantic acyclicity can be naturally defined for classes of egds. Hence, one may wonder whether the techniques developed in the previous sections can be applied for egd-based classes of constraints. Unfortunately, the situation changes dramatically even for the simplest subclass of egds, i.e., keys.

\subsection{Peculiarity of Keys}

We show that the techniques developed in the previous sections for tgds cannot be applied for showing the decidability of semantic acyclicity under keys, and thus under egds. Although the notions of acyclicity-preserving chase (Definition~\ref{def:apc}) and UCQ rewritability (Definition~\ref{def:ucq-rewritability}) can be naturally defined for egds, are of little use even if we focus on keys.

\medskip
\noindent \paragraph{Acyclicity-preserving chase.} It is easy to show via a simple example that keys over binary and ternary predicates do not enjoy the acyclicity-preserving chase property:

\begin{example}\label{exa:keys-violate-acyclicity}
Let $q$ be the acyclic query
\[
R(x,y) \wedge S(x,y,z) \wedge S(x,z,w) \wedge S(x,w,v) \wedge R(x,v).
\]
After applying on $q$ the key $R(x,y),R(x,z) \ra y = z$, which simply states that the first attribute of the binary predicate $R$ is the key, we obtain the query
\[
R(x,y) \wedge S(x,y,z) \wedge S(x,z,w) \wedge S(x,w,y),
\]
which is clearly cyclic. \hfill\markfull
\end{example}

With the aim of emphasizing the peculiarity of keys, we give a more involved example, which shows that a tree-like query can be transformed via two keys into a highly cyclic query that contains a grid. Interestingly, this shows that also other desirable properties, in particular bounded (hyper)tree width, are destroyed when we chase a query using keys.

\begin{figure}[t]
 \epsfclipon
  \centerline
  {\hbox{
  \leavevmode
  \epsffile{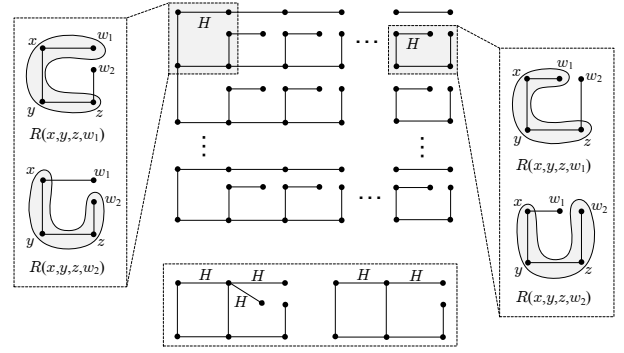}
  }} \epsfclipoff \caption{From a ``tree'' to a grid via key dependencies.}
  \label{fig:tree-to-grid}
  \vspace{-2mm}
\end{figure}

\begin{example}\label{exa:tree-to-grid}
Consider the CQ $q$ depicted in Figure~\ref{fig:tree-to-grid} (ignoring for the moment the dashed boxes). Although seemingly $q$ contains an $n \times n$ grid, it can be verified that the grid-like structure in the figure is actually a tree.
In addition, $q$ contains atoms of the form $R(x,y,z,w)$ as explained in the figure. More precisely, for each of the open squares occurring in the first column (e.g., the upper-left shaded square), we have the two atoms $R(x,y,z,w_1)$ and $R(x,y,z,w_2)$ represented by the two hyperedges on the left.
Moreover, for each of the internal open squares and the open squares occurring in the last column (e.g., the upper-right shaded square), we have the two atoms $R(x,y,z,w_1)$ and $R(x,y,z,w_2)$ represented by the two hyperedges on the right. Observe that $q$ is an acyclic query.
Consider now the set $\dep$ of keys:
\begin{multline*}
\epsilon_1\ =\ R(x,y,z,w), R(x,y,z,w') \ra w = w'\\
\epsilon_2\ =\ H(x,y), H(x,z) \ra y = z.
\end{multline*}
Notice that $H(\cdot,\cdot)$ stores the horizontal edges. It is not difficult to see that $\chase{q}{\dep}$ contains an $n \times n$ grid.
Roughly, as described at the bottom of Figure~\ref{fig:tree-to-grid}, by first applying $\epsilon_1$ we close the open squares of the first column, while the open squares of the second column have now the same shape as the ones of the first column, but with a dangling $H$-edge. Then, by applying $\epsilon_2$, the two $H$-edges collapse into a single edge, and we obtain open squares that have exactly the same shape as those of the first column. After finitely many chase steps, all the squares are closed, and thus $\chase{q}{\dep}$ indeed contains an $n \times n$ grid. Therefore, although the query $q$ is acyclic, $\chase{q}{\dep}$ is far from being acyclic.
Observe also that the (hyper)tree width of $\chase{q}{\dep}$ depends on $n$, while $q$ has (hyper)tree width $3$. \hfill\markfull
\end{example}

\noindent \paragraph{UCQ rewritability.} It is not hard to show that keys
are not UCQ rewritable. This is not surprising due to the transitive
nature of equality. Intuitively, the UCQ rewritability of
keys implies that a first-order (FO) query can encode the fact that
the equality relation is transitive. However, it is well-known that this is not possible due to the inability of FO queries to express recursion.

\subsection{Keys over Constrained Signatures}

Despite the peculiar nature of keys as discussed above, we can establish a positive result regarding semantic acyclicity under keys, providing that only unary and binary predicates can be used. This is done by exploiting the following generic result, which is actually the version of Theorem~\ref{theo:apc} for egd-based classes:

\begin{theorem}\label{theo:apc-egds}
Consider a class $\class{C}$ of sets of egds. If $\class{C}$ has acyclicity-preserving chase, then $\SA(\class{C})$ is \NP-complete, even if we allow only unary and binary predicates.
\end{theorem}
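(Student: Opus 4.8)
The plan is to prove matching \NP\ bounds. For membership in \NP, I would first establish the egd-analogue of the small query property of Proposition~\ref{prop:apc-small-query-property}: if $q$ is semantically acyclic under a finite set $\dep$ of egds drawn from a class $\class{C}$ with acyclicity-preserving chase, then there is an acyclic CQ $q'$ with $|q'| \leq 2 \cdot |q|$ and $q \equiv_\dep q'$. The proof of Proposition~\ref{prop:apc-small-query-property} transfers verbatim, since both lemmas it invokes hold for egds. Starting from an acyclic witness $q''$ with $q \equiv_\dep q''$, Lemma~\ref{lemma:cq-equiv-tgds} gives $c(\bar x) \in q(\chase{q''}{\dep})$; acyclicity-preserving chase makes $\chase{q''}{\dep}$ acyclic; and Lemma~\ref{lem:from-cq-to-acq} carves out of $q$ an acyclic subquery $q'$ with $q' \subseteq q$ and $|q'| \leq 2 \cdot |q|$ that still maps into $\chase{q''}{\dep}$, so by Lemma~\ref{lemma:cq-equiv-tgds} we get $q'' \subseteq_\dep q'$, whence $q \subseteq_\dep q'$; since $q' \subseteq q$ we also have $q' \subseteq_\dep q$, and therefore $q \equiv_\dep q'$.

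Given this, the \NP\ procedure is to guess an acyclic CQ $q'$ over the schema of $q$ with at most $2 \cdot |q|$ atoms and then verify $q \equiv_\dep q'$, and here the egd-specific features do the work. Because the constants $c(x)$ introduced when chasing a CQ are treated as nulls, the egd chase of a query never fails and each step strictly decreases the number of distinct terms; hence $\chase{q}{\dep}$ and $\chase{q'}{\dep}$ always exist, are computable in polynomial time, and have at most $|q|$ and $2 \cdot |q|$ atoms respectively (merging terms can only make atoms coincide, never create new ones). By Lemma~\ref{lemma:cq-equiv-tgds}, $q \subseteq_\dep q'$ reduces to testing $c(\bar x) \in q'(\chase{q}{\dep})$ and $q' \subseteq_\dep q$ to testing $c(\bar x') \in q(\chase{q'}{\dep})$; each is a CQ-evaluation test over a polynomially sized instance, for which a homomorphism can be guessed and checked in polynomial time. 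Bundling the guess of $q'$ with the two homomorphism guesses yields a single nondeterministic polynomial-time computation, so $\SA(\class{C})$ is in \NP, regardless of the arity of the schema.

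For the lower bound I would inherit \NP-hardness from constraint-free semantic acyclicity, which is already \NP-hard over a schema of unary and binary predicates (a graph signature)~\cite{BRV13}. Setting $\dep = \emptyset$ turns $\SA(\class{C})$ into exactly this problem; the empty set of egds trivially has acyclicity-preserving chase and is a member of the relevant class (in particular, of the class of keys over unary and binary predicates, which admits the empty set of keys). Thus the restriction to unary and binary predicates is preserved under the reduction, and \NP-hardness follows, completing the proof.

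The step I expect to be the most delicate is the correctness of the \NP\ procedure, namely pinning down that on queries the egd chase is failure-free and cannot blow up the instance, so that \emph{both} containment directions stay inside \NP\ even though one side is the guessed query of size up to $2 \cdot |q|$. The small query property is then a routine transfer of the argument of Proposition~\ref{prop:apc-small-query-property}, and the lower bound is immediate from the constraint-free case.
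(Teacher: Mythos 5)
Your proposal follows essentially the same route as the paper: the small query property of Proposition~\ref{prop:apc-small-query-property} transfers to egd classes with acyclicity-preserving chase via Lemma~\ref{lemma:cq-equiv-tgds} and Lemma~\ref{lem:from-cq-to-acq}, membership in \NP\ comes from guessing an acyclic $q'$ of size at most $2\cdot|q|$ and checking both containments nondeterministically, and \NP-hardness is inherited from the constraint-free case over a binary signature by taking $\dep = \emptyset$ (the paper cites~\cite{DaKV02} for this last fact, but the content is the same).

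One detail in your \NP\ argument is misstated, though it does not sink the bound. You claim that $\chase{q}{\dep}$ and $\chase{q'}{\dep}$ are \emph{computable in polynomial time}. They are not, at least not deterministically: each egd chase step requires finding a homomorphism $h$ with $h(\phi(\bar x)) \subseteq I$ and $h(x_i) \neq h(x_j)$, i.e., solving a homomorphism problem whose left-hand side (the egd body) is part of the input, which is \NP-hard in general. What is true — and what the paper relies on when it says containment under egds is feasible in \emph{non-deterministic} polynomial time — is that the chase sequence has at most linearly many steps (each merges two terms), each step is verifiable in polynomial time once its witnessing homomorphism is supplied, and the result never grows. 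So the chase steps, like the final containment homomorphisms, must themselves be guessed. This is sound because a homomorphism into a partial chase composes with the merging map into the full chase, and complete because the full chase is reachable by a polynomially long guessable sequence. With that adjustment your single bundled nondeterministic computation goes through, and the rest of your argument is fine.
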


The proof of the above result is along the lines of the proof for Theorem~\ref{theo:apc}, and exploits the fact that the containment problem under egds is feasible in non-deterministic polynomial time (this can be shown by using Lemma~\ref{lemma:cq-equiv-tgds}).
The lower bound follows from~\cite{DaKV02}, which shows that the problem of checking whether a Boolean CQ over a single binary relation is equivalent to an acyclic one is \NP-hard.
We now show the following positive result for the class of keys over unary and binary predicates, denoted $\class{K}_2$:

\begin{proposition}\label{pro:apc-kds-unary-binary-predicates}
$\class{K}_2$ has acyclicity-preserving chase.
\end{proposition}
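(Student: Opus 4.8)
I need to show that $\class{K}_2$ — the class of sets of keys over schemas with only unary and binary predicates — has acyclicity-preserving chase; i.e., chasing an acyclic CQ with such keys yields an acyclic instance.

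The plan is as follows. Since unary predicates contribute only atoms $U(x)$ on single elements, they never create joins and are irrelevant to acyclicity (a unary atom can always be hung off any node of a join tree that contains its single element, or off a fresh leaf if that element occurs nowhere else). So the real work concerns binary predicates. A key over a binary predicate $R$ is either $R(x,y),R(x,z)\to y=z$ (first attribute is the key) or $R(x,y),R(z,y)\to x=z$ (second attribute is the key), or the trivial full key. The crucial structural observation I would establish first is that the chase by keys never \emph{adds} atoms — each chase step only identifies two nulls (or a null and a constant), merging elements. Thus acyclicity can only be threatened by the \emph{collapse} of distinct join-tree nodes, not by new atoms appearing.

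The heart of the argument, then, is an invariant-based induction on chase steps: starting from an acyclic instance with a join tree $T$, I want to show each single identification step $h(x_i)\mapsto h(x_j)$ preserves acyclicity, and in fact I would maintain a join tree throughout. First I would set up the invariant precisely for binary schemas. For a key firing, say on $R$ with first attribute as key, the trigger is two atoms $R(a,b)$ and $R(a,b')$ sharing their first argument, forcing $b=b'$. Here the two atoms already share the element $a$, so in the join tree the nodes for $R(a,b)$ and $R(a,b')$ lie in the connected subtree $T_a=\{t : a\in\lambda(t)\}$. After identifying $b$ and $b'$, I must rebuild a join tree on the merged instance. The plan is to argue that merging $b,b'$ can be realized by taking $T$, locating the two connected subtrees $T_b$ and $T_{b'}$ (the node-sets mentioning $b$ and $b'$ respectively), and ``gluing'' them along an edge that runs through the common neighbourhood at $a$ — using the fact that both $b$ and $b'$ are adjacent to $a$ via $R$ to route a path, so that after the merge the set of nodes mentioning the merged element $b$ stays connected. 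The binary restriction is exactly what makes this local surgery tractable: every atom touches at most two elements, so each node of $T$ mentions at most two of the relevant terms, and the combinatorics of reconnecting the subtrees stays bounded and manipulable. I would formalize this as: the merged instance admits a join tree obtained from $T$ by contracting/rerouting within $T_a\cup T_b\cup T_{b'}$, checking the two join-tree conditions (every atom is realized — trivial since no atoms are added, only renamed — and every element's node-set is connected — the only element whose node-set changes is the merged $b$, and connectivity is restored through $a$).

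The main obstacle I anticipate is precisely the connectivity bookkeeping in that gluing step, and in particular ruling out the ``grid'' phenomenon that Example~\ref{exa:tree-to-grid} exhibits for \emph{ternary}-and-up schemas. In that example a $4$-ary relation $R(x,y,z,w)$ lets a single key identification pin down a fourth coordinate while three coordinates are shared, effectively closing a square and, iterated, building a grid; the same happens with the auxiliary binary $H$ only because it cooperates with the $4$-ary $R$. The subtle point I must get exactly right is \emph{why} binary predicates cannot reproduce this: with arity at most two, identifying $b$ with $b'$ cannot simultaneously force a long-range coincidence the way a high-arity atom can, because the running key trigger $R(a,b),R(a,b')$ already requires $b,b'$ to share a \emph{neighbour} $a$, so the two subtrees $T_b,T_{b'}$ are never ``far apart'' in $T$ — they are both tethered to $T_a$. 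I would need to verify that this tethering suffices to keep the node-set of the merged element connected after every step, and that no cycle is introduced even across a long sequence of such merges (arguing by maintaining the join tree as a genuine invariant rather than re-deriving acyclicity from scratch each time). I expect the formal write-up to require a careful case analysis on which attribute is the key and on how the two triggering atoms and their incident $a$-neighbourhood sit inside $T$, but no single step should be deep — the content is entirely in the arity-$2$ locality of the surgery.
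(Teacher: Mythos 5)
Your overall strategy is the paper's: reduce to a single egd chase step, maintain a join tree as an invariant, and exploit the fact that the two triggering atoms $R(a,b)$, $R(a,b')$ share $a$ and that every node of the join tree mentions at most two terms. The observations you make along the way (unary atoms are harmless, key steps only merge terms rather than adding atoms, both triggering atoms sit inside the connected subtree $T_a$) are all correct and all appear, explicitly or implicitly, in the paper's argument.

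What is missing is the one piece of actual content: the concrete surgery on the join tree and the verification that it yields a join tree of the merged query. Taken literally, ``gluing $T_b$ and $T_{b'}$ along an edge that runs through the common neighbourhood at $a$'' cannot be right --- adding an edge between two nodes of a tree creates a cycle --- and ``contracting/rerouting within $T_a\cup T_b\cup T_{b'}$'' is a description of the goal, not a construction; you yourself flag the connectivity bookkeeping as the unresolved obstacle. The paper resolves it as follows: consider the path in the join tree from the node $v$ labeled $h(R(x,y))$ to the node labeled $h(R(x,z))$ (every node on this path contains $h(x)=a$, by connectivity of $T_a$), and let $\alpha$ be the first node on it, walking from $v$, that contains both $h(x)$ and $h(z)$. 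Because predicates are at most binary, $\alpha$ is a binary atom on exactly $\{h(x),h(z)\}$, and --- this is the decisive point your sketch does not supply --- the entire connected set $T_{h(z)}$ lies in the subtree rooted at $\alpha$ (an $h(z)$-node outside that subtree would force $h(z)$ into $\alpha$'s predecessor on the path, contradicting the choice of $\alpha$). One then detaches the whole subtree rooted at $\alpha$, re-grafts it at $v$, and renames $h(z)$ to $h(y)$. Since $\alpha$ is the unique boundary node of the moved subtree and mentions only $h(x)$ and $h(z)$, these are the only elements whose connectivity can be disturbed, and both are immediate to check; this also disposes of your worry about cycles accumulating over long chase sequences, since each step explicitly outputs a join tree for the next. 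So the approach is the right one, but without identifying the pivot $\alpha$ and the detach-and-regraft move, the proof is not yet complete.
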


Notice that the above result is not in a conflict with Examples~\ref{exa:keys-violate-acyclicity} and~\ref{exa:tree-to-grid}, since both examples use predicates of arity greater than two.
It is now straightforward to see that:

\begin{theorem}\label{the:unary-binary-keys}
$\SA(\class{K}_2)$ is \NP-complete.
\end{theorem}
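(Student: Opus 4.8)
The plan is to obtain the theorem as an immediate consequence of the two preceding results, namely Theorem~\ref{theo:apc-egds} and Proposition~\ref{pro:apc-kds-unary-binary-predicates}. First I would observe that keys are a special case of egds, so $\class{K}_2$ --- the class of sets of keys over unary and binary predicates --- is a class of sets of egds to which Theorem~\ref{theo:apc-egds} applies. Proposition~\ref{pro:apc-kds-unary-binary-predicates} establishes that $\class{K}_2$ has acyclicity-preserving chase. Plugging this into Theorem~\ref{theo:apc-egds}, whose hypothesis is exactly that $\class{C}$ has acyclicity-preserving chase and whose conclusion is that $\SA(\class{C})$ is \NP-complete even when only unary and binary predicates are allowed, yields \NP-completeness of $\SA(\class{K}_2)$ directly.

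For completeness I would spell out how the two bounds inside Theorem~\ref{theo:apc-egds} specialize here. The \NP~upper bound follows the egd analogue of the decision procedure behind Theorem~\ref{theo:apc}: because $\class{K}_2$ has acyclicity-preserving chase, the egd version of the small query property (the counterpart of Proposition~\ref{prop:apc-small-query-property}) guarantees that whenever $q$ is semantically acyclic under a set $\dep \in \class{K}_2$, there is an equivalent acyclic CQ $q'$ with $|q'| \le 2 \cdot |q|$. One guesses such a $q'$ and verifies $q \equiv_\dep q'$ via the two containment checks; each check is feasible in \NP~since, by Lemma~\ref{lemma:cq-equiv-tgds}, containment under egds reduces to evaluating one CQ over the chase, which for egds is finite and computable in polynomial time. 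The \NP~lower bound is inherited directly: taking $\dep = \emptyset$ (the empty set of keys, trivially in $\class{K}_2$), semantic acyclicity under $\dep$ coincides with semantic acyclicity in the constraint-free case, which is already \NP-hard for Boolean CQs over a single binary relation by~\cite{DaKV02}.

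The only point that needs care is that the restriction to unary and binary predicates in the statement of Theorem~\ref{theo:apc-egds} matches the defining restriction of $\class{K}_2$, so that the reduction stays inside the class and the hard instance of~\cite{DaKV02} genuinely lives in $\class{K}_2$. Since both the upper-bound machinery and the lower-bound instance respect the same signature restriction, there is no residual obstacle: the substantive work has already been discharged in Proposition~\ref{pro:apc-kds-unary-binary-predicates} --- showing that two-attribute keys cannot destroy acyclicity, in sharp contrast to the ternary and higher-arity examples of Examples~\ref{exa:keys-violate-acyclicity} and~\ref{exa:tree-to-grid} --- and in the generic argument of Theorem~\ref{theo:apc-egds}. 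Hence the theorem is essentially a corollary, and I would present it as such.
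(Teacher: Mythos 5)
Your proposal is correct and follows exactly the paper's route: the theorem is obtained as an immediate corollary of Theorem~\ref{theo:apc-egds} (the generic \NP-completeness result for egd classes with acyclicity-preserving chase) combined with Proposition~\ref{pro:apc-kds-unary-binary-predicates}, and your elaboration of the upper bound (guess a small acyclic equivalent, verify containment via the polynomial-time egd chase) and the lower bound (inherited from the constraint-free case over a binary relation, per~\cite{DaKV02}) matches the paper's own justification of Theorem~\ref{theo:apc-egds}.
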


Interestingly, Theorem~\ref{the:unary-binary-keys} can be extended to \emph{unary functional dependencies} (over unconstrained signatures), that is, FDs of the form $R : A \ra B$, where $R$ is a relational symbol of arity $n > 0$ and the cardinality of $A$ is one. This result has been established independently by Figueira~\cite{Figu16}.
Let us recall that egds ensure the finite controllability of CQ containment. Consequently, Theorem~\ref{the:unary-binary-keys} holds even for \FSA, which takes as input a CQ $q$ and a set $\dep$ of egds, and asks for the existence of an acyclic CQ $q'$ such that $q$ and $q'$ are equivalent over all finite databases that satisfy $\dep$.

\section{Evaluation of semantically \\ acyclic queries}\label{sec:evaluation}

As it has been noted in different scenarios in
the absence of constraints, semantic
acyclicity has a positive impact on query evaluation
\cite{BaLR14,BPS15,BRV13}.
We observe here that such good behavior extends to the notion of
semantic acyclicity for CQs
under the decidable classes of constraints we
identified in the previous sections.
In particular, evaluation of semantically acyclic CQs under constraints
in such classes is a {\em
  fixed-parameter tractable} (fpt) problem, assuming the parameter to be $|q| +
|\Sigma|$. (Here, $|q|$ and $|\Sigma|$ represent the size of
reasonable encodings of $q$ and $\Sigma$, respectively). Recall that
this means that the problem can be solved in time
$O(|D|^c \cdot f(|q| + |\Sigma|))$, for $c \geq 1$ and $f$ a computable
function.

Let $\class{C}$ be a class of sets of tgds.
We define $\saeval(\class{C})$ to be the following problem:
The input consists of a set of constraints $\Sigma$ in $\class{C}$, a
semantically acyclic CQ $q$ under $\Sigma$, a database $D$ such that $D \models \Sigma$,
and a tuple $\bar t$ of
elements in $D$. We ask
whether $\bar t \in q(D)$.

\begin{proposition} \label{prop:fpt}
$\saeval(\class{C})$ can be solved in time
\[
O\left(|D| \cdot
2^{2^{O(|q| + |\Sigma|)}}\right),
\]
where $\class{C} \in \{\guarded,\nr,\sticky\}$.
\end{proposition}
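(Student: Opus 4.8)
The plan is to realize the three-step algorithm sketched in the introduction and to charge its cost so that the dependence on $|D|$ is linear while all the expensive work is confined to a function of the parameter $|q| + |\dep|$. Concretely, given an input $(\dep, q, D, \bar t)$ with $q$ semantically acyclic under $\dep$ and $D \models \dep$, the algorithm (i) computes an acyclic CQ $q'$ with $q \equiv_{\dep} q'$, (ii) evaluates $q'$ on $D$, and (iii) answers ``yes'' iff $\bar t \in q'(D)$. Correctness is immediate: since $q \equiv_{\dep} q'$ and $D \models \dep$, we have $q(D) = q'(D)$, so $\bar t \in q(D)$ iff $\bar t \in q'(D)$.

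First I would bound the size of $q'$ and argue that it can be produced without ever inspecting $D$. Because $q$ is promised to be semantically acyclic under $\dep$, a suitable $q'$ exists, and the small-query properties established earlier pin down its size: for $\class{C} = \guarded$, Proposition~\ref{prop:apc-small-query-property} (via acyclicity-preserving chase) gives $|q'| \leq 2 \cdot |q|$; for $\class{C} \in \{\nr,\sticky\}$, Proposition~\ref{prop:ucq-rewritability-small-query-property} (via UCQ rewritability) gives $|q'| \leq 2 \cdot f_{\class{C}}(q,\dep)$, and since $f_{\class{C}}(q,\dep) = p_{q,\dep} \cdot (a_{q,\dep} \cdot |q| + 1)^{a_{q,\dep}}$ is at most single-exponential in $|q| + |\dep|$, in all three cases $|q'| \leq 2^{O(|q| + |\dep|)}$. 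To produce such a $q'$ I would determinize the guess-and-check procedures underlying Theorems~\ref{the:sa-guarded}, \ref{the:sa-nr} and~\ref{the:sa-sticky}: enumerate all acyclic CQs over the schema of $q$ and $\dep$ up to the above size bound, and for each candidate test $q \equiv_{\dep} q'$ with the containment algorithm for $\class{C}$; the promise guarantees that at least one candidate passes. There are at most doubly-exponentially many candidates in $|q| + |\dep|$, and each of the two containment checks costs at most doubly-exponential deterministic time, so the whole of step (i) runs in time $2^{2^{O(|q| + |\dep|)}}$ and, crucially, is independent of $D$.

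Next I would charge step (ii). Since $q'$ is acyclic, Yannakakis's algorithm~\cite{Yan81} evaluates it in time $O(|q'| \cdot |D|) = O(2^{O(|q| + |\dep|)} \cdot |D|)$. Adding the two contributions gives a total running time of $2^{2^{O(|q| + |\dep|)}} + O(2^{O(|q| + |\dep|)} \cdot |D|)$, which is bounded by $O\!\left(|D| \cdot 2^{2^{O(|q| + |\dep|)}}\right)$ since $|D| \geq 1$. This is exactly the stated bound, and it exhibits $\saeval(\class{C})$ as fixed-parameter tractable with parameter $|q| + |\dep|$.

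The main obstacle I anticipate lies in step (i) for the UCQ-rewritable classes $\nr$ and $\sticky$: when the arity is unbounded the reformulation $q'$ may have exponential size, so a naive deterministic containment test between the exponential-size $q'$ and $q$ could a priori cost triply-exponential time and break the bound. The remedy is precisely the refined chase-length analysis from the proofs of Theorems~\ref{the:sa-nr} and~\ref{the:sa-sticky}: the relevant containment check remains in \NEXP\ measured in the original parameter $|q| + |\dep|$ (the chase sequence witnessing containment has exponential length but is guessed nondeterministically), so its determinization, and that of the whole guess-and-check search, stays within a doubly-exponential budget. The guarded case is easier, since there $|q'|$ is only linear and $\Cont(\guarded)$ is \twoexptime; and the remaining ingredients — correctness of the reformulation and the linear-in-$|D|$ evaluation via Yannakakis — are routine.
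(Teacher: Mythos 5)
Your proposal is correct and follows essentially the same route as the paper: determinize the guess-and-check procedure to produce an equivalent acyclic $q'$ of at most exponential size in $|q|+|\Sigma|$ within doubly-exponential time independent of $D$, then evaluate $q'$ on $D$ via Yannakakis in time $O(|D|\cdot 2^{O(|q|+|\Sigma|)})$, and absorb both terms into the stated bound. The paper states this more tersely; your additional care about determinizing the \NEXP\ containment checks for $\nr$ and $\sticky$ is a faithful elaboration of what the paper leaves implicit, not a departure from it.
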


\begin{proof}
Our results state that for $\class{C} \in \{\guarded,\nr,\sticky\}$,
checking if a CQ $q$ is semantically
acyclic under $\class{C}$ can be done in double-exponential
time. More importantly, in case that $q$ is in fact semantically
acyclic under $\class{C}$ our proof techniques yield an equivalent acyclic CQ
$q'$ of at most exponential size in $|q| +
|\Sigma|$.
 We then compute and evaluate such a query $q'$ on $D$, and return $q(D) = q'(D)$. Clearly, this can be done in
time
\[
O\left(2^{2^{O(|q| + |\Sigma|)}}\right)\ +\ O\left(|D| \cdot
2^{O(|q| + |\Sigma|)}\right).
\]
The running time of this algorithm is dominated by
\[
O\left(|D| \cdot 2^{2^{O(|q| + |\Sigma|)}}\right)
\]
and the claim follows.
\end{proof}

This is an improvement over general CQ
evaluation for which no fpt
algorithm is believed to exist \cite{PY99}. It is worth remarking,
nonetheless,
that $\saeval(\class{C})$ corresponds to
a {\em promise version} of the evaluation problem, where
the property that defines the class is
\exptime-hard for all the $\class{C}$'s  studied in Proposition \ref{prop:fpt}.

The above algorithm computes
an equivalent acyclic CQ $q'$ for a semantically
acyclic CQ $q$ under a set of constraints in $\class{C}$. This might take
double-exponential time. Surprisingly,
computing such $q'$ is not always needed at the moment of evaluating semantically
acyclic CQs under constraints.
In particular, this holds for
the sets of guarded tgds. In fact,
in such case evaluating a semantically acylic CQ $q$ under
$\Sigma$ over a database $D$ that satisfies $\Sigma$
amounts to checking a polynomial time property over $q$ and $D$. It
follows, in addition, that the evaluation problem for semantically acyclic CQs under
guarded tgds is tractable:

\begin{theorem} \label{theo:eval-guarded}
$\saeval(\guarded)$  is in polynomial time.
\end{theorem}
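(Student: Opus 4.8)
The plan is to avoid materialising the (double-exponentially large) acyclic rewriting altogether, and instead reduce the question ``$\bar t \in q(D)$'' to a homomorphism test into $D$ that, thanks to acyclicity-preserving chase, can be settled by a width-independent consistency (semijoin) fixpoint running in polynomial time.

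First I would rephrase evaluation as a homomorphism problem on the chase. Let $c(\bar x)$ be the constants freezing the free variables of $q$. Using the universal property of the chase (the same property underlying Lemma~\ref{lemma:cq-equiv-tgds}), for every database $D$ with $D \models \dep$ one has $\bar t \in q(D)$ iff there is a homomorphism $\chase{q}{\dep} \to D$ sending $c(\bar x)$ to $\bar t$: a homomorphism $q \to D$ extends to $\chase{q}{\dep}$ because $D \models \dep$, and conversely it restricts back to $q$. Next I would replace $\chase{q}{\dep}$ by an acyclic structure. Since $q$ is semantically acyclic under $\dep$, Proposition~\ref{prop:apc-small-query-property} yields an acyclic CQ $q_0$ with $q \equiv_{\dep} q_0$; by Proposition~\ref{pro:guarded-tgds-apc} (acyclicity-preserving chase) $\chase{q_0}{\dep}$ is acyclic, i.e.\ it admits a join tree. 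Two applications of Lemma~\ref{lemma:cq-equiv-tgds} show that $\chase{q}{\dep}$ and $\chase{q_0}{\dep}$ are homomorphically equivalent via maps fixing $c(\bar x)$. Hence $\bar t \in q(D)$ iff there is a homomorphism $\chase{q_0}{\dep} \to D$ sending $c(\bar x)$ to $\bar t$; that is, the target of the homomorphism test is now an acyclic (though infinite) instance.

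The remaining task is to decide the existence of this homomorphism in polynomial time, directly from $q$, $\dep$ and $D$ (without building $q_0$ or the infinite chase). I would run a least-fixpoint semijoin/arc-consistency computation over the guarded atoms produced by chasing $q$: because $\dep$ is guarded, every bag of the join tree of the chase is a single atom whose candidate images are facts of $D$ (and thus at most $|D|$ of them, regardless of the arity), and the elements that must be reused across atoms are exactly the guarded ones. Propagating, along the tree, the set of facts of $D$ that can realise each chase atom, and seeding the root with the constraint $c(\bar x) \mapsto \bar t$, reaches a fixpoint after polynomially many steps, since only the finitely many relation symbols of $\dep$ and the at most $|D|$ facts per symbol can occur as states. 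The homomorphism exists iff this fixpoint is non-failing for $c(\bar x) \mapsto \bar t$; evaluating a genuinely acyclic query this way is exactly Yannakakis' linear-time procedure~\cite{Yan81}, so the whole test is polynomial in $|q|+|\dep|+|D|$. This is the ``polynomial-time property over $q$ and $D$'' that decides $\bar t \in q(D)$.

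The soundness of the test (a homomorphism yields local consistency) is immediate. The main obstacle is completeness: I must show that non-failure of the local, tree-propagated consistency actually guarantees a global homomorphism $\chase{q_0}{\dep} \to D$. This is precisely where the acyclicity granted by Proposition~\ref{pro:guarded-tgds-apc} is indispensable---on acyclic instances, semijoin-consistent partial solutions glue along the join tree into a total homomorphism, with no obstruction beyond the ones detected locally---and it is the reason the argument fails for the non-recursive and sticky classes, whose chases destroy acyclicity (Example~\ref{exa:nr-sticky-violate-apc}). Two technical points need care in carrying this out: transferring the ``finite-acyclic'' completeness of semijoin consistency to the infinite chase-generated join tree (handled by the guarded fixpoint stabilising after polynomially many distinct atom-states), and verifying that the homomorphic equivalence between $\chase{q}{\dep}$ and $\chase{q_0}{\dep}$ indeed preserves the distinguished constants $c(\bar x)$, so that the reduction to the acyclic target is legitimate.
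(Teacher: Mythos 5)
Your overall strategy---avoid materialising the acyclic rewriting and decide $\bar t \in q(D)$ by a local-consistency test on a chase-related structure---is in the same spirit as the paper's proof, and your first two reductions are sound: $\bar t \in q(D)$ iff there is a homomorphism from $\chase{q}{\Sigma}$ to $D$ mapping $c(\bar x)$ to $\bar t$ (since $D \models \Sigma$), and this chase is homomorphically equivalent, fixing $c(\bar x)$, to the acyclic instance $\chase{q_0}{\Sigma}$. The gap is in the third step, which is where all the real work lies. Your semijoin/arc-consistency computation is defined over ``the guarded atoms produced by chasing $q$'', i.e., over an infinite object: $\chase{q_0}{\Sigma}$ (and $\chase{q}{\Sigma}$) has in general infinitely many atoms over infinitely many distinct nulls, so the propagation you describe has no finite node set to iterate over, and the asserted termination ``after polynomially many distinct atom-states'' is neither well-defined (a state would have to pair a chase atom with a set of candidate facts of $D$, and there are exponentially many such sets) nor proved. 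There is also a structural mismatch: $\chase{q}{\Sigma}$ is \emph{not} acyclic (only $\chase{q_0}{\Sigma}$ is, and $q_0$ is exactly what you refuse to compute), so the join tree along which your completeness argument glues partial solutions is not available for the instance on which you propose to run the algorithm.

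What is missing is precisely the paper's key technical lemma (Lemma~\ref{lemma:gcg}): for guarded $\Sigma$ and any $D \models \Sigma$, the duplicator wins the existential 1-cover game on $(\chase{q}{\Sigma},\bar x)$ versus $(D,\bar t)$ if and only if he wins it on $(q,\bar x)$ versus $(D,\bar t)$. This is the statement that transfers the consistency check from the infinite chase to the \emph{finite} input $q$, where it is decidable in polynomial time (Proposition~\ref{prop:games-pol}); its proof is a genuine induction along the chase sequence, in which guardedness is used to extend the duplicator's strategy on the guard atom to the freshly created atoms, using witnesses that exist in $D$ because $D \models \Sigma$. Your proposal asserts the conclusion of this lemma (``the guarded fixpoint stabilises'') but supplies no argument that plays guardedness against the infinite tree. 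Incidentally, the paper's route does not invoke acyclicity-preserving chase at all: in place of your use of Proposition~\ref{pro:guarded-tgds-apc}, it applies the fact that acyclic queries are preserved by the game (Proposition~\ref{prop:games-acyclic}) to the acyclic CQ equivalent to $q$, so the equivalence ``$\bar t \in q(D)$ iff the chase is game-equivalent to $(D,\bar t)$'' (Proposition~\ref{prop:games-chase}) holds under arbitrary tgds, with guardedness isolated entirely in Lemma~\ref{lemma:gcg}. Until you prove that lemma, or an equivalent finite reduction, your algorithm is not well-defined and the polynomial bound is unjustified.
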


The idea behind the proof of the above theorem is as follows.
When $q$ is a semantically acyclic CQ in the absence of
constraints, evaluating $q$ on $D$ amounts to
checking the existence of a winning strategy for the duplicator in a
particular version of the pebble game, known as the {\em existential
  1-cover game}, on $q$ and $D$ \cite{ChenD05}.
We denote this by $q \equiv_{\exists 1 c} D$.
The existence of
such winning strategy can be checked in polynomial time \cite{ChenD05}. Now, when
$q$ is semantically acyclic under an arbitrary set $\Sigma$ of tgds or egds, we show that
evaluating $q$ on $D$ amounts to checking whether $\chase{q}{\Sigma}
\equiv_{\exists 1 c} D$. When $\Sigma$ is a set of guarded tgds, we
prove in addition that $\chase{q}{\Sigma}
\equiv_{\exists 1 c} D$ iff $q \equiv_{\exists 1 c} D$.
Thus, $\saeval(\guarded)$ is tractable since
checking $q \equiv_{\exists1c} D$
is tractable.

The fact that the evaluation of $q$ on $D$ boils down to
checking whether $\chase{q}{\Sigma}
\equiv_{\exists 1 c} D$, when $q$ is semantically acyclic under $\Sigma$,
also yields tractability for $\saeval(\class{C})$,
where $\class{C}$ is any class of sets of egds for which the chase can be computed in
polynomial time. This includes the central class of FDs. Notice, however, that
we do not know whether $\SA$ under FDs is decidable. 
\section{Further Advancements}\label{sec:further-advancements}

In this section we informally discuss the fact that our previous results on semantic acyclicity under tgds and CQs can be extended to UCQs. Moreover, we show that our techniques establish the existence of maximally contained acyclic queries. 

\subsection{Unions of Conjunctive Queries}

It is reasonable to consider a more {\em liberal} version of semantic acyclicity under tgds based on UCQs. In such case we are given a UCQ $Q$ and a finite set $\dep$ of tgds, and the question is whether there is a union $Q'$ of acyclic CQs that is equivalent to $Q$ under $\Sigma$.
It can be shown that all the complexity results on semantic acyclicity under tgds presented above continue to hold even when the input query is a UCQ. This is done by extending the small query properties established for CQs (Propositions~\ref{prop:apc-small-query-property} and~\ref{prop:ucq-rewritability-small-query-property}) to UCQs.

Consider a finite set $\dep$ of tgds (that falls in one of the tgd-based classes considered above), and a UCQ $Q$. If $Q$ is semantically acyclic under $\dep$, then one of the following holds: (i) for each disjunct $q$ of $Q$, there exists an acyclic CQ $q'$ of bounded size (the exact size of $q'$ depends on the class of $\dep$) such that $q \equiv_{\dep} q'$, or (ii) $q$ is redundant in $Q$, i.e., there exists a disjunct $q'$ of $Q$ such that $q \subseteq_{\dep} q'$.
Having this property in place, we can then design a non-deterministic algorithm for semantic acyclicity, which provides the desired upper bounds. Roughly, for each disjunct $q$ of $Q$, this algorithm guesses whether there exists an acyclic CQ $q'$ of bounded size such that $q \equiv_{\dep} q'$, or $q$ is redundant in $Q$.
The desired lower bounds are inherited from semantic acyclicity in the case of CQs.

\subsection{Query Approximations}

Let $\class{C}$ be any of the decidable classes of finite sets of tgds we
study in this paper (i.e., $\guarded$, $\nr$, or $\sticky$). Then, for any CQ $q$ without constants\footnote{Approximations for CQs with constants are
  not well-understood, even in the absence of constraints \cite{BaLR14}.}
and set $\Sigma$ of constraints in $\class{C}$, our techniques yield a {\em
  maximally contained} acyclic CQ $q'$ under $\Sigma$. This means
that $q' \subseteq_\Sigma q$ and there is no acyclic CQ $q''$ such that
$q'' \subseteq_\Sigma q$ and $q' \subsetneq_\Sigma q''$.
Following the recent database literature,
such $q'$ corresponds to an {\em acyclic CQ
  approximation of $q$ under $\Sigma$}
\cite{BaLR14,BPS15,BRV13}. Notice that when $q$ is semantically
acyclic under $\Sigma$, this acyclic approximation $q'$ is in fact
equivalent to $q$ under $\Sigma$. Computing and evaluating an acyclic
CQ approximation for $q$ might help finding ``quick'' (i.e., fixed-parameter
tractable) answers to it when exact evaluation is infeasible.

The way in which we obtain approximations is by slightly
reformulating the small query properties established in the paper
(Propositions~\ref{prop:apc-small-query-property} and~\ref{prop:ucq-rewritability-small-query-property}). Instead of dealing with semantically acyclic CQs only, we are now given an arbitrary CQ
$q$. In all cases the reformulation expresses the following:
For every acyclic CQ $q'$ such that $q' \subseteq_\Sigma q$,
there is an acyclic CQ $q''$ of the appropriate size $f(q,\Sigma)$
such that $q' \subseteq_\Sigma q'' \subseteq_\Sigma q$. It is easy to
prove that for each CQ $q$ there exists at least one acyclic CQ $q'$
such that $q' \subseteq_\Sigma q$; take a single variable $x$ and
add a tuple $R(x,\dots,x)$ for each symbol $R$. It follows
then that an acyclic CQ approximation of $q$ under $\Sigma$ can always
be found among the set $\A(q)$ of acyclic CQs $q'$ of size at most $f(q,\Sigma)$ such that $q' \subseteq_\Sigma q$. In fact, the acyclic CQ approximations of $q$ under $\Sigma$ are the maximal
elements of $\A(q)$ under $\subseteq_\Sigma$.

\section{Conclusions}\label{sec:conclusions}

We have concentrated on the problem of semantic acyclicity for CQs in the presence of database constraints; in fact, tgds or egds. Surprisingly, we have shown that there are cases such as the class of full tgds, where containment is decidable, while semantic acyclicity is undecidable.
We have then focussed on the main classes of tgds for which CQ containment is decidable, and do not subsume full tgds, i.e., guarded, non-recusrive and sticky tgds. For these classes we have shown that semantic acyclicity is decidable, and obtained several complexity results. We have also shown that semantic acyclicity is \NP-complete if we focus on keys over unary and binary predicates.
Finally, we have considered the problem of evaluating a semantically acyclic CQ over a database that satisfies certain constraints, and shown that for guarded tgds and FDs is tractable.
Here are some interesting open problems that we are planning to investigate: (i) The complexity of semantic acyclicity under sticky sets of tgds is still unknown;
(ii) We do not know whether semantic acyclicity under keys over unconstrained signatures is decidable; and (iii) We do not know the complexity of evaluating semantically acyclic queries under $\class{NR}$, $\class{S}$ and egds.

\smallskip
\noindent
{\small
{\bf Acknowledgements:} Barcel\'o would like to thank
D. Figueira, M. Romero, S. Rudolph, and N. Schweikardt for insightful
discussions about the nature of
semantic acyclicity under constraints.
}

\bibliographystyle{abbrv}

\onecolumn

\newpage

\section*{Appendix}

\subsection*{Proof of Proposition~\ref{prop:gral-reduction}}

It is not difficult to show the following result, which reveals the usefulness of connectedness:

\begin{lemma}\label{lem:distribution-over-components}
Let $\dep$ be a finite set of body-connected tgds, $q$ a Boolean CQ, and $q'$ a Boolean and connected CQ. If $q \subseteq_{\dep} q'$, then there exists a maximally connected subquery $q''$ of $q$ such that $q'' \subseteq_{\dep} q'$.
\end{lemma}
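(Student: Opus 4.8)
The plan is to reduce everything to the chase characterization of containment (Lemma~\ref{lemma:cq-equiv-tgds}) and to exploit the fact that body-connected tgds cannot merge the connected components of $q$ during the chase. Write $q = q_1 \wedge \cdots \wedge q_k$, where $q_1,\ldots,q_k$ are the maximally connected subqueries of $q$, i.e., the connected components of its Gaifman graph. Since distinct components share no variables, after the variable-to-fresh-constant replacement used in the definition of the chase they become pairwise term-disjoint instances.

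The key structural step I would establish is that
\[
\chase{q}{\dep} \ = \ \bigsqcup_{i=1}^{k} \chase{q_i}{\dep},
\]
a term-disjoint union, using disjoint pools of fresh nulls for the different components. Because the result of a chase sequence is unique up to homomorphic equivalence, it suffices to verify that the right-hand side $J := \bigsqcup_i \chase{q_i}{\dep}$ is itself the result of some valid chase sequence for $q$ under $\dep$. The heart of this verification is the observation that, since every tgd in $\dep$ is body-connected, any trigger (a homomorphism of a body) into a term-disjoint union must have its image contained in a single component: the Gaifman graph of the body is connected, so the images of its atoms are linked through shared terms, and a term-connected set of atoms cannot straddle two term-disjoint parts. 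Consequently every applicable tgd acts entirely within one $\chase{q_i}{\dep}$; its head adds atoms built only from terms of that component together with brand-new nulls, so no term is ever shared across components. This shows both that $J \models \dep$ (every trigger already lives, and is satisfied, inside a single component) and that the individual chase sequences for the $q_i$ can be interleaved into one chase sequence for $q$ producing $J$. I would make this precise by induction on the length of a chase sequence, proving that each instance in the sequence respects the partition into term-disjoint parts, and then passing to the union.

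With the decomposition in hand the conclusion is immediate. By Lemma~\ref{lemma:cq-equiv-tgds}, in its Boolean form, $q \subseteq_\dep q'$ yields a homomorphism $h \colon q' \to \chase{q}{\dep} = \bigsqcup_i \chase{q_i}{\dep}$. Since $q'$ is connected, its image $h(q')$ is a term-connected set of atoms and therefore lies entirely within one part $\chase{q_i}{\dep}$. Thus $h$ witnesses a homomorphism $q' \to \chase{q_i}{\dep}$, which by Lemma~\ref{lemma:cq-equiv-tgds} means exactly $q_i \subseteq_\dep q'$. Taking $q'' = q_i$ proves the lemma.

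The main obstacle I expect is the careful justification of the chase decomposition, namely that body-connectedness really prevents components from ever merging; this rests on (i) the term-connectivity argument for images of connected bodies and (ii) a clean induction handling the possibly infinite chase together with the non-uniqueness of chase results, which I sidestep by working with the convenient result $J$ and invoking homomorphic equivalence. A minor point to address is the role of constants: if $q$ contains a constant shared between two Gaifman components, term-disjointness can fail, so I would either assume the CQs are constant-free, as is standard in this setting, or fold shared constants into the notion of connectedness.
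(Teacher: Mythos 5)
Your proof is correct, and it is the natural argument the paper has in mind: the paper omits the proof of this lemma entirely (it only remarks that it ``is not difficult to show''), so there is no official argument to compare against. The two ingredients you isolate --- that under body-connected tgds the chase of $q$ can be realized as a term-disjoint union $\bigcup_i \chase{q_i}{\dep}$ over the maximally connected subqueries $q_i$ (since a connected body cannot map across term-disjoint parts), and that a connected $q'$ must then map into a single part --- are exactly what is needed, and your handling of the non-uniqueness of the chase by exhibiting $J$ as one valid chase result and invoking homomorphic equivalence is the right way to make Lemma~\ref{lemma:cq-equiv-tgds} applicable; the only caveats are the ones you already flag (shared constants, and degenerate all-constant atoms, which do not arise in the paper's use of the lemma inside the proof of Proposition~\ref{prop:gral-reduction}).
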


Having the above result in place, we can now establish Proposition~\ref{prop:gral-reduction}. For brevity, let $q''$ be the CQ $q \wedge q'$.

\medskip

$(\Rightarrow)$ It is clear that $q \subseteq_{\dep} q''$. Moreover, $q'' \subseteq_{\dep} q$ holds trivially. Therefore, $q'' \equiv_{\dep} q$, and the claim follows since, by hypothesis, $q$ is acyclic.

\medskip

$(\Leftarrow)$ Since $\dep$ belongs to a class that ensures the finite controllability of containment, it suffices to show the following: If there exists an acyclic Boolean CQ $q_A$ such that $q'' \equiv_{\dep} q_A$, then $q \subseteq_{\dep} q'$.
Let $q_{A}^{1},\ldots,q_{A}^{k}$, where $k \geqslant 1$, be the maximally connected subqueries of $q_A$. Clearly, $q$ and $q'$ are the two maximally connected subqueries of $q''$. Therefore, by Lemma~\ref{lem:distribution-over-components}, for each $i \in \{1,\ldots,k\}$, $q \subseteq_{\dep} q_{A}^{i}$ or $q' \subseteq_{\dep} q_{A}^{i}$.
We define the following two sets of indices:
\[
S_q\ =\ \{i \in \{1,\ldots,k\} \mid q \subseteq_{\dep} q_{A}^{i}\} \qquad\textrm{and}\qquad S_{q'}\ =\ \{i \in \{1,\ldots,k\} \mid q' \subseteq_{\dep} q_{A}^{i} \textrm{~~and~~} q \not\subseteq_{\dep} q_{A}^{i}\};
\]
clearly, $S_q$ and $S_{q'}$ form a partition of $\{1,\ldots,k\}$. We proceed to show that $q \subseteq_{\dep} q'$ by considering the following three cases:

\medskip

\textbf{Case 1.} Assume first that $S_q = \emptyset$. This implies that, for each $i \in \{1,\ldots,k\}$, $q' \subseteq_{\dep} q_{A}^{i}$; thus, $q' \subseteq_{\dep} q_{A}$. By hypothesis, $q_A \subseteq_{\dep} q''$, which immediately implies that $q_A \subseteq_{\dep} q'$. Therefore, $q' \equiv_{\dep} q_A$, which contradicts our hypothesis that $q'$ is not semantically acyclic under $\dep$. Hence, the case where $S_q = \emptyset$ is not possible, and is excluded from our analysis.

\medskip

\textbf{Case 2.} Assume now that $S_{q'} = \emptyset$. This implies that, for each $i \in \{1,\ldots,k\}$, $q \subseteq_{\dep} q_{A}^{i}$; thus, $q \subseteq_{\dep} q_A$. By hypothesis, $q_A \subseteq_{\dep} q''$, which immediately implies that $q_A \subseteq_{\dep} q'$. Therefore, $q \subseteq_{\dep} q'$, as needed.

\medskip

\textbf{Case 3.} Finally, assume that $S_q \neq \emptyset$ and $S_{q'} \neq \emptyset$. Fix an arbitrary index $i \in S_{q'}$. Since $q' \subseteq_{\dep} q_{A}^{i}$ and $q'$ is not semantically acyclic under $\dep$, we conclude that $q_{A}^{i} \not\subseteq_{\dep} q'$; notice that $q_{A}^{i}$ is necessarily acyclic. Since $q_A \subseteq_{\dep} q'$, $\dep$ is body-connected and $q'$ is connected, Lemma~\ref{lem:distribution-over-components} implies that there exists $j \in \{1,\ldots,k\} \setminus \{i\}$ such that $q_{A}^{j} \subseteq_{\dep} q'$. Observe that $j \not\in S_{q'}$; otherwise, $q' \subseteq_{\dep} q_{A}^{j}$, and thus $q' \equiv_{\dep} q_{A}^{j}$, which contradicts the fact that $q'$ is not semantically acyclic under $\dep$. Since $j \in S_q$, $q \subseteq_{\dep} q_{A}^{j}$, and therefore $q \subseteq_{\dep} q'$, as needed. This completes our proof.

\subsection*{Proof of Theorem \ref{theo:full-undec}}

We reduce from the {\em Post correspondence problem} (PCP)
over alphabet $\Sigma = \{a,b\}$. Recall that the input to this problem are two
equally long lists $w_1,\dots,w_n$ and $w'_1,\dots,w'_n$
of words over $\Sigma$, and we ask whether there is a {\em solution}, i.e., a nonempty sequence
$i_1 \dots i_m$ of indices in $\{1,\dots,n\}$, such that $w_{i_1} \dots w_{i_m} =
w'_{i_1} \dots w'_{i_m}$. We assume without loss of generality that all
these words are of even length. Otherwise we simply replace in each word
each appearance of $a$ (resp., $b$) with $aa$ (resp., $bb$).

Let $w_1,\dots,w_n$ and $w'_1,\dots,w'_n$ be an instance of PCP as described above.
We construct a Boolean CQ $q$ and a
set of full tgds $\Sigma$ over the schema: $$\{P_a,P_b,P_\#,P_*,{\rm sync},{\rm
  start},{\rm end}\},$$
where $P_a$, $P_b$, $P_\#$, $P_*$, and ${\rm sync}$ are binary relation
symbols, and the other ones are unary relation symbols,
such that the PCP instance given by $w_1,\dots,w_n$ and $w'_1,\dots,w'_n$ has a solution
if and only if there exists an acylic CQ $q'$
such that $q \equiv_\Sigma q'$.

We start with a temporary version of $q$.
This query will have to be modified later in order to
make the proof work, but it is convenient to work with this version
for the time being in order to simplify the presentation.
The restriction of our Boolean CQ $q$ to those relation symbols that
are not ${\rm sync}$ is graphically defined as follows:
\begin{center}
\begin{picture}(0,0)%
\includegraphics{cq1.pstex}%
\end{picture}%
\setlength{\unitlength}{2368sp}%
\begingroup\makeatletter\ifx\SetFigFont\undefined%
\gdef\SetFigFont#1#2#3#4#5{%
  \reset@font\fontsize{#1}{#2pt}%
  \fontfamily{#3}\fontseries{#4}\fontshape{#5}%
  \selectfont}%
\fi\endgroup%
\begin{picture}(4523,3027)(3961,-4780)
\put(8326,-3586){\makebox(0,0)[lb]{\smash{{\SetFigFont{7}{8.4}{\familydefault}{\mddefault}{\updefault}{\color[rgb]{0,0,0}${\rm end}$}%
}}}}
\put(4726,-2761){\makebox(0,0)[lb]{\smash{{\SetFigFont{7}{8.4}{\familydefault}{\mddefault}{\updefault}{\color[rgb]{0,0,0}$P_\#$}%
}}}}
\put(4801,-3286){\makebox(0,0)[lb]{\smash{{\SetFigFont{7}{8.4}{\familydefault}{\mddefault}{\updefault}{\color[rgb]{0,0,0}$P_\#$}%
}}}}
\put(4726,-4036){\makebox(0,0)[lb]{\smash{{\SetFigFont{7}{8.4}{\familydefault}{\mddefault}{\updefault}{\color[rgb]{0,0,0}$P_\#$}%
}}}}
\put(5476,-3286){\makebox(0,0)[lb]{\smash{{\SetFigFont{7}{8.4}{\familydefault}{\mddefault}{\updefault}{\color[rgb]{0,0,0}$P_a$}%
}}}}
\put(6976,-4486){\makebox(0,0)[lb]{\smash{{\SetFigFont{7}{8.4}{\familydefault}{\mddefault}{\updefault}{\color[rgb]{0,0,0}$P_a$}%
}}}}
\put(6376,-3961){\makebox(0,0)[lb]{\smash{{\SetFigFont{7}{8.4}{\familydefault}{\mddefault}{\updefault}{\color[rgb]{0,0,0}$P_b$}%
}}}}
\put(6001,-3286){\makebox(0,0)[lb]{\smash{{\SetFigFont{7}{8.4}{\familydefault}{\mddefault}{\updefault}{\color[rgb]{0,0,0}$P_b$}%
}}}}
\put(6676,-1936){\makebox(0,0)[lb]{\smash{{\SetFigFont{7}{8.4}{\familydefault}{\mddefault}{\updefault}{\color[rgb]{0,0,0}$P_a$}%
}}}}
\put(6376,-2836){\makebox(0,0)[lb]{\smash{{\SetFigFont{7}{8.4}{\familydefault}{\mddefault}{\updefault}{\color[rgb]{0,0,0}$P_b$}%
}}}}
\put(7576,-2986){\makebox(0,0)[lb]{\smash{{\SetFigFont{7}{8.4}{\familydefault}{\mddefault}{\updefault}{\color[rgb]{0,0,0}$P_*$}%
}}}}
\put(7426,-3286){\makebox(0,0)[lb]{\smash{{\SetFigFont{7}{8.4}{\familydefault}{\mddefault}{\updefault}{\color[rgb]{0,0,0}$P_*$}%
}}}}
\put(7501,-3586){\makebox(0,0)[lb]{\smash{{\SetFigFont{7}{8.4}{\familydefault}{\mddefault}{\updefault}{\color[rgb]{0,0,0}$P_*$}%
}}}}
\put(3976,-3511){\makebox(0,0)[lb]{\smash{{\SetFigFont{7}{8.4}{\familydefault}{\mddefault}{\updefault}{\color[rgb]{0,0,0}${\rm start}$}%
}}}}
\put(4351,-3211){\makebox(0,0)[lb]{\smash{{\SetFigFont{7}{8.4}{\familydefault}{\mddefault}{\updefault}{\color[rgb]{0,0,0}$x$}%
}}}}
\put(5476,-2086){\makebox(0,0)[lb]{\smash{{\SetFigFont{7}{8.4}{\familydefault}{\mddefault}{\updefault}{\color[rgb]{0,0,0}$y$}%
}}}}
\put(7276,-3511){\makebox(0,0)[lb]{\smash{{\SetFigFont{7}{8.4}{\familydefault}{\mddefault}{\updefault}{\color[rgb]{0,0,0}$u$}%
}}}}
\put(5401,-4711){\makebox(0,0)[lb]{\smash{{\SetFigFont{7}{8.4}{\familydefault}{\mddefault}{\updefault}{\color[rgb]{0,0,0}$z$}%
}}}}
\put(8326,-3211){\makebox(0,0)[lb]{\smash{{\SetFigFont{7}{8.4}{\familydefault}{\mddefault}{\updefault}{\color[rgb]{0,0,0}$v$}%
}}}}
\end{picture}%

\end{center}
Here, $x,y,z,u,v$ denote the names of the respective variables. The
interpretation of ${\rm sync}$ in $q$ consists of all pairs in $\{y,u,z\}$.

Our set $\Sigma$ of full tgds consists of the following:
\begin{enumerate}
\item An initialization rule:
$${\rm start}(x),P_\#(x,y) \, \rightarrow \, {\rm sync}(y,y).$$
\item A synchronization rule:
$${\rm sync}(x,y),P_{w_i}(x,z),P_{w'_i}(y,u) \, \rightarrow \, {\rm
  sync}(z,u),$$
for each $1 \leq i \leq n$. Here, $P_{w}(x,y)$, for a nonepty word $w = a_1 \dots a_t \in
\Sigma^*$, is a shortening for
$P_{a_1}(x,x_1),\dots,P_{a_t}(x_{t-1},y)$, where the $x_i$'s are fresh
variables.
\item For each $1 \leq i \leq n$, a pair of finalization rules defined as follows.
First:
\begin{multline*}
{\rm start}(x),P_a(y,z),P_a(z,u),{\rm sync}(y',y''),P_{w_i}(y',y),P_{w'_i}(y'',y),
P_*(u,v),{\rm end}(v) \, \rightarrow
 \\
P_\#(x,y),P_\#(x,z),P_\#(x,u),P_*(y,v),P_*(z,v),P_b(z,y),P_b(u,z),P_a(u,y),P_b(y,u).
\end{multline*}
Second:
\begin{multline*}
{\rm start}(x),P_a(y,z),P_a(z,u),{\rm sync}(y',y''),P_{w_i}(y',y),P_{w'_i}(y'',y),
P_*(u,v),{\rm end}(v) \, \rightarrow
 \\
{\rm sync}(y,y),{\rm sync}(z,z),{\rm sync}(y,z),{\rm sync}(z,y),{\rm sync}(y,u),{\rm sync}(u,y),{\rm sync}(z,u),{\rm sync}(u,z).
\end{multline*}
Notice that these two tgds can be expressed into one since they have the same
body. We split in two for the sake of presentation.
\end{enumerate}

We first show that if the PCP instance has a solution given by the
nonempty sequence $i_1 \dots i_m$, with $1 \leq i_1,\dots,i_m \leq n$,
then there exists an acylic CQ $q'$
such that $q \equiv_\Sigma q'$. Let us assume that
$w_{i_1} \dots w_{i_m} = a_1 \dots a_t$, where each $a_i$ is a symbol
in $\Sigma$. We prove next that $q \equiv_\Sigma q'$, where
$q'$ is the Boolean acyclic CQ depicted below:
\begin{center}
\begin{picture}(0,0)%
\includegraphics{acq.pstex}%
\end{picture}%
\setlength{\unitlength}{2368sp}%
\begingroup\makeatletter\ifx\SetFigFont\undefined%
\gdef\SetFigFont#1#2#3#4#5{%
  \reset@font\fontsize{#1}{#2pt}%
  \fontfamily{#3}\fontseries{#4}\fontshape{#5}%
  \selectfont}%
\fi\endgroup%
\begin{picture}(7373,570)(2911,-1846)
\put(9601,-1486){\makebox(0,0)[lb]{\smash{{\SetFigFont{6}{7.2}{\familydefault}{\mddefault}{\updefault}{\color[rgb]{0,0,0}$P_{*}$}%
}}}}
\put(2926,-1786){\makebox(0,0)[lb]{\smash{{\SetFigFont{6}{7.2}{\familydefault}{\mddefault}{\updefault}{\color[rgb]{0,0,0}${\rm start}$}%
}}}}
\put(3451,-1486){\makebox(0,0)[lb]{\smash{{\SetFigFont{6}{7.2}{\familydefault}{\mddefault}{\updefault}{\color[rgb]{0,0,0}$P_{\#}$}%
}}}}
\put(4351,-1486){\makebox(0,0)[lb]{\smash{{\SetFigFont{6}{7.2}{\familydefault}{\mddefault}{\updefault}{\color[rgb]{0,0,0}$P_{a_1}$}%
}}}}
\put(3151,-1411){\makebox(0,0)[lb]{\smash{{\SetFigFont{6}{7.2}{\familydefault}{\mddefault}{\updefault}{\color[rgb]{0,0,0}$x'$}%
}}}}
\put(10051,-1786){\makebox(0,0)[lb]{\smash{{\SetFigFont{6}{7.2}{\familydefault}{\mddefault}{\updefault}{\color[rgb]{0,0,0}${\rm end}$}%
}}}}
\put(7501,-1411){\makebox(0,0)[lb]{\smash{{\SetFigFont{6}{7.2}{\familydefault}{\mddefault}{\updefault}{\color[rgb]{0,0,0}$y'$}%
}}}}
\put(8401,-1411){\makebox(0,0)[lb]{\smash{{\SetFigFont{6}{7.2}{\familydefault}{\mddefault}{\updefault}{\color[rgb]{0,0,0}$z'$}%
}}}}
\put(9301,-1411){\makebox(0,0)[lb]{\smash{{\SetFigFont{6}{7.2}{\familydefault}{\mddefault}{\updefault}{\color[rgb]{0,0,0}$u'$}%
}}}}
\put(10201,-1411){\makebox(0,0)[lb]{\smash{{\SetFigFont{6}{7.2}{\familydefault}{\mddefault}{\updefault}{\color[rgb]{0,0,0}$v'$}%
}}}}
\put(6826,-1486){\makebox(0,0)[lb]{\smash{{\SetFigFont{6}{7.2}{\familydefault}{\mddefault}{\updefault}{\color[rgb]{0,0,0}$P_{a_t}$}%
}}}}
\put(5476,-1561){\makebox(0,0)[lb]{\smash{{\SetFigFont{11}{13.2}{\familydefault}{\mddefault}{\updefault}{\color[rgb]{0,0,0}$\dots$}%
}}}}
\put(7801,-1486){\makebox(0,0)[lb]{\smash{{\SetFigFont{6}{7.2}{\familydefault}{\mddefault}{\updefault}{\color[rgb]{0,0,0}$P_{a}$}%
}}}}
\put(8701,-1486){\makebox(0,0)[lb]{\smash{{\SetFigFont{6}{7.2}{\familydefault}{\mddefault}{\updefault}{\color[rgb]{0,0,0}$P_{a}$}%
}}}}
\end{picture}%

\end{center}
Here, again, $x',y',z',u',v'$ denote the names of the respective
variables of $q'$.  It is clear
that all these elements are different.

 In virtue of Lemma \ref{lemma:cq-equiv-tgds}, we only need to show
 that $\chase{q}{\Sigma} \equiv \chase{q'}{\Sigma}$. It is well-known
 that the latter is equivalent to showing that $\chase{q}{\Sigma}$ and
 $\chase{q'}{\Sigma}$ are homomorphically equivalent \cite{ChMe77}.
 Let us start by analyzing what $\chase{q}{\Sigma}$ and
 $\chase{q'}{\Sigma}$ are: \begin{enumerate} \item Notice that
 $\chase{q'}{\Sigma}$ extends $q'$ with the atom ${\rm
 sync}(x'',x'')$, where $x''$ is the second element of $q'$ (i.e., the
 successor of $x'$), plus all atoms of the form ${\rm sync}(y^*,z^*)$
 produced by the recursive applications of the second rule starting
 from ${\rm sync}(x'',x'')$.  Further, since $w_{i_1} \dots w_{i_m} =
 w'_{i_1} \dots w'_{i_m} = a_1 \dots a_t$, it must be the case that
 the atom ${\rm sync}(y',y')$ is generated in this process. Thus,
 there are elements $y_1$ and $y_2$ such that ${\rm sync}(y_1,y_2)$,
 $P_{w_{i_m}}(y_1,y')$ and $P_{w'_{i_m}}(y_2,y')$ hold. From the third
 rule
we conclude that
 that $\chase{q'}{\Sigma}$ contains the atoms in the following
sets. First: $$S_1 \ = \ \{P_\#(x',y'),P_\#(x',z'),P_\#(x',u'),P_*(y',v'),P_*(z',v'),
P_b(z',y'),P_b(u',z'),P_a(u',y'),P_b(y',u')\}.$$
Second:
$$S_2 \ = \ \{{\rm sync}(y',y'),{\rm sync}(z',z'),{\rm sync}(y',z'),{\rm sync}(z',y'),{\rm
sync}(y',u'),{\rm sync}(u',y'),{\rm sync}(z',u'),{\rm sync}(u',z')\}.$$

\item It can be checked that $q$ is {\em closed} under $\Sigma$, i.e.,
$q = \chase{q}{\Sigma}$.

\end{enumerate}

We show first that $\chase{q'}{\Sigma}$ can be homomorphically
mapped to $q = \chase{q}{\Sigma}$. But this is easy; we simply map
the variable $x'$ to $x$, the variable $v'$ to $v$,
and every consecutive node in between $x'$ and $v'$ in
$q'$ to the corresponding element in between $x$ and $v$ in $q$.

Let us show now that $q = \chase{q}{\Sigma}$ can be homomorphically
mapped to $\chase{q'}{\Sigma}$. We use the mapping that sends
$x,y,z,u,v$ to $x',y',z',u',v'$, respectively.
It is not hard to check that this mapping is a homomorphism using the
fact that $S_1,S_2 \subseteq \chase{q'}{\Sigma}$.

Now we prove that if there exists an acyclic CQ $q'$ such that $q
\equiv_\Sigma q'$, then there are indices
$1 \leq i_1,\dots,i_m \leq n$ such that $w_{i_1} \dots w_{i_m} =
w'_{i_1} \dots w'_{i_m}$. We start with a simpler case. We assume that
the restriction of $q'$ to $P_a$, $P_b$, $P_\#$, $P_\#$
and ${\rm sync}$ looks like this:
\begin{center}
\begin{picture}(0,0)%
\includegraphics{sc.pstex}%
\end{picture}%
\setlength{\unitlength}{2368sp}%
\begingroup\makeatletter\ifx\SetFigFont\undefined%
\gdef\SetFigFont#1#2#3#4#5{%
  \reset@font\fontsize{#1}{#2pt}%
  \fontfamily{#3}\fontseries{#4}\fontshape{#5}%
  \selectfont}%
\fi\endgroup%
\begin{picture}(4666,294)(2018,-1648)
\put(4501,-1561){\makebox(0,0)[lb]{\smash{{\SetFigFont{11}{13.2}{\familydefault}{\mddefault}{\updefault}{\color[rgb]{0,0,0}$\dots$}%
}}}}
\end{picture}%

\end{center}
That is, the underlying graph of this query corresponds to a directed
path.

Since $q
\equiv_\Sigma q'$, we can conclude from Lemma
\ref{lemma:cq-equiv-tgds} that $\chase{q}{\Sigma}
\equiv \chase{q'}{\Sigma}$, i.e.,  $\chase{q}{\Sigma}$ and
$\chase{q'}{\Sigma}$ are homomorphically equivalent. But then
$\chase{q'}{\Sigma}$ must contain at least one variable labeled ${\rm
  start}$ and one variable labeled ${\rm end}$. The first variable cannot have incoming edges (otherwise,
$\chase{q'}{\Sigma}$ would not homomorphically map to
$\chase{q}{\Sigma}$), while the second one cannot have outcoming edges
(for the same reason). This implies that it is the first variable $x'$ of
$q'$ that is labeled ${\rm start}$ and it is the last one $v'$ that is
labeled ${\rm end}$. Furthermore, all edges
reaching $v'$ in $q'$ must be labeled
$P_*$ (otherwise $q'$ does not homomorphically map to $q$).
Thus, this is the label of the last edge of $q'$ that goes
from variable $u'$ to $v'$. Analogously, the edge that leaves $x'$ in
$q'$ is labeled $P_\#$. Furthermore, any other edge in $q'$ must be
labeled $P_a$, $P_b$, or ${\rm sync}$.

Notice now that $v'$ must have an incoming edge labeled $P_*$ in
$\chase{q'}{\Sigma}$ from
some node $u''$ that has an outgoing edge with label $P_a$ (since $q$ homomorphically maps
to $\chase{q'}{\Sigma}$).
By the definition of $\Sigma$, this could
only have happened if there are elements $y_1$ and $y_2$ such that the
following atoms hold in $\chase{q'}{\Sigma}$, for some $1 \leq i \leq n$:
$$\{P_a(y',z'),P_a(z',u'),{\rm sync}(y_1,y_2),P_{w_i}(y_1,y'),P_{w'_i}(y_2,y')\},$$
where $y',z',u'$ are the immediate predecessors of $v'$ in the order
that is naturally induced by $q'$. In particular, the
unique path from $y_1$ (resp., $y_2$) to $y'$ in $q'$ is labeled $w_i$
(resp., $w'_i$). This means that the atom ${\rm sync}(y_1,y_2)$ was
not one of the edges of $q'$, but must have been produced during the
chase by another atom of this form, and so on.
This process can only finish in the second element $x^*$ of
$q'$ (notice that ${\rm sync}(x^*,x^*)$ belongs to $\chase{q'}{\Sigma}$
due to the first rule of $\Sigma$).
We conclude then that our PCP instance has a solution.

What now, if $q'$ contains parallel edges going from one node to
another (but in the same direction than before)? Notice that $q$
contains no parallel edges save for those in between the elements in
$\{y,z,u\}$. These parallel edges
are labeled with both $P_a$ (or $P_b$) and ${\rm
sync}$. Thus, parallel edges in $q'$ can only be of this form (since
$q'$ homomorphically maps to $q$). This implies that $q'$ can now
contain edges labeled ${\rm sync}$ (this was not the case before). On
the other hand, there can be at most one edge labeled in $\{P_a,P_b\}$
from one node to another in $q'$. This is crucial for our reduction to
hold.

We use the same idea than before. We know that
${\rm sync}(y_1,y_2)$, $P_{w_i}(y_1,y')$, and $P_{w'_i}(y_2,y')$ hold
in $\chase{q'}{\Sigma}$.
Thus, if we now
restrict $q'$ to relation symbols $P_a$ and $P_b$,
there is a unique path from $y^*$ (resp., $z^*$) to $u'$ in $q'$, and
such path is labeled $w_i$ (resp., $w'_i$). Now, the question is
whether ${\rm sync}(y_1,y_2)$ could have been part of $q'$ or was
produced by the chase. If the former was the case, we would have that
$y_1$ and $y_2$ are at distance one, and, therefore, that $|w_i| =
|w'_i| + 1$ (or viceversa). But this is not possible since we are assumming both
$w_i$ and $w'_i$ to be of even length. Thus, ${\rm sync}(y_1,y_2)$
needs to have been produced by the chase. Iterating this process takes
us again all the way back to the atom ${\rm sync}(x^*,x^*)$. We thus
conclude again that the PCP instance given by $w_1,\dots,w_n$ and
$w'_1,\dots,w'_n$ has a solution.

Let us suppose now that $q'$ contains parallel edges and some of these
edges also go in the opposite direction than the ones we have
now. This is problematic for our reduction since now words in this
path can be read in both directions. This is why we mentioned in the
beginning of the proof that our version of $q$ was only temporary, and
that we would have to change it later in order to make the proof
work. The restriction of $q$ to those relation symbols that
are neither $P_a$, $P_b$, nor ${\rm sync}$ will now look like this:
\begin{center}
\begin{picture}(0,0)%
\epsfig{file=cq2.pstex}%
\end{picture}%
\setlength{\unitlength}{2368sp}%
\begingroup\makeatletter\ifx\SetFigFont\undefined%
\gdef\SetFigFont#1#2#3#4#5{%
  \reset@font\fontsize{#1}{#2pt}%
  \fontfamily{#3}\fontseries{#4}\fontshape{#5}%
  \selectfont}%
\fi\endgroup%
\begin{picture}(4851,2839)(3976,-4769)
\put(5101,-3061){\makebox(0,0)[lb]{\smash{{\SetFigFont{7}{8.4}{\familydefault}{\mddefault}{\updefault}{\color[rgb]{0,0,0}$P_\#$}%
}}}}
\put(4726,-2761){\makebox(0,0)[lb]{\smash{{\SetFigFont{7}{8.4}{\familydefault}{\mddefault}{\updefault}{\color[rgb]{0,0,0}$P_\#$}%
}}}}
\put(4801,-3286){\makebox(0,0)[lb]{\smash{{\SetFigFont{7}{8.4}{\familydefault}{\mddefault}{\updefault}{\color[rgb]{0,0,0}$P_\#$}%
}}}}
\put(4726,-4036){\makebox(0,0)[lb]{\smash{{\SetFigFont{7}{8.4}{\familydefault}{\mddefault}{\updefault}{\color[rgb]{0,0,0}$P_\#$}%
}}}}
\put(7576,-2986){\makebox(0,0)[lb]{\smash{{\SetFigFont{7}{8.4}{\familydefault}{\mddefault}{\updefault}{\color[rgb]{0,0,0}$P_*$}%
}}}}
\put(7426,-3286){\makebox(0,0)[lb]{\smash{{\SetFigFont{7}{8.4}{\familydefault}{\mddefault}{\updefault}{\color[rgb]{0,0,0}$P_*$}%
}}}}
\put(7501,-3586){\makebox(0,0)[lb]{\smash{{\SetFigFont{7}{8.4}{\familydefault}{\mddefault}{\updefault}{\color[rgb]{0,0,0}$P_*$}%
}}}}
\put(3976,-3511){\makebox(0,0)[lb]{\smash{{\SetFigFont{7}{8.4}{\familydefault}{\mddefault}{\updefault}{\color[rgb]{0,0,0}${\rm start}$}%
}}}}
\put(8326,-3211){\makebox(0,0)[lb]{\smash{{\SetFigFont{7}{8.4}{\familydefault}{\mddefault}{\updefault}{\color[rgb]{0,0,0}${\rm end}$}%
}}}}
\put(8326,-3661){\makebox(0,0)[lb]{\smash{{\SetFigFont{7}{8.4}{\familydefault}{\mddefault}{\updefault}{\color[rgb]{0,0,0}${\rm success}$}%
}}}}
\put(4351,-3211){\makebox(0,0)[lb]{\smash{{\SetFigFont{7}{8.4}{\familydefault}{\mddefault}{\updefault}{\color[rgb]{0,0,0}$x$}%
}}}}
\put(5476,-2086){\makebox(0,0)[lb]{\smash{{\SetFigFont{7}{8.4}{\familydefault}{\mddefault}{\updefault}{\color[rgb]{0,0,0}$y$}%
}}}}
\put(7276,-3511){\makebox(0,0)[lb]{\smash{{\SetFigFont{7}{8.4}{\familydefault}{\mddefault}{\updefault}{\color[rgb]{0,0,0}$u$}%
}}}}
\put(8551,-3436){\makebox(0,0)[lb]{\smash{{\SetFigFont{7}{8.4}{\familydefault}{\mddefault}{\updefault}{\color[rgb]{0,0,0}$v$}%
}}}}
\put(5401,-4711){\makebox(0,0)[lb]{\smash{{\SetFigFont{7}{8.4}{\familydefault}{\mddefault}{\updefault}{\color[rgb]{0,0,0}$z$}%
}}}}
\put(7726,-3886){\makebox(0,0)[lb]{\smash{{\SetFigFont{7}{8.4}{\familydefault}{\mddefault}{\updefault}{\color[rgb]{0,0,0}$P_*$}%
}}}}
\put(7801,-2836){\makebox(0,0)[lb]{\smash{{\SetFigFont{7}{8.4}{\familydefault}{\mddefault}{\updefault}{\color[rgb]{0,0,0}$P_*$}%
}}}}
\put(6676,-4711){\makebox(0,0)[lb]{\smash{{\SetFigFont{7}{8.4}{\familydefault}{\mddefault}{\updefault}{\color[rgb]{0,0,0}$u_2$}%
}}}}
\put(6676,-2086){\makebox(0,0)[lb]{\smash{{\SetFigFont{7}{8.4}{\familydefault}{\mddefault}{\updefault}{\color[rgb]{0,0,0}$u_1$}%
}}}}
\put(4876,-3736){\makebox(0,0)[lb]{\smash{{\SetFigFont{7}{8.4}{\familydefault}{\mddefault}{\updefault}{\color[rgb]{0,0,0}$P_\#$}%
}}}}
\end{picture}%

\end{center}
The cycle $z,y,u_1,u,u_2,z$ is completely labeled in $P_a$, and the
cycle $z,u_1,u_2,y,u,z$ is completely labeled in $P_b$. As before, the
interpretation of ${\rm sync}$ corresponds to all pairs in
$\{z,y,u_1,u,u_2\}$. The main difference with our previous version of
$q$ is that now there are no nodes linked by both edges $P_a$ and
$P_b$ in opposite directions. This implies that $q'$ can only have
parallel edges labeled $P_a$ (or $P_b$) and ${\rm sync}$ (in any
possible direction). This is crucial for our reduction to work.

 Since $q$ is now more complicated, we will have to modify $\Sigma$ in
order to ensure that $q$ maps into $\chase{q'}{\Sigma}$. In
particular, the third rule of $\Sigma$ must now ensure that the
structure of $q$ is completely replicated among the first element of
$q'$ (where the first element $x$ of $q$ will be mapped), the last element
of $q'$ (where the last element $v$ of $q$ will be mapped), and the
four elements that immediately precede the last element of $q'$ (where
the inner cycle of $q$ will be mapped).
The proof then mimicks the one we presented
before.

Let us assume now that $q'$ also admits loops. Since $q'$
homomorphically maps to $q$, these loops can only be labeled ${\rm
sync}$. Is this dangerous for our {\em backchase} analysis? Not
really. If one of these loops is used as a starting point for a chase
sequence, it can only mean that the synchronization of the words
in the PCP instance occurs earlier than expected. In particular, there is still a solution
for the instance.

The cases when the query $q$ has branching or disconnected components
is not more difficult, since in any case we can carry out the previous
analysis over one of the branches of $q$. This finishes the proof.

\subsection*{Proof of Lemma~\ref{lem:from-cq-to-acq}}

We first establish an auxiliary technical lemma:

\begin{lemma}\label{lem:small-acyclic-instance}
Let $q({\bar x})$ be a CQ, $I$ an acyclic instance, and ${\bar c}$ a tuple of constants. If there exists a homomorphism $h$ such that $h(q({\bar c})) \subseteq I$, then there exists an acyclic instance $J \subseteq I$, where $h(q({\bar c})) \subseteq J$ and $|J| \leqslant 2 \cdot |q|$.
\end{lemma}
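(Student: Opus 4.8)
The plan is to read off a small acyclic witness directly from a join tree of $I$. First I would fix a join tree $T$ of $I$ with its labelling $\lambda$, and set $A := h(q(\bar c))$. Since $A$ is the homomorphic image of the atoms of $q$, it consists of at most $|q|$ atoms of $I$. For each atom $a \in A$ I would select one node $n_a$ of $T$ with $\lambda(n_a) = a$, and put $N := \{n_a \mid a \in A\}$, so that $|N| \leq |q|$ and $N$ already covers $A$.

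Next I would let $T_q$ be the minimal connected subtree of $T$ containing $N$ (the union of the $T$-paths between pairs of nodes of $N$); by minimality every leaf of $T_q$ lies in $N$. I would then pass to the tree $F$ on vertex set $V := N \cup B$, where $B$ is the set of branching nodes (degree $\geq 3$) of $T_q$, declaring $u,v \in V$ adjacent exactly when no other vertex of $V$ lies strictly between them on the $T_q$-path. Because $V$ contains all leaves and all branching nodes of $T_q$, this homeomorphic reduction $F$ is again a tree, and the $F$-path between two vertices of $V$ meets precisely the $V$-vertices lying on the corresponding $T_q$-path. Setting $J := \{\lambda(t) \mid t \in V\}$, I immediately obtain $J \subseteq I$ (all these are atoms of $I$) and $A \subseteq J$ (as $N \subseteq V$).

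For the size bound I would invoke the standard tree identity: in any tree the number of branching nodes is at most the number of leaves minus $2$. Since the leaves of $T_q$ all lie in $N$, we get $|B| \leq |N| - 2$, hence $|V| \leq |N| + (|N|-2) = 2|N| - 2 \leq 2\cdot|q|$, and therefore $|J| \leq |V| \leq 2\cdot|q|$ (the degenerate cases $|N| \leq 1$, where $T_q$ is a single node or empty, are handled separately and trivially).

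The main obstacle is verifying that $F$ is actually a \emph{join tree} for $J$, i.e.\ the running-intersection condition for every null $x$ occurring in $J$. Here I would argue as follows. Let $C_x := \{t \in T \mid x \in \lambda(t)\}$; the join-tree property of $T$ says $C_x$ is a connected subtree of $T$. Since $T_q$ is also a subtree of $T$, the intersection $C_x \cap T_q$ is connected in $T_q$: for $u,v \in C_x \cap T_q$ the unique $T$-path between them stays inside $C_x$ and inside $T_q$, hence inside $C_x \cap T_q$. Finally, for $u,v \in C_x \cap V$ the $T_q$-path between them lies in $C_x \cap T_q$, so every $V$-vertex it passes through again lies in $C_x \cap V$; as the $F$-path between $u$ and $v$ visits exactly those $V$-vertices, $C_x \cap V$ is connected in $F$. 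Constants carry no connectivity requirement, so this establishes the join-tree property, and with the containment and size facts above it yields the acyclic instance $J$ as required.
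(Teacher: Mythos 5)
Your proof is correct and follows essentially the same route as the paper's: fix a join tree of $I$, take the subtree spanning the nodes carrying the image of $q$, contract it to the image nodes plus the branch points, and check that the running-intersection property is inherited from the original join tree. The only (immaterial) differences are that the paper works with the rooted ancestor-closure and nodes with at least two children where you use the unrooted Steiner subtree and degree-$\geq 3$ vertices, and that you spell out the leaf/branch-node count that the paper dismisses as ``by construction.''
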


\begin{proof}
Assume that $q$ is of the form $\exists {\bar y} \phi({\bar x},{\bar y})$.
Since $I$ is acyclic, there exists a join tree $T
= ((V,E),\lambda)$ of $I$. We assume, w.l.o.g., that for distinct nodes
$v,u \in V$, $\lambda(v) \neq \lambda(u)$. Let $T_q = ((V_q,E_q),\lambda)$ be the finite subforest of $T$ consisting of the nodes $\{v \in V \mid \lambda(v) \in h(\phi({\bar c},{\bar y}))\}$ and their ancestors. Let $F = ((V',E'),\lambda')$ be the forest obtained from $T_q$ as follows:
\begin{itemize}\itemsep-\parsep
\item $V' = \{v \in V_q \mid v \textrm{~is~either~a~root~node~or~a~leaf~node}\} \cup A$, where $A$ are the inner nodes of $T_q$ with at least two children;

\item For every pair of nodes $(v,u) \in V' \times V'$, $(v,u) \in
E'$ iff $u$ is a descendant of $v$ in $T_q$, and the unique shortest
path from $v$ to $u$ in $T_q$ contains only nodes of $((V \setminus
V') \cup \{v,u\})$; and

\item Finally, $\lambda' = \{x \mapsto y \mid x \mapsto y \in \lambda \textrm{~and~} x \in V'\}$, i.e., $\lambda'$ is the restriction of $\lambda$ on $V'$.
\end{itemize}

We define $J$ as the instance $\{\lambda'(v) \mid v \in V'\} \subseteq I$. It is clear that $h(\varphi({\bar c},{\bar y})) \subseteq J$.
Moreover, by construction, $|V'| \leqslant 2 \cdot |q|$, which in turn implies that $|J| \leqslant 2 \cdot |q|$. It remains to show that $J$ is acyclic, or,
equivalently, that $F$ is a join tree of $J$.
Since, by construction, $\{\lambda'(v) \mid v \in V'\} = J$, it remains to show that, for each term $t$ in $J$, the set $\{v \in V' \mid t \textrm{~occurs~in~} \lambda'(v)\}$ induces a connected subtree in $F$.
Consider two distinct nodes $v,u \in V'$ such that, for some $t$ in $J$, $t$ occurs in $\lambda'(v)$ and $\lambda'(u)$. By construction of $F$, there exists a path $v,w_1,\ldots,w_n,u$ in $F$ such that the nodes $w_1,\ldots,w_n$
occur in the unique path from $v$ to $u$ in $T$. Since $T$ is a join tree, $t$ occurs in $\lambda'(w_i)$, for each $i \in \{1,\ldots,n\}$. Hence, $F$ is a join tree of $J$, as needed.
\end{proof}

Having the above lemma in place, we can now establish Lemma~\ref{lem:from-cq-to-acq}.
Assume that $q$ is of the form $\exists {\bar y} \phi({\bar x},{\bar y})$. By hypothesis, there exists a homomorphism $h$ such that $h(\phi({\bar c},{\bar y})) \subseteq I$. By Lemma~\ref{lem:small-acyclic-instance}, there exists an acyclic instance $J \subseteq I$, where $h(\phi({\bar c},{\bar y})) \subseteq J$ and $|J| \leqslant 2 \cdot |q|$. For notational
convenience, let ${\bar c} = (c_1,\ldots,c_k)$. We define
$q'$ as the CQ $\exists {\bar w} \psi({\bar z},{\bar w})$, where
$|{\bar z}| = |{\bar x}|$, ${\bar z} = (V_{c_1},\ldots,V_{c_k}) \in
\variables^{k}$, and $\psi({\bar z},{\bar w})$ is the conjunction of
atoms $\bigwedge_{p({\bar u}) \in J} \, \rho(p({\bar u}))$, with
$\rho$ be a renaming substitution that replaces each term $t$ occurring in $J$ with the variable $V_t$. Intuitively, $q'$ is
obtained by converting $J$ into a CQ. Since, by hypothesis, $J$ is
acyclic, also $q'$ is acyclic. Clearly,
$\rho(h(\phi({\bar x},{\bar y}))) \subseteq \psi({\bar z},{\bar w})$ and
$\rho(h({\bar z})) = {\bar z}$, which implies that $q' \subseteq q$.
Moreover, since $|J| \leqslant 2 \cdot |q|$, $|q'| \leqslant 2 \cdot |q|$. Finally, observe that $\rho^{-1}(\psi({\bar z},{\bar w})) = J \subseteq I$ and
$\rho^{-1}({\bar z})= {\bar c}$, and therefore $q'({\bar c})$ holds in $I$, and the claim follows.

\subsection*{Proof of Proposition~\ref{pro:guarded-tgds-apc}}

Consider an acyclic CQ $q$, and a set $\dep \in \guarded$. We need to show that $\chase{q}{\dep}$, that is, the result of an arbitrary chase sequence
\[
q = I_0 \xrightarrow{\tau_0,\bar c_0} I_1 \xrightarrow{\tau_1,\bar c_1} I_2 \dots,
\]
for $q$ under $\dep$, admits a join tree. This can be done via the {\em guarded chase forest} for $q$ and $\dep$, which is defined as the labeled forest $F = (V,E,\lambda)$, where
\begin{enumerate}
\item $|V| = |\chase{q}{\dep}|$;
\item For each $R({\bar t}) \in \chase{q}{\dep}$, there exists a node $v$ such that $\lambda(v) = R({\bar t})$; and

\item The edge $(v,u)$ belongs to $E$ iff there exists $i \geqslant 0$ such that $\lambda(v) \in I_i$, the guard of $\tau_i$ is satisfied by $\lambda(v)$, and $\lambda(u) \in I_{i+1} \setminus I_i$.
\end{enumerate}

We proceed to show that each connected component of $F$, which is a tree with its root labeled by an atom $\alpha$ of $q$, is a join tree; we refer to this join tree by $T_{\alpha}$.
Fix an arbitrary atom $\alpha$ of $q$. We need to show that for each term $t$ occurring in $T_\alpha$, the set $\{v \in V \mid t \textrm{~occurs~in~} \lambda(v)\}$ induces a connected subtree in the guarded chase forest for $q$ and $\dep$.
Towards a contradiction, assume that the latter does not hold. This implies that there exists a path $v w_1 \ldots w_n u$ in the guarded chase forest for $q$ and $\dep$, where $n \geqslant 1$, and a term $t$ that occurs in $\lambda(v)$ and $\lambda(u)$, and $t$ does not occur in $\lambda(w_i)$, for each $i \in \{1,\ldots,n\}$. Assume that $\lambda(u)$ was generated during the $i$-th application of the chase step, i.e., $\lambda(u) \in I_{i+1} \setminus I_i$. Since $t$ does not occur in $\lambda(w_n)$, we conclude that $\sigma_i$ is not guarded. But this contradicts our hypothesis that $\dep \in \guarded$, and thus $T_\alpha$ is a join tree.

Since $q$ is acyclic it admits a join tree $T_q$. Let $T$ be the tree obtained by attaching $T_{\alpha}$ to the node of $T_q$ labeled by $\alpha$. Clearly, $T$ is a join tree for $\chase{q}{\dep}$, and the claim follows.

\subsection*{Guarded Tgds are not UCQ Rewritable}

Consider the guarded tgd
\[
\tau\ =\ P(x,y),S(x) \ra S(y)
\]
and the two Boolean CQs
\[
q\ =\ S(a) \wedge \phi_P \qquad q'\ =\ S(b),
\]
where $a,b$ are constants, and $\phi_P$ is a conjunction of atoms of the form $P(x,y)$, where $x,y$ are constants.
Assume there is a UCQ $Q$ such that $q \subseteq_{\{\tau\}} q'$ iff $Q(D_q) \neq \emptyset$, where $D_q$ consists of all the atoms in $q$.
This means that $Q$ is able to check for the existence of an unbounded sequence of atoms $P(a,c_1),P(c_1,c_2),\ldots,P(c_{n-1},b)$ in $D_q$. However, this is not possible via a finite (non-recursive) UCQ, which implies that $\guarded$ is not UCQ rewritable.

\subsection*{Proof of Proposition~\ref{pro:apc-kds-unary-binary-predicates}}

Consider an acyclic CQ $q$ over unary and binary predicates. It suffices to show that, after applying a key dependency $\epsilon$ of the form $R(x,y),R(x,z) \ra y=z$ on $q$, the obtained query $q_{\epsilon}$ is still acyclic.
Let $T_q$ be the join tree of $q$. Assume that $\epsilon$ is triggered due to the homomorphism $h$, i.e., $h$ maps the body of $\epsilon$ to $q$ and $h(y) \neq h(z)$. Without loss of generality, assume that the atom $h(R(x,y)$ is an ancestor of $h(R(x,z))$ in $T_q$. Let $\alpha$ be the first atom on the (directed) path from $h(R(x,y))$ to $h(R(x,z))$ in $T_q$ that contains both $h(x)$ and $h(z)$. Since we have only unary and binary predicates, we can safely conclude that all the atoms in $T_q$ that contain the term $h(z)$ belong to the subtree $T_\alpha$ of $T_q$ that is rooted on $\alpha$.
Therefore, if we delete the subtree $T_\alpha$ from $T_q$ and attach it on the atom $h(R(x,y)$, and then replace every occurrence of $h(z)$ with $h(y)$, we obtain a tree which is actually a join tree for $q_\epsilon$. This implies that $q_\epsilon$ is acyclic, and the claim follows.

\subsection*{Proof of Theorem \ref{theo:eval-guarded}}

We start by recalling the {\em existential 1-cover game} from
\cite{ChenD05}. This game is played by the {\em spoiler} and the {\em
  duplicator} on two pairs $(I,\bar t)$ and $(I',\bar t')$, where $I$
and $I'$ are instances and $\bar t$ and $\bar t'$ are two equally long
tuples of elements in $I$ and $I'$, respectively.
The game proceeds in
rounds. At each round either:
\begin{enumerate}
\item
The spoiler places a
pebble on an element $a$ of $I$, and the duplicator responds by
placing its corresponding pebble on an element $f(a)$ of $I'$, or
\item the spoiler removes one pebble from $I$,
and the duplicator responds by removing the
corresponding pebble from $I'$.
\end{enumerate}
The spoiler is constrained in the following way: (1) At any round $k$ of the
game, if $a_1,\dots,a_l$ ($l \leq k$)
are the elements covered by the pebbles of the spoiler in $I$, then
there must be an atom of $D$ that contains all such elements (this
explains why the game is called {\em 1-cover}, as there is always a single
atom that covers all elements which are pebbled), and (2)
if the spoiler places a pebble on the $i$-th component $t_i$ of $\bar
t$, then the duplicator must respond by placing the corresponding
pebble on the $i$-th component $t'_i$ of $\bar t'$.
The duplicator wins the game if he can always ensure that $f$ is a
{\em partial homomorphism} from $(a_1,\dots,a_l)$ in $I$ to
$I'$.\footnote{That is,
$f$ is a
homomorphism from $I(a_1,\dots,a_l)$ to $I'$, where $I(a_1,\dots,a_l)$
is the restriction of $I$
to those atoms $T(\bar a)$ such that $\bar a$ only mentions elements
in $\{a_1,\dots,a_l\}$.}
In such case we write $(I,\bar t)
\equiv_{\exists1c} (I',\bar t')$.

The following useful characterization of $(I,\bar t)
\equiv_{\exists1c} (I',\bar t')$ can be obtained from results in
\cite{ChenD05}:

\begin{lemma} \label{lemma:games} It is the case that $(I,\bar t) \equiv_{\exists1c}
  (I',\bar t')$ if and only if there is a mapping $\H$ that associates
  with each atom $T(\bar a)$ in $I$ a nonempty set $\H(T(\bar a))$ of
  atoms of the form $T(f(\bar a))$ in $I'$ and satisfies the following:
\begin{enumerate}
\item If the $i$-th component $a_i$ of $\bar a$ corresponds to the $j$-th
  component $t_j$ of $\bar t$, then for each tuple $T(f(\bar a)) \in
  \H(T(\bar a))$ it is the case that the $i$-th component of $f(\bar a)$
  corresponds to the $j$-th component $t'_j$ of $\bar t'$.
\item Consider an arbitrary atom $T(f(\bar a)) \in
  \H(T(\bar a))$. Then for each atom $S(\bar b)$ in $I$ there exists
  an atom $S(f'(\bar b))\in
  \H(S(\bar b))$ such that $f(c) = f'(c)$ for each element $c$ that
  appears both in $\bar a$ and $\bar b$.
\end{enumerate}
It follows, in particular, that for each tuple $T(f(\bar a)) \in
  \H(T(\bar a))$ the mapping $f$ is a partial homomorphism from $\bar
  a$ in $I$ to $I'$.
\end{lemma}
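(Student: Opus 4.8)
The plan is to prove the two directions of the equivalence separately, and to obtain the closing ``in particular'' clause as a consequence of condition (2), which I would establish first: given $T(f(\bar a)) \in \H(T(\bar a))$ and any atom $S(\bar b)$ of $I$ all of whose elements occur in $\bar a$, condition (2) yields $S(f'(\bar b)) \in \H(S(\bar b))$, which is an atom of $I'$, with $f'$ agreeing with $f$ on every element common to $\bar a$ and $\bar b$ (here, on all of $\bar b$); hence $S(f(\bar b)) = S(f'(\bar b)) \in I'$. Thus $f$ sends every atom of $I$ lying inside $\bar a$ into $I'$, i.e.\ $f$ is a partial homomorphism from $\bar a$ to $I'$, and so is any of its restrictions.

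For the direction from $\H$ to a duplicator win ($\Leftarrow$), I would have the duplicator maintain the invariant that, whenever the currently pebbled elements are covered by an atom $T(\bar a)$, the current partial map equals the restriction to the pebbled elements of some $f$ with $T(f(\bar a)) \in \H(T(\bar a))$. Nonemptiness of the sets $\H(\cdot)$ lets the duplicator answer the first pebble, and condition (1) forces distinguished elements to be answered by the matching components of $\bar t'$. When the spoiler places a pebble so that the covering atom becomes $S(\bar b)$, every previously pebbled element occurs in both $\bar a$ and $\bar b$; condition (2) then supplies $S(f'(\bar b)) \in \H(S(\bar b))$ with $f' = f$ on the common elements, so the duplicator may switch to $f'$ without moving any pebble already on the board and answer the fresh pebble by $f'$. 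The map so maintained is always a partial homomorphism by the clause established above, and pebble removals preserve the invariant trivially.

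For the converse ($\Rightarrow$) I would fix a winning strategy of the duplicator and define $\H(T(\bar a))$ to be the set of atoms $T(f(\bar a))$ arising in some play consistent with the strategy in which the spoiler has exactly the elements of $\bar a$ pebbled; this is a legal $1$-cover position because those elements lie in the atom $T(\bar a)$, and the set is nonempty since the spoiler can always reach such a position and the strategy must respond. Condition (1) is immediate from the game rule linking distinguished pebbles to distinguished answers. For condition (2), starting from a position realizing $T(f(\bar a))$ and given any atom $S(\bar b)$, I would let the spoiler first remove the pebbles on the elements occurring in $\bar a$ but not in $\bar b$ (leaving a subset of $\bar a$, hence a legal position) and then place pebbles on the elements of $\bar b$ not in $\bar a$ (leaving a subset of $\bar b$, hence legal); the resulting winning position yields $S(f'(\bar b)) \in \H(S(\bar b))$, and $f'$ agrees with $f$ on the common elements.

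The step I expect to be the main obstacle is precisely condition (2) in this converse direction, where I must argue that the duplicator's answers on the shared elements survive the spoiler's transition between the two covering atoms. The crux is the structural observation that in the existential $1$-cover game a pebble is never relocated -- the only moves are placement and removal -- so the images of the elements that stay pebbled throughout the removal-then-placement transition cannot change. Making this precise, and checking that every intermediate configuration still meets the single-covering-atom constraint, is the technical heart; everything else reduces to bookkeeping about restrictions of partial homomorphisms, and the whole argument can be read off the analysis of the game in~\cite{ChenD05}.
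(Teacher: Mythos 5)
Your argument is correct, but it cannot be compared against a proof in the paper for the simple reason that the paper gives none: Lemma~\ref{lemma:games} is stated with the remark that it ``can be obtained from results in \cite{ChenD05}'' and is then used as a black box in the proof of Theorem~\ref{theo:eval-guarded}. What you supply is the standard self-contained reconstruction of that result, namely the characterization of a duplicator winning strategy as a nonempty family of partial homomorphisms indexed by the atoms of $I$ and closed under a forth condition. All three pieces of your plan check out. The ``in particular'' clause does follow from condition~(2) alone, since an atom $S(\bar b)$ of $I$ whose elements all lie in $\bar a$ has $\bar a \cap \bar b = \bar b$, so the $f'$ provided by~(2) coincides with $f$ on all of $\bar b$ and $S(f(\bar b)) \in I'$. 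In the direction from $\H$ to a strategy, your invariant is maintained because after a placement the previously pebbled elements lie in both the old covering atom $\bar a$ and the new covering atom $\bar b$, which is exactly the overlap on which condition~(2) guarantees agreement. In the converse direction you correctly isolate the crux: since the only moves are placement and removal, the spoiler can pass from the position ``exactly $\bar a$ pebbled'' to ``exactly $\bar b$ pebbled'' by first removing the pebbles on $\bar a \setminus \bar b$ (every intermediate set is covered by $T(\bar a)$) and then placing pebbles on $\bar b \setminus \bar a$ (every intermediate set is covered by $S(\bar b)$), and the images of the surviving pebbles on $\bar a \cap \bar b$ cannot change. Nonemptiness of each $\H(T(\bar a))$ follows by pebbling the elements of $\bar a$ one at a time, and membership of $T(f(\bar a))$ in $I'$ follows from the partial-homomorphism winning condition. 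In short, your proposal fills a gap the paper deliberately leaves to a citation, and it does so correctly.
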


When such an $\H$ exists we call it a {\em winning strategy for the duplicator
  in the game on $(I,\bar t)$ and $(I',\bar t')$}. The existence of
a winning strategy for the duplicator can be decided in polynomial time over finite instances:

\begin{proposition} \label{prop:games-pol}
There exists a polinomial time algorithm that decides whether $(I,\bar
t) \equiv_{\exists1c} (I',\bar t')$, given finite instances $I$ and
$I'$ and tuples $\bar t$ and $\bar t'$ of elements in $I$ and $I'$,
respectively.
\end{proposition}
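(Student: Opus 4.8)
The plan is to reduce the problem, via Lemma~\ref{lemma:games}, to deciding the existence of a mapping $\H$ satisfying its conditions~1 and~2, and then to compute the \emph{largest} such mapping by a greatest-fixpoint (local-consistency) procedure. First I would set up the candidate space: for each atom $T(\bar a)$ of $I$, let the initial set $\G_0(T(\bar a))$ consist of all atoms $T(\bar b)$ of $I'$ over the same relation symbol such that the induced partial map $f$ with $f(\bar a) = \bar b$ is well defined (equal components of $\bar a$ are sent to equal components of $\bar b$) and satisfies condition~1 (whenever $a_i = t_j$ we have $f(a_i) = t'_j$). This initial filtering is feasible in polynomial time, and since $|\G_0(T(\bar a))| \leq |I'|$, the total number of candidates is at most $|I| \cdot |I'|$.

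The core of the argument is that the notion of a \emph{consistent} mapping --- one satisfying conditions~1 and~2 but allowing empty sets --- is closed under pointwise union. Condition~1 is a per-candidate requirement and is trivially preserved, while for condition~2 a candidate lying in $\G_1(T(\bar a)) \cup \G_2(T(\bar a))$, say coming from $\G_1$, already finds the partner required by condition~2 inside $\G_1(S(\bar b))$, hence inside the union $\G_1(S(\bar b)) \cup \G_2(S(\bar b))$. Since consistency is preserved under union and is capped by the maximal mapping $\G_0$, there is a unique greatest consistent mapping $\G^{*}$. I would obtain it as the limit of iteratively deleting, from each $\G(T(\bar a))$, every candidate $T(f(\bar a))$ that violates condition~2 with respect to the current sets (i.e.\ some atom $S(\bar b)$ of $I$ admits no compatible partner in the current $\G(S(\bar b))$). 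Each such deletion is sound: a candidate failing condition~2 against an over-approximation of $\G^{*}$ fails it against $\G^{*}$ itself, and monotonicity of the pruning operator guarantees convergence to exactly $\G^{*}$.

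Correctness of the decision then follows cleanly: a winning strategy $\H$ (nonempty everywhere) exists if and only if $\G^{*}(T(\bar a)) \neq \emptyset$ for every atom $T(\bar a)$ of $I$. In the forward direction, any winning strategy is consistent, hence contained pointwise in $\G^{*}$, which forces $\G^{*}$ to be nonempty everywhere; conversely, if $\G^{*}$ is nonempty on every atom then $\G^{*}$ itself is a winning strategy by Lemma~\ref{lemma:games}. For the running time, each pruning pass scans every candidate and, for each, every atom $S(\bar b)$ of $I$ together with its candidate partners in $I'$, which is polynomial in $|I|$ and $|I'|$; since every pass except the last removes at least one of the at most $|I| \cdot |I'|$ candidates, the number of passes is polynomially bounded, and the whole procedure runs in polynomial time.

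The main obstacle I anticipate is precisely the closure-under-union property underpinning the existence and correctness of the greatest fixpoint: one must verify carefully that greedily deleting a locally inconsistent candidate never destroys a genuinely valid strategy, so that the pruning computes exactly $\G^{*}$ rather than some strictly smaller (possibly empty) mapping. Once this monotonicity and soundness step is secured, the final nonemptiness test and the polynomial bound are routine.
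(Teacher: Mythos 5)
Your proposal is correct. Note that the paper does not actually prove Proposition~\ref{prop:games-pol}: it states it as a consequence of results in \cite{ChenD05}, alongside the characterization of Lemma~\ref{lemma:games}. What you have written is therefore a self-contained derivation of the proposition from Lemma~\ref{lemma:games}, and it is the standard one for existential cover/pebble games: a greatest-fixpoint (local-consistency) computation over the candidate sets. The two points that carry the argument are exactly the ones you flag and verify: (i) mappings satisfying conditions~1 and~2 (with empty sets allowed) are closed under pointwise union, since condition~1 is per-candidate and a witness for condition~2 inside $\G_1(S(\bar b))$ survives in $\G_1(S(\bar b)) \cup \G_2(S(\bar b))$, so a unique greatest consistent mapping $\G^{*} \subseteq \G_0$ exists; and (ii) pruning is sound because a candidate with no compatible partner in the current over-approximation $\G(S(\bar b)) \supseteq \G^{*}(S(\bar b))$ a fortiori has none in $\G^{*}(S(\bar b))$, hence cannot lie in $\G^{*}$. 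Together with the observation that any winning strategy is a consistent mapping contained in $\G_0$ (so $\G^{*}$ is nonempty everywhere when one exists) and that a nonempty-everywhere $\G^{*}$ is itself a winning strategy by Lemma~\ref{lemma:games}, the correctness of the nonemptiness test follows, and the bound of at most $|I| \cdot |I'|$ candidates gives polynomially many polynomial-time pruning passes. This buys the reader an explicit algorithm where the paper only has a citation; the only cosmetic caveat is that your initialization should also enforce whatever identity-on-constants requirement the notion of partial homomorphism carries in this setting, which is a per-candidate filter handled in the same way as condition~1.
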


The following important fact can also be established from results in
\cite{ChenD05}:

\begin{proposition}  \label{prop:games-acyclic}
If $(I,\bar t) \equiv_{\exists1c} (I',\bar t')$, then for every
acyclic CQ $q$ it is the case that $\bar t' \in q(I')$ whenever
$\bar t \in q(I)$.
\end{proposition}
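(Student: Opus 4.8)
The plan is to convert a homomorphism witnessing $\bar t \in q(I)$ into one witnessing $\bar t' \in q(I')$, using the winning strategy $\H$ whose existence is granted by Lemma~\ref{lemma:games}, and exploiting the acyclicity of $q$ to glue the local choices prescribed by $\H$ into a single global homomorphism. First I would fix a homomorphism $g$ from $q$ to $I$ with $g(\bar x) = \bar t$ (obtained from $\bar t \in q(I)$), where $\bar x$ are the free variables of $q$, and a join tree $T_q$ of the acyclic query $q$, rooted at an arbitrary atom. I would also fix a winning strategy $\H$ for the duplicator on $(I,\bar t)$ and $(I',\bar t')$, so that $\H$ assigns to each atom $g(\alpha)$ of $I$ a nonempty set of atoms of $I'$ satisfying conditions (1) and (2) of Lemma~\ref{lemma:games}.

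Next I would traverse $T_q$ top-down and select, for every atom $\alpha = R(\bar v)$ of $q$, an atom of $\H(g(\alpha)) \subseteq I'$; such an atom has the form $R(f_\alpha(g(\bar v)))$ and thus determines a map $f_\alpha$ on the elements of the image atom $g(\alpha)$. For the root I pick any atom of $\H(g(\alpha))$, which is nonempty by Lemma~\ref{lemma:games}. For a non-root atom $\alpha$ with already-processed parent $\alpha_p$, I apply condition (2) of Lemma~\ref{lemma:games} to the atom already selected for $g(\alpha_p)$ and to the atom $g(\alpha)$, obtaining a selected atom of $\H(g(\alpha))$ whose map $f_\alpha$ agrees with $f_{\alpha_p}$ on every element shared by $g(\alpha_p)$ and $g(\alpha)$.

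I would then define a candidate homomorphism $h \colon q \to I'$ by $h(v) = f_\alpha(g(v))$, where $\alpha$ is any atom of $q$ containing the variable $v$. The crucial point, and the place where acyclicity is indispensable, is that $h$ is \emph{well-defined}: if $v$ occurs in two atoms $\alpha$ and $\beta$, then by the defining property of a join tree $v$ occurs in every atom along the unique path from $\alpha$ to $\beta$ in $T_q$, so it suffices to check agreement across a single tree edge $(\alpha_p,\alpha)$ with $v$ occurring in both. In that case $g(v)$ is an element occurring in both $g(\alpha_p)$ and $g(\alpha)$, and condition (2) of Lemma~\ref{lemma:games} gives $f_\alpha(g(v)) = f_{\alpha_p}(g(v))$; edge-consistency propagates along the path to global well-definedness.

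Finally I would verify the two remaining claims. For each atom $\alpha = R(\bar v)$ of $q$, the selected atom is exactly $R(f_\alpha(g(\bar v))) = R(h(\bar v))$, which lies in $I'$, so $h$ is a homomorphism from $q$ to $I'$. Moreover, for each free variable $x_j$ we have $g(x_j) = t_j$; picking any atom $\alpha$ containing $x_j$ and applying condition (1) of Lemma~\ref{lemma:games} to the selected atom of $\H(g(\alpha))$ yields $f_\alpha(t_j) = t'_j$, hence $h(x_j) = t'_j$. Thus $h(\bar x) = \bar t'$ and $\bar t' \in q(I')$, as required. I expect the main obstacle to be precisely the gluing step: condition (2) guarantees only \emph{pairwise} compatibility of the selected atoms, so in a cyclic query conflicting demands on a shared variable could arise; acyclicity removes this difficulty, since the join tree reduces every consistency requirement to the single-edge case already settled by condition (2).
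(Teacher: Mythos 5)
Your argument is correct. Note, however, that the paper does not actually supply a proof of Proposition~\ref{prop:games-acyclic}: it states it as a fact that ``can be established from results in \cite{ChenD05}'' and moves on, so there is no in-paper argument to compare against. What you have written is a self-contained derivation from the combinatorial characterization of $\equiv_{\exists1c}$ in Lemma~\ref{lemma:games}, and it is the natural one: root a join tree of $q$, fix a homomorphism $g$ witnessing $\bar t \in q(I)$, propagate a choice of atom from each nonempty set $\H(g(\alpha))$ down the tree using the pairwise-compatibility condition (2), and observe that the connectivity property of the join tree reduces global consistency of the resulting variable assignment to consistency across single tree edges, which condition (2) supplies; condition (1) then forces the free variables onto $\bar t'$. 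You also correctly isolate where acyclicity is indispensable: condition (2) only gives pairwise compatibility, which for a cyclic $q$ need not assemble into a single homomorphism. Two cosmetic remarks: the selection and the well-definedness argument are really indexed by the nodes of the join tree rather than by the atoms of $q$ (distinct nodes may carry the same atom label, and the connectivity condition is stated on nodes), and one should assume, as usual, that every free variable of $q$ occurs in some atom so that $h$ is defined on all of $\bar x$. Neither affects the substance; your proof supplies a step the paper delegates entirely to a citation, which is a gain in self-containedness at the cost of a page of routine verification.
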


This implies, in particular, that for every instance $I$, tuple $\bar t$ of elements in
$I$, and CQ $q(\bar x)$ that is semantically acyclic (in the absence
of constraints), it is the case that $\bar t \in q(I)$ if and only if
$(q,\bar x) \equiv_{\exists1c} (I,\bar t)$\footnote{Here we slightly
  abuse notation and write $q$ for the
database that contains all the atoms of $q$.}. Applying Proposition
\ref{prop:games-pol} we obtain
that the evaluation of semantically acyclic CQs (in the absence of
constraints) is a tractable problem.

Now, assume that $q(\bar x)$ is semantically acyclic under a set
$\Sigma$ of tgds. Then the following holds:

\begin{proposition} \label{prop:games-chase}
 For every instance $I$
that satisfies $\Sigma$ and tuple $\bar t$ of elements in $I$, we have
that $\bar t \in q(I)$ if and only if $(\chase{q}{\Sigma},\bar x)
\equiv_{\exists1c} (I,\bar t)$.
\end{proposition}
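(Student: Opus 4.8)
The plan is to prove the two directions of the biconditional separately. The forward direction, $\bar t \in q(I) \Rightarrow (\chase{q}{\Sigma},\bar x) \equiv_{\exists 1 c} (I,\bar t)$, will follow from the universal-model property of the chase and is the easy part; notably it does not use semantic acyclicity at all. The converse is where the hypothesis that $q$ is semantically acyclic under $\Sigma$ enters, and it will be driven by Proposition~\ref{prop:games-acyclic} applied to a fixed acyclic reformulation of $q$.

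For the forward direction, suppose $\bar t \in q(I)$, so that there is a homomorphism $h$ from $q$ to $I$ with $h(\bar x) = \bar t$. Viewing the atoms of $q$ as the frozen database $D_q$ (each variable replaced by $c(\cdot)$), this is a homomorphism $h : D_q \to I$ with $h(c(\bar x)) = \bar t$. Since $I \models \Sigma$ and $\chase{q}{\Sigma}$ is a universal model---any homomorphism from $D_q$ into a model of $\Sigma$ extends along the chase, which is exactly the property underlying Lemma~\ref{lemma:cq-equiv-tgds}---the map $h$ extends to a homomorphism $\hat h : \chase{q}{\Sigma} \to I$ with $\hat h(c(\bar x)) = \bar t$. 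A single total homomorphism immediately yields a winning strategy for the duplicator: using the characterization in Lemma~\ref{lemma:games}, define $\H$ by sending each atom $R(\bar a)$ of $\chase{q}{\Sigma}$ to the singleton $\{R(\hat h(\bar a))\}$. Nonemptiness is clear, the free-variable condition holds because $\hat h(c(\bar x)) = \bar t$, and the consistency condition of Lemma~\ref{lemma:games} is trivially met since all images arise from the one function $\hat h$. Hence $(\chase{q}{\Sigma},\bar x) \equiv_{\exists 1 c} (I,\bar t)$.

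For the converse, suppose $(\chase{q}{\Sigma},\bar x) \equiv_{\exists 1 c} (I,\bar t)$. By hypothesis $q$ is semantically acyclic under $\Sigma$, so fix an acyclic CQ $q'$ with $q \equiv_\Sigma q'$. From $q \subseteq_\Sigma q'$ and Lemma~\ref{lemma:cq-equiv-tgds} we obtain $c(\bar x) \in q'(\chase{q}{\Sigma})$. Now apply Proposition~\ref{prop:games-acyclic} to the configurations $(\chase{q}{\Sigma},c(\bar x))$ and $(I,\bar t)$ with the acyclic query $q'$: since $(\chase{q}{\Sigma},\bar x) \equiv_{\exists 1 c} (I,\bar t)$ and $c(\bar x) \in q'(\chase{q}{\Sigma})$, we conclude $\bar t \in q'(I)$. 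Finally, because $I \models \Sigma$ and $q \equiv_\Sigma q'$, we have $q'(I) = q(I)$, whence $\bar t \in q(I)$, as required.

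The points I expect to require care are bookkeeping rather than conceptual: I must treat the frozen constants $c(\bar x)$ consistently as the distinguished tuple of $\chase{q}{\Sigma}$, and I must invoke the chase's universal-model property in precisely the form ``a homomorphism from the initial instance into any model of $\Sigma$ extends to a homomorphism out of the chase,'' which introduces no new machinery. The genuinely load-bearing step is the converse, where semantic acyclicity is used only to supply a single acyclic witness $q'$ with $c(\bar x) \in q'(\chase{q}{\Sigma})$; Proposition~\ref{prop:games-acyclic} then transports this acyclic answer across the $\equiv_{\exists 1 c}$ relation from $\chase{q}{\Sigma}$ to $I$, and the equivalence $q \equiv_\Sigma q'$ closes the gap back to $q$.
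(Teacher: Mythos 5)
Your proposal is correct and follows essentially the same route as the paper's proof: the forward direction extends the homomorphism witnessing $\bar t \in q(I)$ along the chase (using $I \models \Sigma$) and lets the duplicator play according to it, and the converse combines Lemma~\ref{lemma:cq-equiv-tgds} (to get $c(\bar x) \in q'(\chase{q}{\Sigma})$ for an acyclic equivalent $q'$) with Proposition~\ref{prop:games-acyclic} and the equivalence $q \equiv_\Sigma q'$ over models of $\Sigma$. The only difference is that you spell out the winning strategy $\H$ as singleton sets, which the paper leaves implicit.
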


\begin{proof}
Assume first that $\bar t \in
q(I)$. Then there is a homomorphism $h$ from $q$ to $I$ such that
$h(\bar x) = \bar t$. But since $I \models \Sigma$, it is easy to see
that $h$ extends to a homomorphism $h'$ from $\chase{q}{\Sigma}$ to
$I$ such that $h'(\bar x) = \bar t$. This implies, in particular, that
$(\chase{q}{\Sigma},\bar x) \equiv_{\exists1c} (I,\bar t)$ since the
duplicator can simply respond by following the homomorphism $h'$.
Assume, on the other hand, that $(\chase{q}{\Sigma},\bar x)
\equiv_{\exists1c} (I,\bar t)$.  Then Proposition
\ref{prop:games-acyclic} implies that for every acyclic CQ $q'$
we have that $\bar t \in q'(I)$ whenever $\bar x \in
q'(\chase{q}{\Sigma})$. We know that $q$ is equivalent to some acyclic
CQ $q^*$ under $\Sigma$, which implies that $\bar x \in
q^*(\chase{q}{\Sigma})$ from Lemma \ref{lemma:cq-equiv-tgds}. We
conclude then that $\bar t \in q^*(I)$, and, thus, that $\bar t \in
q(I)$ (since $q \equiv_\Sigma q^*$ and $I \models \Sigma$).
\end{proof}

Thus, in order to prove that $\saeval(\guarded)$ can be solved in
polynomial time, we only need to prove that the problem of checking
whether $(\chase{q}{\Sigma},\bar x) \equiv_{\exists1c} (D,\bar t)$ can
be solved in polynomial time, given a CQ $q$, a
database (finite instance) $D$ that satisfies a set $\Sigma$ of guarded tgds, and a
tuple $\bar t$ of elements in $D$. This is done by proving that if $\Sigma$ is guarded, then $(\chase{q}{\Sigma},\bar x) \equiv_{\exists1c}
(D,\bar t)$ if and only if $(q,\bar x) \equiv_{\exists1c} (D,\bar
t)$. Since we know from Proposition \ref{prop:games-pol} that
deciding the existence of a winning strategy for the duplicator
in the existential 1-cover game is in polynomial time, we conclude
that the problem of checking $(\chase{q}{\Sigma},\bar x) \equiv_{\exists1c}
(D,\bar t)$ can be solved efficiently. Thus, it only remains to prove
the following:

\begin{lemma} \label{lemma:gcg}
Let $\Sigma$ be a finite set of guarded tgds and $q(\bar x)$ a
CQ. Then for every database $D$ that satisfies $\Sigma$ and tuple $\bar t$ of elements in $D$,
it is the case that $(\chase{q}{\Sigma},\bar x) \equiv_{\exists1c}
(D,\bar t)$ if and only if $(q,\bar x) \equiv_{\exists1c} (D,\bar
t)$.
\end{lemma}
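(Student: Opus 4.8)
The plan is to argue entirely through the combinatorial characterization of $\equiv_{\exists1c}$ given in Lemma~\ref{lemma:games}, i.e.\ in terms of the existence of a mapping $\H$ assigning to each atom of the left instance a nonempty set of atoms of the right instance and satisfying conditions (1)--(2) there. Since $q \subseteq \chase{q}{\Sigma}$, the two implications have very different character. For the $(\Rightarrow)$ direction I would simply \emph{restrict}: given a winning strategy $\H$ for the duplicator on $(\chase{q}{\Sigma},\bar x)$ and $(D,\bar t)$, the restriction $\H|_{q}$ to the atoms of $q$ is a winning strategy for $(q,\bar x)$ and $(D,\bar t)$, since conditions (1)--(2) are inherited verbatim (every atom of $q$ is an atom of $\chase{q}{\Sigma}$, and every image already lies in $D$). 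This direction uses nothing about guardedness.

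The real content is the $(\Leftarrow)$ direction, where guardedness is essential. Starting from a winning strategy $\H_0$ for $(q,\bar x)$ and $(D,\bar t)$, I would build a winning strategy for $(\chase{q}{\Sigma},\bar x)$ by induction along a chase sequence $q = I_0 \xrightarrow{\tau_0,\bar c_0} I_1 \xrightarrow{\tau_1,\bar c_1}\cdots$, maintaining a strategy $\H_i$ for $(I_i,\bar x)$ and $(D,\bar t)$ whose restriction to $I_{i-1}$ equals $\H_{i-1}$. Consider a step $I_i \xrightarrow{\tau_i,(\bar a,\bar b)} I_{i+1}$, where $\tau_i = \phi(\bar x_\tau,\bar y_\tau) \to \exists \bar z_\tau\, \psi(\bar x_\tau,\bar z_\tau)$ has a guard atom $G$ whose arguments include \emph{all} body variables; let $G(\bar g)\in I_i$ be the guard instance (so $\bar g$ enumerates $(\bar a,\bar b)$) and $\bar z'$ the fresh nulls in the new head atoms $\psi(\bar a,\bar z')$. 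The crux is that every guard image $G(f(\bar g))\in\H_i(G(\bar g))$ determines a homomorphism of the \emph{whole} body $\phi$ into $D$: each body atom $B(\bar c)$ has $\bar c\subseteq\bar g$ by guardedness, so condition (2) applied to the pair $(G(\bar g),B(\bar c))$ yields an image in $\H_i(B(\bar c))$ agreeing with $f$ on all of $\bar c$, hence equal to $B(f(\bar c))\in D$. Since $D\models\tau_i$, $f$ extends to some $f^{+}$ sending $\bar z'$ to witnesses with $\psi(f^{+}(\bar a),f^{+}(\bar z'))\subseteq D$. I would then set $\H_{i+1}(\psi_j(\bar a,\bar z')) = \{\,\psi_j(f^{+}(\bar a),f^{+}(\bar z')) : G(f(\bar g))\in\H_i(G(\bar g))\,\}$ for each head atom $\psi_j$, choosing \emph{one} extension $f^{+}$ per guard image $f$, and keep $\H_{i+1}=\H_i$ on old atoms.

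It then remains to verify that $\H_{i+1}$ is a winning strategy. Condition (1) holds because distinguished elements occur only among the frontier $\bar a\subseteq\bar g$ and are already handled by $\H_i$ on $G(\bar g)$. For condition (2), pairs of old atoms are covered by $\H_i$, and two new atoms built from the same guard image agree everywhere since they share the single extension $f^{+}$. The decisive case is a new atom $A$ against an older atom $B$: any element shared by $A$ and $B$ lies in $\bar a\subseteq\bar g$, so consistency of an image of $A$ (indexed by $f$) with an image of $B$ reduces, via condition (2) of $\H_i$ for the pair $(G(\bar g),B)$ in one direction and $(B,G(\bar g))$ in the other, to the consistency already guaranteed at the guard. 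Taking $\H=\bigcup_i\H_i$ gives a winning strategy on the possibly infinite $\chase{q}{\Sigma}$, since any atom and any pair of atoms occur together already in some finite $I_k$.

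The main obstacle I anticipate is exactly this verification of condition (2) for new-versus-old pairs. Everything hinges on the fact that for guarded tgds the guard image fixes a genuine homomorphism of the entire body into $D$ and \emph{mediates every shared element} between freshly generated and previously present atoms; this is what lets a strategy satisfying only a pairwise (two-atom) consistency condition be propagated across chase steps, and it is precisely the ingredient that fails for non-guarded tgds.
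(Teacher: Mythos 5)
Your proposal is correct and follows essentially the same route as the paper's proof: the forward direction by restricting the strategy (the paper dismisses it as immediate since $q \subseteq \chase{q}{\Sigma}$), and the backward direction by inductively extending the strategy along the chase sequence, using the fact that each image of the guard atom determines, via pairwise consistency and guardedness, a homomorphism of the entire tgd body into $D$, which $D \models \Sigma$ then extends to the head. The verification of condition (2) for new-versus-old atom pairs through the guard, in both directions, matches the paper's argument.
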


\begin{proof}
The implication from left to right is immediate since $q$ is
contained in $\chase{q}{\Sigma}$. Assume now that $(q,\bar x) \equiv_{\exists1c} (D,\bar
t)$. In virtue of Lemma \ref{lemma:games}, there is a winning strategy $\H$ for the duplicator in the game on
$(q,\bar x)$ and $(D,\bar t)$. We need to prove then that
there is a winning strategy $\H'$ for the duplicator in the game on
$(\chase{q}{\Sigma},\bar x)$ and $(D,\bar t)$. That is, that
there is a mapping $\H'$ that associates
  with each atom $T(\bar a)$ in $\chase{q}{\Sigma}$ a nonempty set $\H'(T(\bar a))$ of
  tuples of the form $T(f(\bar a))$ in $D$ and satisfies the following:
\begin{enumerate}
\item If the $i$-th component $a_i$ of $\bar a$ corresponds to the $j$-th
  component $x_j$ of $\bar x$, then for each tuple $T(f(\bar a)) \in
  \H'(T(\bar a))$ it is the case that the $i$-th component of $f(\bar a)$
  corresponds to the $j$-th component $t_j$ of $\bar t$.
\item Consider an arbitrary atom $T(f(\bar a)) \in
  \H'(T(\bar a))$. Then for each atom $S(\bar b)$ in $\chase{q}{\Sigma}$ there exists
  an atom $S(f'(\bar b))\in
  \H'(S(\bar b))$ such that $f(c) = f'(c)$ for each element $c$ that
  appears both in $\bar a$ and $\bar b$.
\end{enumerate}

Let us assume that $\chase{q}{\Sigma}$ is obtained by the following
sequence of chase steps:
\[
I_0 \xrightarrow{\tau_0,\bar c_0} I_1 \xrightarrow{\tau_1,\bar c_1} I_2 \dots
\]
We prove by induction that there are
mappings $(\H'_j)_{j \geq 0}$ such
that the following holds for each $j \geq 0$: (a) $\H'_j$ is a winning
strategy for the duplicator in the game on $(I_j,\bar x)$ and $(D,\bar
t)$, and (b) if $j > 0$ then $\H'_{j}(T(\bar a)) = \H'_{j-1}(T(\bar a))$ for each tuple
$T(\bar a) \in I_{j-1}$. This finishes the proof of Lemma \ref{lemma:gcg},
as it is clear then that $\H'$ can be defined as $\bigcup_{j \geq 0}
\H'_j$.

For the basis case $j = 0$ we have $I_0 = q$. Thus, we can define
$\H'_0$ to be $\H$. Let us consider then the inductive case $j + 1$,
for $j \geq 0$. By
inductive hypothesis, there is a winning strategy $\H'_j$ for the duplicator
in the game on $(I_j,\bar x)$ and $(D,\bar t)$.
Further, we have by definition that $I_{j+1}$ is the result of applying tgd
$\tau_j$ over $I_j$ with $\bar c_j$. Let us assume that $\tau_j$ is of
the form $\phi(\bar x) \rightarrow \exists \bar z \psi(\bar
y,\bar z)$, where $\bar y$ is a tuple of variables taken from $\bar x$. This
implies that $I_{j+1}$ extends $I_j$ with every tuple in $\psi(\bar
d_j,\bar z')$, where $\bar d_j$ is the restriction of
$\bar c_j$ to $\bar y$ and $\bar z'$ is a tuple that is obtained by replacing
each variable in $\bar z$ with a fresh null. We set
$\H'_{j+1}(T(\bar a)) := \H'_j(T(\bar a))$ for each atom $T(\bar a)$
in $I_j$. We explain next how to define $\H'_{j+1}$ over $I_{j+1}
\setminus I_j$.

Let $T(\bar a)$ be an atom in $I_{j+1} \setminus I_j$. This implies,
in particular, that $T(\bar a)$ belongs to $\psi(\bar d_j,\bar z')$.
Suppose that the guard of $\tau_j$ is $R(\bar x)$. Since $I_j \models
\phi(\bar c_j)$ we have that $R(\bar c_j)$ belongs to $I_j$, and
therefore that $\H'_j(R(\bar c_j))$ is well-defined and nonempty. Take
an arbitrary element $R(f(\bar c_j)) \in \H'_j(R(\bar c_j))$. Since
$\H'_j$ is a winning strategy for the duplicator in the game on $(I_j,\bar x)$ and
$(D,\bar t)$, we have from Lemma \ref{lemma:games} that $f$ is a partial homomorphism from $\bar
c_j$ in $I_j$ to $D$. This implies, in particular, that $D \models
\phi(f(\bar c_j))$ since $\tau_j$ is guarded.  But $D \models
\Sigma$, and therefore there is a tuple $g(\bar z')$ of elements in
$D$ such that $D \models \psi(f(\bar d_j),g(\bar z'))$. Let us define then
a mapping $h_f$ from $\bar a$ to $D$ such that for each $a$ in $\bar
a$ the value of $h_f(a)$ is defined as follows:
$$h_f(a) \ = \ \begin{cases}
f(a), \ \ \ & \text{if $a$ appears in $\bar d_j$,} \\
g(a), \ \ \ & \text{if $a$ appears in $\bar z'$.}
\end{cases}
$$
Notice, in particular, that $T(h_f(\bar a))$ belongs to $D$. We define
then $\H'_{j+1}(T(\bar a))$ as the set of all tuples in $D$ of the form
$T(h_f(\bar a))$, for $h$ a mapping such that $R(f(\bar c_j))$ belongs
to $\H'_j(R(\bar c_j))$. Clearly, $\H'_{j+1}(T(\bar a))$ is nonempty.

We prove next that $\H'_{j+1}$ satisfies the desired conditions:
(a) $\H'_{j+1}$ is a winning
strategy for the duplicator in the game on $(I_{j+1},\bar x)$ and $(D,\bar
t)$, and (b) $\H'_{j+1}(T(\bar a)) = \H'_j(T(\bar a))$ for each tuple
$T(\bar a) \in I_j$.  Condition (b) is satisfied by definition. We
concentrate on condition (a) now. Let us start with the first condition
in the definition of winning strategy. Take an arbitrary atom
$T(\bar a)$ in $I_{j+1}$ and assume that the $i$-th component
of $\bar a$
corresponds to the $j$-th component $x_j$ of $\bar x$. If $T(\bar a)$
also belongs to $I_j$, then we have by inductive hypothesis that for
each atom $T(f(\bar a)) \in \H'_{j+1}(T(\bar a))$ the $i$-th component
of $f(\bar a)$ corresponds to the $j$-th component $t_j$ of $\bar t$
(since $\H'_{j+1}(T(\bar a)) = \H'_j(T(\bar a))$ and $\H'_j$ is a
winning strategy for the duplicator in the game on $(I_{j},\bar x)$ and $(D,\bar
t)$). Let us assume, on the other hand, that $T(\bar a)$ belongs to $I_{j+1} \setminus
I_j$. In particular, $T(\bar a)$ belongs to $\psi(\bar d_j,\bar
z')$.
By definition of the chase, the fact that the $i$-th component $a$ of $\bar a$
corresponds
to the $j$-th component $x_j$ of $\bar x$ implies that $a$
belongs to $\bar d_j$ (as the elements in $\bar z'$ are fresh
nulls). Let us consider an arbitrary atom in $\H'_{j+1}(T(\bar
a))$. By definition, this atom is of the form $T(h_f(\bar a))$ for a
mapping $f$ such that $R(f(\bar c_j))$ belongs to $\H'_j(R(\bar
c_j))$. Since $a$ appears in $\bar d_j$ we have that $h_f(a) =
f(a)$. Further, by inductive hypothesis $f(a) = h_f(a)$ corresponds to $t_j$.

We now prove that the second condition in the definition of winning
strategy also holds for $\H'_{j+1}$. Let $T(\bar a)$ and $S(\bar b)$
be arbitrary atoms in $I_{j+1}$.
We prove that for each atom $T(h(\bar a)) \in \H'_{j+1}(T(\bar a))$
there is an atom $S(h'(\bar b)) \in \H'_{j+1}(S(\bar b))$ such that
$h$ and $h'$ coincide in all elements that are common to $\bar a$ and
$\bar b$.
We assume without loss of generality that
$\bar a$ and $\bar b$ have at least one element in common, otherwise
the property holds vacuously. We consider
two cases:

\begin{itemize}

\item $T(\bar a)$ and $S(\bar b)$ belong to $I_{j+1} \setminus I_j$.
This means that both
  $T(\bar a)$ and $S(\bar b)$ are atoms in $\psi(\bar d_j,\bar z')$
  that do not belong to $I_j$. Take an arbitrary atom in
  $\H'_{j+1}(T(\bar a))$. By definition, such atom is of the form
$T(h_f(\bar a))$ for a
mapping $f$ such that $R(f(\bar c_j))$ belongs to $\H'_j(R(\bar
c_j))$. But in the same way, then, $\H'_{j+1}(T(\bar a))$ must contain
an atom of the form $S(h_f(\bar b))$. This proves that the property
holds in this case.

\item Either $T(\bar a)$ or $S(\bar b)$ belongs to $I_j$. Suppose first
  that both $T(\bar a)$ and $S(\bar b)$ belong to $I_j$. Then
the property holds
  by inductive hypothesis (since $\H'_{j+1}(T(\bar a)) =
  \H'_{j}(T(\bar a))$, $\H'_{j+1}(S(\bar b)) = \H'_{j}(S(\bar b))$,
  and $\H'_j$ is a winning strategy for the duplicator in the game on
$(I_{j},\bar x)$ and $(D,\bar
t)$).

Let us assume without loss of generality then that $T(\bar a)
\in I_{j+1} \setminus I_j$ and $S(\bar b) \in I_j$. This means, in particular, that $T(\bar
a)$ belongs to $\psi(\bar d_j,\bar z')$ but not to $I_j$. Furthermore,
each element that is shared by $\bar a$ and $\bar b$ belongs to $\bar
d_j$ (as $\bar z'$ is a tuple of fresh nulls). Thus, each element
shared by $\bar a$ and $\bar b$ is also shared by $\bar c_j$ and
$\bar b$.
Consider first an
arbitrary atom in $\H'_{j+1}(T(\bar a))$. By definition, such atom is of the form
$T(h_f(\bar a))$ for a
mapping $f$ such that $R(f(\bar c_j))$ belongs to $\H'_j(R(\bar
c_j))$. But since $\H'_{j+1}(R(\bar c_j)) =
  \H'_{j}(R(\bar c_j))$, $\H'_{j+1}(S(\bar b)) = \H'_{j}(S(\bar b))$,
  and $\H'_j$ is a winning strategy for the duplicator in the game on
$(I_{j},\bar x)$ and $(D,\bar
t)$, we have that there is an atom $S(f'(\bar b)) \in \H'_{j+1}(S(\bar
b))$ such that $f$ and $f'$ coincide in the elements that are shared
by $\bar c_j$ and $\bar b$. This implies that $h_f$ and $f'$ coincide
in the elements that are shared by $\bar a$ and $\bar b$. Consider now
an atom $S(f'(\bar b)) \in \H'_{j+1}(S(\bar
b))$. Then there is an atom $R(f(\bar c_j)) \in \H'_{j+1}(R(\bar
c_j))$ such that $f$ and $f'$ coincide in all elements shared by $\bar
c_j$ and $\bar b$. Therefore, $h_f$ and $f'$ coincide in all elements
shared by $\bar a$ and $\bar b$. The property follows then since
$T(h_f(\bar a))$ belongs to $\H'_{j+1}(T(\bar a))$ by definition.
\end{itemize}

This concludes the proof of Lemma \ref{lemma:gcg}.
\end{proof}

\subsection*{Unions of Conjunctive Queries}

\begin{proposition}
Let $\Sigma$ be a finite set of tgds that belongs to a class that has acyclicity-preserving chase, and $Q$ a UCQ. If $Q$ is semantically acyclic under $\dep$, then, for each CQ $q \in Q$, (i) there exists an acyclic CQ $q'$, where $|q'| \leq 2 \cdot |q|$, such that $q \equiv_{\dep} q'$, or (ii) there exists $q'' \in Q$ such that $q \subseteq_{\dep} q''$.
\end{proposition}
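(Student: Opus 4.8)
The plan is to lift the chase-based argument behind Proposition~\ref{prop:apc-small-query-property} from CQs to UCQs by means of a ``round trip'' between the two unions. First I would record the UCQ analogue of Lemma~\ref{lemma:cq-equiv-tgds}: for a CQ $p(\bar x)$ and a UCQ $P(\bar x) = \bigvee_k p_k(\bar x)$ one has $p \subseteq_{\dep} P$ iff $c(\bar x) \in P(\chase{p}{\dep})$, equivalently iff $c(\bar x) \in p_k(\chase{p}{\dep})$ for some disjunct $p_k$. The direction $(\Leftarrow)$ is immediate from Lemma~\ref{lemma:cq-equiv-tgds} applied to $p$ and $p_k$, while $(\Rightarrow)$ follows from universality of the chase: $\chase{p}{\dep} \models \dep$ and $c(\bar x) \in p(\chase{p}{\dep})$ via the identity assignment, so $c(\bar x) \in P(\chase{p}{\dep})$. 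The crucial consequence is that a containment of a CQ in a UCQ always \emph{localizes} to a single disjunct, i.e.\ $p \subseteq_{\dep} P$ yields some $p_k$ with $p \subseteq_{\dep} p_k$; this is what lets me reduce everything to CQ-level containments.

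Since $Q$ is semantically acyclic under $\dep$, I fix a union $Q' = \bigvee_j q'_j$ of \emph{acyclic} CQs with $Q \equiv_{\dep} Q'$, and fix a disjunct $q$ of $Q$. From $q \subseteq_{\dep} Q \subseteq_{\dep} Q'$ and the localization above I obtain an acyclic disjunct $q'_j$ of $Q'$ with $q \subseteq_{\dep} q'_j$. Applying the same localization to $q'_j \subseteq_{\dep} Q' \subseteq_{\dep} Q$ produces a disjunct $q''$ of $Q$ with $q'_j \subseteq_{\dep} q''$. This yields the containment chain
\[
q \ \subseteq_{\dep} \ q'_j \ \subseteq_{\dep} \ q'',
\]
in which $q'_j$ is acyclic and both $q$ and $q''$ are disjuncts of $Q$.

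I would then split on whether $q'' \subseteq_{\dep} q$. If it does, the chain closes into $q \equiv_{\dep} q'_j$; since $q'_j$ is acyclic, $q$ is itself semantically acyclic under $\dep$, and Proposition~\ref{prop:apc-small-query-property} directly supplies an acyclic $q^*$ with $|q^*| \leq 2 \cdot |q|$ and $q \equiv_{\dep} q^*$, which is alternative (i). Otherwise $q'' \not\subseteq_{\dep} q$, which forces $q''$ to be a disjunct genuinely different from $q$ (as $q \subseteq_{\dep} q$ holds trivially), and then $q \subseteq_{\dep} q''$ witnesses the redundancy of $q$ in $Q$, which is alternative (ii). The point that needs care—and which I regard as the main obstacle—is the bookkeeping of this dichotomy: the acyclic disjunct $q'_j$ extracted from $Q'$ may be arbitrarily large, so it cannot itself serve as the witness for (i). The round trip through $Q$ is precisely what decides whether $q$ collapses back to an acyclic query (in which case Proposition~\ref{prop:apc-small-query-property} reintroduces the $2 \cdot |q|$ size bound) or is absorbed by another disjunct, and I must verify that in the latter case the absorbing disjunct is distinct from $q$ so that the redundancy claim is non-trivial.
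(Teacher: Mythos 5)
Your proof is correct and follows essentially the same route as the paper's: both arguments localize the containments $q \subseteq_{\dep} Q \subseteq_{\dep} Q'$ and $q'_j \subseteq_{\dep} Q' \subseteq_{\dep} Q$ to single disjuncts via the chase, and then split on whether the round trip returns to $q$ itself (in which case $q$ is semantically acyclic and Proposition~\ref{prop:apc-small-query-property} gives alternative (i)) or lands on a different disjunct (giving alternative (ii)). The paper merely phrases this as a proof by contradiction and leaves the disjunct-localization step implicit, whereas you state it directly and justify it; the substance is identical.
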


\begin{proof}
Assume there exists $q \in Q$ such that (i) and (ii) do not hold. We need to show that $Q$ is not semantically acyclic. Towards a contradiction, assume there exists an acyclic UCQ $Q_A$ such that $Q \equiv_{\dep} Q_A$. Since $Q \subseteq_{\dep} Q_A$, there exists $q_A \in Q_A$ such that $q \subseteq_{\dep} q_A$. Moreover, since $Q_A \subseteq_{\dep} Q$, there exists $\hat{q} \in Q$ such that $q_A \subseteq_{\dep} \hat{q}$. Observe that $q = \hat{q}$; otherwise, $q \subseteq_{\dep} \hat{q}$ which contradicts the fact that (ii) does not hold. Therefore, $q \equiv_{\dep} q_A$, which in turn implies that $q$ is semantically acyclic under $\dep$. By Proposition~\ref{prop:apc-small-query-property}, we conclude that there exists an acyclic CQ $q'$, where $|q'| \leq 2 \cdot |q|$, such that $q \equiv_{\dep} q'$. But this contradicts the fact that (i) does not hold, and the claim follows.
\end{proof}

A similar result can be shown for UCQ rewritable classes of tgds. Notice that for the following result we exploit Proposition~\ref{prop:ucq-rewritability-small-query-property} instead of Proposition~\ref{prop:apc-small-query-property}.

\begin{proposition}
Let $\class{C}$ be a UCQ rewritbale class, $\dep \in \class{C}$ a finite set of tgds, and $q$ a UCQ. If $Q$ is semantically acyclic under $\dep$, then, for each CQ $q \in Q$, (i) there exists an acyclic CQ $q'$, where $|q'| \leq 2 \cdot f_{\class{C}}(q,\dep)$, such that $q \equiv_{\dep} q'$, or (ii) there exists $q'' \in Q$ such that $q \subseteq_{\dep} q''$.
\end{proposition}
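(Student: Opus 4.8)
The plan is to mirror the proof of the immediately preceding proposition almost verbatim, the only change being that the concluding appeal to Proposition~\ref{prop:apc-small-query-property} is replaced by an appeal to Proposition~\ref{prop:ucq-rewritability-small-query-property}; this is precisely what turns the size bound from $2 \cdot |q|$ into $2 \cdot f_{\class{C}}(q,\dep)$. I would argue by contradiction. Suppose there is a disjunct $q \in Q$ for which neither (i) nor (ii) holds, and suppose toward a contradiction that $Q$ is nonetheless semantically acyclic under $\dep$, witnessed by an acyclic UCQ $Q_A$ with $Q \equiv_{\dep} Q_A$.

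First I would isolate a single acyclic disjunct that absorbs $q$. Since $q$ is a disjunct of $Q$ we have $q \subseteq_{\dep} Q \subseteq_{\dep} Q_A$. By the chase characterization (Lemma~\ref{lemma:cq-equiv-tgds}), $q \subseteq_{\dep} Q_A$ amounts to $c(\bar x) \in Q_A(\chase{q}{\dep})$, and since $Q_A$ is a finite union this membership is witnessed by one of its disjuncts; call it $q_A$, so that $q \subseteq_{\dep} q_A$. Symmetrically, from $Q_A \subseteq_{\dep} Q$ I obtain a disjunct $\hat{q} \in Q$ with $q_A \subseteq_{\dep} \hat{q}$. Composing yields $q \subseteq_{\dep} \hat{q}$; if $q \neq \hat{q}$ this contradicts the failure of (ii), so $q = \hat{q}$. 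Hence $q \subseteq_{\dep} q_A \subseteq_{\dep} q$, i.e.\ $q \equiv_{\dep} q_A$ with $q_A$ acyclic, so $q$ itself is semantically acyclic under $\dep$.

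At this point I would invoke the UCQ-rewritability small query property. Because $\class{C}$ is UCQ rewritable and $q$ is semantically acyclic under $\dep$, Proposition~\ref{prop:ucq-rewritability-small-query-property} produces an acyclic CQ $q'$ with $|q'| \leq 2 \cdot f_{\class{C}}(q,\dep)$ and $q \equiv_{\dep} q'$. This contradicts the assumed failure of (i), which completes the argument.

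The step I expect to require the most care is the reduction from containment of $q$ in the \emph{union} $Q_A$ to containment in a \emph{single} disjunct $q_A$ (and likewise the passage from $Q_A \subseteq_{\dep} Q$ to a single $\hat{q}$). This is not automatic for CQ-in-UCQ containment in general, but it follows cleanly here from Lemma~\ref{lemma:cq-equiv-tgds}: containment under $\dep$ is decided by evaluating the right-hand (U)CQ over $\chase{q}{\dep}$, and a tuple lies in the evaluation of a UCQ exactly when it lies in the evaluation of one of its disjuncts. Everything else is a transcription of the acyclicity-preserving-chase version, with the sole quantitative change being the size bound inherited from $f_{\class{C}}$.
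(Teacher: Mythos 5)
Your proof is correct and follows essentially the same route as the paper: the paper gives the argument explicitly only for the acyclicity-preserving-chase version and then states that the UCQ-rewritable case is obtained by substituting Proposition~\ref{prop:ucq-rewritability-small-query-property} for Proposition~\ref{prop:apc-small-query-property}, which is exactly your plan. Your extra remark justifying the passage from containment in the union $Q_A$ to containment in a single disjunct via Lemma~\ref{lemma:cq-equiv-tgds} is a welcome clarification of a step the paper leaves implicit.
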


By exploiting the above results, it is not difficult to show that the complexity of $\SA$ when we focus on UCQs under the various classes of sets of tgds considered in this work is the same as for CQs.

\end{document}